\pgfplotsset{compat=1.12}%
\newtheorem{theorem}{Theorem}[section]%
\newtheorem{assumption}{Assumption}%
\newtheorem{example}{Example}[section]%
\newtheorem{lemma}{Lemma}%
\newtheorem{proposition}{Proposition}[section]
\newtheorem{remark}{Remark}[section]%
\newtheorem{corollary}[proposition]{Corollary}%
\newcolumntype{C}{>{\centering\arraybackslash}X}%
\newcommand{\E}{\mathbb{E}}
\newcommand*{\indep}{%
  \mathbin{%
    \mathpalette{\@indep}{}%
  }%
}
\newcommand*{\nindep}{%
  \mathbin{%                   % The final symbol is a binary math operator
    \mathpalette{\@indep}{\not}% \mathpalette helps for the adaptation
                               % of the symbol to the different math styles.
  }%
}
\newcommand*{\@indep}[2]{%
  % #1: math style
  % #2: empty or \not
  \sbox0{$#1\perp\m@th$}%        box 0 contains \perp symbol
  \sbox2{$#1=$}%                 box 2 for the height of =
  \sbox4{$#1\vcenter{}$}%        box 4 for the height of the math axis
  \rlap{\copy0}%                 first \perp
  \dimen@=\dimexpr\ht2-\ht4-.2pt\relax
      % The equals symbol is centered around the math axis.
      % The following equations are used to calculate the
      % right shift of the second \perp:
      % [1] ht(equals) - ht(math_axis) = line_width + 0.5 gap
      % [2] right_shift(second_perp) = line_width + gap
      % The line width is approximated by the default line width of 0.4pt
  \kern\dimen@
  {#2}%
      % {\not} in case of \nindep;
      % the braces convert the relational symbol \not to an ordinary
      % math object without additional horizontal spacing.
  \kern\dimen@
  \copy0 %                       second \perp
} 
\newcommand{\bmi}{\boldsymbol{\mathrm{i}}}
\begin{document}
\begin{singlespace}
%+Title
\title{\textbf{Identification of Causal Effects with a Bunching Design}\thanks{We thank Stéphane Bonhomme, Guido Imbens and Elie Tamer for pushing us towards developing the ideas of an earlier paper in a new direction, which led to this paper. We also thank Luis Alvarez, David Card, Alfonso Flores-Lagunes, Stefan Hoderlein, Hugo Jales, Matthew Masten, Eric Mbakop, Whitney Newey, Joris Pinkse, Alexandre Poirier, Demian Pouzo, Karl Schurter, Chris Taber, Ot\'avio Techio, as well as seminar participants at several institutions and conferences for valuable help and feedback. The analysis and conclusions set forth here are those of the authors and do not indicate concurrence by other members of the research staff, the Board of Governors, or the Federal Reserve System.}}

\author{Carolina Caetano\\\textit{\normalsize University of Georgia}
\and Gregorio Caetano\\\textit{\normalsize University of Georgia}
\and Leonard Goff\\\textit{\normalsize University of Calgary}
\and Eric Nielsen\\\textit{\normalsize Federal Reserve Board}}

\date{}
\end{singlespace}

\maketitle

%-Title
%+Abstract

\vspace{-.5cm}
\begin{abstract}

\begin{singlespace} We show that causal effects can be identified when there is bunching in the distribution of a continuous treatment variable, without imposing any parametric assumptions. This yields a new nonparametric method for overcoming selection bias in the absence of instrumental variables, panel data, or other popular research designs for causal inference. The method leverages the change of variables theorem from integration theory, relating the selection bias to the ratio of the density of the treatment and the density of the part of the outcome that varies with confounders. At the bunching point, the treatment level is constant, so the variation in the outcomes is due entirely to unobservables, allowing us to identify the denominator. Our main result identifies the average causal response to the treatment among individuals who marginally select into the bunching point. We further show that under additional smoothness assumptions on the selection bias, treatment effects away from the bunching point may also be identified. We propose estimators based on standard software packages and apply the method to estimate the effect of maternal smoking during pregnancy on birth weight.
 
\end{singlespace}
\end{abstract}
%-Abstract
%\doublespacing
\onehalfspacing

\section{Introduction}

In this paper, we show that bunching in the distribution of a treatment variable can be used to identify causal effects. In particular, although the treatment may be endogenous, we do not rely on instrumental variables, regression discontinuity designs, panel data, functional form, or distributional assumptions. 

The setting is a standard causal model in which both the treatment and the outcome variables are observed. The treatment variable has a mass point and is continuously distributed near this bunching point. The outcome variable is continuously distributed near and at the bunching point. The example in our application is a useful benchmark:  the treatment is the number of cigarettes a woman smokes during pregnancy (with 81\% of the observations bunching at zero), and the outcome is the baby's birth weight.  \cite{caetano2015} showed strong evidence of endogeneity in this application. To identify treatment effects, we need to eliminate selection bias, the part of the outcome variation that is due to confounders. 

The key insight that makes identification possible is that the change-of-variables theorem from integration theory can be used to write the magnitude of selection bias as the ratio of the probability density of the treatment variable and the probability density of the part of the outcome that is due to confounders. As a consequence, we do not need to observe the values of those confounders; instead, it is sufficient to identify the distribution of the part of the outcome that varies with confounders. This is where bunching is useful: at the bunching point, the distribution of the outcome reflects only the variation of unobservables, since the treatment stays fixed. We use this to identify the selection bias. 

We identify the average per unit effect of marginally increasing treatment among those near the bunching point (equivalently, among the observations at the bunching point that are most similar to those near it). In our application, this quantity represents the expected rate of birth-weight loss if a woman who is currently not a smoker but is very similar to the women who smoke very little were to start smoking. Alternatively, minus this quantity can be interpreted as the expected rate of birth-weight gain if the women who currently smoke little were to quit smoking. 

The approach relies on four conditions. First, the treatment effects must be sufficiently smooth near the bunching point. In our application, this condition means that smoking is not so harmful that a marginal amount could (on average) cause a discrete birth-weight change. Second, selection must also be sufficiently smooth at the bunching point. In our application, this means that the mothers who smoke very little (``marginal smokers'') are comparable to the mothers who do not smoke, but are indifferent between not smoking and smoking a small positive amount (``marginal nonsmokers''). Third, the selection bias must maintain the same sign in a neighborhood near the bunching point. In our application, this means that if selection into smoking is negative among mothers who smoke little (i.e. smoking less is associated with higher untreated birth weights), then selecting into not smoking must be associated with even higher birth weights. 
Finally, the outcomes at the bunching point are determined not only by confounders, but also by idiosyncratic (i.e., unconfounded) variation, which we must then separate from the distribution of outcomes at the bunching point by deconvolution. By construction, the unconfounded variation is mean independent of the confounders, but the deconvolution step requires full independence at the bunching point (or alternatively the weaker subindependence condition in \cite{schennach2019convolution}).

If the function that characterizes selection is real analytic, then our results also allow for the identification of $\text{ATT}$s near the bunching point. If some bounds on the selection bias derivatives hold, then the $\text{ATT}$s can be identified further away from the bunching point. Specifically, we can identify the effect among those who take a given treatment value as compared to a counterfactual in which they take the bunching point value of treatment. In our application, it is thus possible to identify the birth-weight gains (or losses) if mothers who smoke a given amount were to quit smoking.  

We propose estimators that use well-known building blocks. We estimate expectations and derivatives near the bunching point with local linear estimators \citep{fangijbels1992} and boundary densities using the estimator of \cite{pinkse2023estimates}, both of which achieve interior rates of convergence at the boundary. The deconvolution step follows standard nonparametric methods, equivalent to a standard kernel density estimator using a special kernel. All building blocks can be implemented straightforwardly using off-the-shelf packaged software. 

In a supplementary appendix, we also explore how control variables may be used to study heterogeneous treatment effects as well as to weaken the identification assumptions, which are then required to hold only conditional on controls. In particular, this allows the sign of selection bias near the bunching point to differ across different groups of observations. We discuss estimation when controls are discrete, continuous, or a mixed vector of both. 

We apply our approach to data on smoking and birth weight from \cite{almond2005costs}. After correcting for selection bias, we find that the effect of the first daily cigarette is a small loss of about 8 grams in birth weight (less than one-third of an ounce). Smoking five cigarettes per day causes a loss of 1.4 ounces (compare this to the average weight of a full-term newborn, which is 122 ounces in our analysis sample). These estimates confirm and strengthen the qualitative point in \cite{almond2005costs} that smoking is not an important determinant of birth weight, but under different and weaker identification assumptions.

Bunching is a common phenomenon. It is often found at zero in variables naturally constrained to be non-negative, such as consumption goods,\footnote{E.g., number of tobacco products, alcoholic beverages, caffeinated drinks, sugary drinks, fast food meals, dining out meals, subscription services, supplements and vitamins, public transportation rides, books read, gym visits, doctor visits, trips, fuel usage amounts, expenditures on health, fitness, travel, vacations, education, childcare.} financial variables,\footnote{E.g., credit access, bequests, savings, emergency fund levels, retirement account contributions, mortgage balance, credit card debt, student loan debt, income from investments, expenditures on ads, charitable donations, HSA and FSA balances, life insurance coverage, number of trades.} time use,\footnote{Bunching is found for almost all time uses. Some examples include exercising, working, watching TV, using digital devices, doing homework, doing chores, volunteering, and commuting.} and neighborhood characteristics.\footnote{E.g., number of public transportation options or stops, retail stores, coffee shops, rental units, affordable housing units, vacant units, electric vehicle charging stations; length of biking lanes or walking paths; areas of green space, commercial districts, sports fields, parking lots.} Artificial constraints also can generate bunching, such as regulatory minimums\footnote {E.g., schooling time, wages, 401(k) contributions, coverage for auto insurance, nutritional standards for school meals, bank capital, bank deposit insurance, age started working, age started withdrawing from retirement accounts, age retired.} and maximums.\footnote{E.g., contribution size in 401(k), Roth IRA, HSA, FSA accounts, untaxed gifts, FHA loans, FDIC insurance, carbon emissions, liquor licenses, lot coverage, contributions to political campaigns, data usage, grades, absences from school, class size, commissions on sales.} Bunching also occurs at interior points, often due to kinks or notches in budget sets, social norms, and other restrictions.\footnote{E.g. income at tax brackets, hours worked at overtime rules, multiples of 5 or 10, 40 hours per week, car speeds at ticket thresholds, financial reporting around profit targets, energy consumption around utility billing tiers, pricing below psychological points (\$0.99), doctor visits at medical protocol numbers, hospital stay length at insurance payment thresholds.} Of course, small samples, coarse measurements, or attrition could make it impossible to implement this method in some of the examples above. 

The use of bunching phenomena for identification began with \cite{saez10}, followed by a large applied literature interested in identifying the effects of policies that lead to bunching in a manipulable variable at a policy threshold. Theoretical treatments of these approaches may be found in \cite{blomquist2021bunching}, \cite{bertanha2023better}, \cite{goff2020treatment} and \cite{lu2024identifying}. Our approach is more related to the literature initiated by \cite{caetano2015}, where bunching for any reason on the treatment variable of a reduced-form causal model allows the testing of the model's identification conditions (see \cite{caetano2016discontinuity}, \cite{caetano2018identifying}, \cite{CCFN_Dummy}, and \cite{khalil2022test}). \cite{ccn_metrics} developed the first strategy for identification of treatment effects under endogeneity in this setting, followed by \cite{CCNT}, \cite{CCT} and \cite{CMS_Placebo}.  Surveys of the bunching literature include \cite{kleven2016bunching,jalesyu,blomquist2023econometrics}, and \cite{bertanha2024bunching}. In this paper, we show for the first time that nonparametric identification with bunching is possible. 

Section \ref{sec:model} introduces our setting and presents a novel approach to identification of the average marginal treatment effect using the change-of-variables theorem from integration theory. We detail how the average marginal treatment effect at the bunching point can be identified in Section \ref{sec:idfrombunching}, and how treatment effects can then be identified away from the bunching point in Section \ref{sec:global}. We turn to estimation in Section \ref{sec:estimation}, and in Section \ref{sec:application} we present the application to the effects of smoking on birth weight. We conclude in Section \ref{sec:conclusion}. Appendices contain proofs, examples and generalizations referred to in the text. A supplementary appendix contains extensions for the use of control variables as well as further plots and proofs.

\section{Identification near the bunching point}\label{sec:model}

In this section, we set up the identification problem in a neighborhood of the bunching point. Then, we relate the selection bias to the ratio of the density of the treatment and the density of the part of the outcome that varies with confounders.

Our setting is the standard potential outcomes framework, where observation $i$'s outcome depends on the value $x$ of a multivalued scalar treatment variable through the potential outcome function $Y_i(x).$ We observe the treatment value $X_i$ and the outcome $Y_i=Y_i(X_i)$.  The support of $X_i$ includes a nondegenerate interval whose left boundary $\bar{x}$ exhibits bunching. In many relevant applications, $X_i$ is continuously distributed on the positive real line with the bunching point $\bar{x}=0$. The right panel of Figure \ref{fig:bunchingtypes} depicts this case, while the left panel depicts a case in which $\bar{x}$ is in the interior of the support of $X_i$. Bunching on the right boundary of the support of $X_i$ can be accommodated by redefining $X_i$ as $2\bar{x} - X_i$. 

\begin{figure}
\caption{Two examples of distributions with bunching} \label{fig:bunchingtypes}
\begin{center}
    \begin{tikzpicture}[scale = .95]
			\begin{axis}[
			axis x line=bottom,
			axis y line=left,
                xmin=0, xmax=10, 
			ymin=0, ymax=10,
			title={\textbf{Interior bunching}},
			ylabel={\scriptsize{$f_X(x)$}},
                ylabel style={
                at={(axis description cs:0,1)},
                rotate=-90,
                anchor=south,
                xshift=-4pt,
                },
                ytick=\empty,
			xtick=\empty,
			extra x ticks={0,5},
			extra x tick labels={0,$\bar{x}$},
			extra y ticks={\empty},
			extra y tick labels={\empty},
			]
			
		\addplot[blue,only marks,mark=*,mark size=2pt]
      coordinates {(5,8)};
      	\addplot [blue!30,fill=blue!30,draw=none,fill opacity=0.4,domain=0:5] {5*gauss(6.5,2)} \closedcycle;
		\addplot [blue!30,fill=blue!30,draw=none,fill opacity=0.4,domain=5:10] {5*gaussplus2(6.5,2)} \closedcycle;
            \addplot [blue, thick, domain=0:5] {5*gauss(6.5,2)};
            \addplot [blue, thick, domain=5:10] {5*gaussplus2(6.5,2)};
			\end{axis}		
    \end{tikzpicture}
    \quad \quad \quad \quad
    \begin{tikzpicture}[scale = .95]
			\begin{axis}[
			axis x line=bottom,
			axis y line=left,
			xmin=0, xmax=10, 
			ymin=0, ymax=10,
			title={\textbf{Boundary bunching}},
			ylabel={\scriptsize{$f_X(x)$}},
                ylabel style={
                at={(axis description cs:0,1)},
                rotate=-90,
                anchor=south,
                xshift=-4pt,
                },
			ytick=\empty,
			xtick=\empty,
			extra x ticks={0},
			extra x tick labels={$\bar{x}$},
			extra y ticks={\empty},
			extra y tick labels={\empty},
			]
			
        \addplot[blue,only marks,mark=*,mark size=2pt]
      coordinates {(0,8)};
	\addplot [blue, thick, domain=0:10] {5*gaussplus2(4.5,2)};
        \addplot[blue!30,fill=blue!30,draw=none,fill opacity=0.4,domain=0:10] {5*gaussplus2(4.5,2)} \closedcycle;
			\end{axis}		
    \end{tikzpicture}
\end{center}
\end{figure}
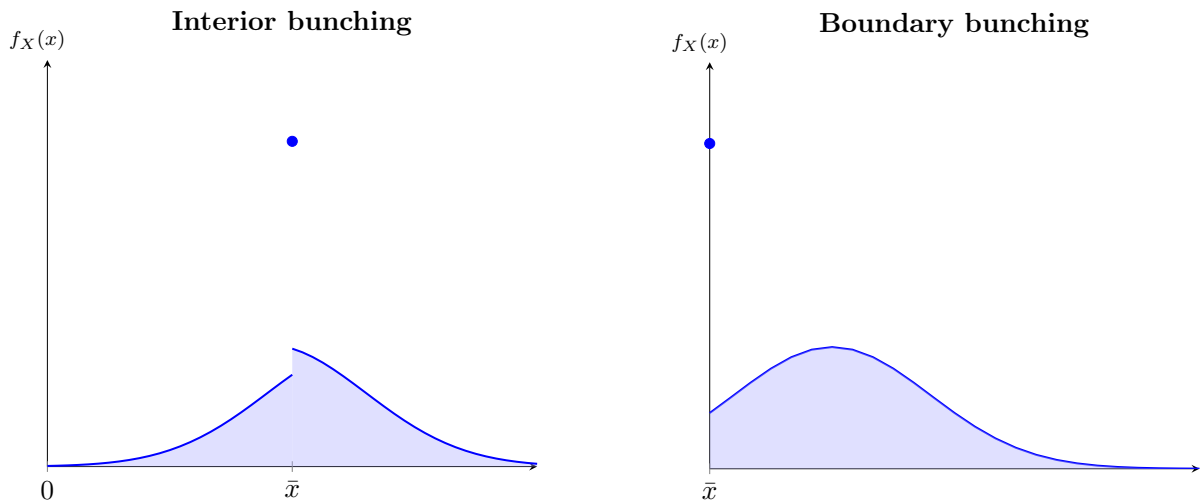

We adopt the following notational conventions. For an arbitrary function $v\mapsto g(v),$ we denote the $k$-th derivative at $\tilde{v}$ as $g^{(k)}(\tilde{v})$. For the first derivative, we also use the notation $g'(\tilde{v}):=g^{(1)}(\tilde{v}).$ For a function $g(x)$, let $g(\bar{x}^+):= \lim_{x \downarrow \bar{x}} g(x)$. We let $g'(\bar{x}^+):=\lim_{x\downarrow \bar{x}}(g(x)-g(\bar{x}^+))/(x-\bar{x})$ denote a right-derivative at $\bar{x}$ defined with respect to the limit $g(\bar{x}^+)$. We show that this definition is equivalent to $g'(\bar{x}^+)=\lim_{x\downarrow \bar{x}} g'(x)$, provided that $g(\bar{x}^+)$ exists and $g$ is differentiable in a neighborhood above $\bar{x}$. We define higher order limit derivatives $g^{(k)}(\bar{x}^+)$ analogously.  For other composite functions $(g\circ h)(x) := g(h(x))$, we let $g(h(\bar{x}^+)):= (g\circ h)(\bar{x}^+)$. For an arbitrary random variable $V_i,$ let $F_{V}(v)=\mathbb{P}(V_i\leq v)$ and $f_{V}(v)=F'_{V}(v)$ if the derivative exists. Analogously, for a set $S \subseteq \mathbbm{R},$ $F_{V|S}(v)=\mathbb{P}(V_i\leq v|S)$ and  $f_{V|S}=F'_{V|S}(v)$ if and where the derivative exists. For example, $f_{V|S}(x)$ denotes $\frac{f_V(v)}{P(S)}$ for any $v \in S$. Define the sign of $v$ as $\text{sgn}(v)=\bm{1}(v\geq 0)-\bm{1}(v\leq 0)$. For a set $S \subseteq \mathbbm{R}$, we denote the image of the function $g$ over $S$ as $g(S)$. For simplicity, we refer to the ``support of $V_i$'' when we mean the support of the distribution of $V_i$.\\  

\noindent \textit{Remark:} (Friction around $\bar{x}$) The examples in Figure \ref{fig:bunchingtypes} depict distributions with ``perfect'' bunching, in the sense that bunching is at a point mass in the distribution of $X_i$. By contrast, interior bunching is often somewhat diffuse around the bunching point because of optimization frictions, or because $X_i$ is measured with error. We abstract from this issue and assume that the researcher has a means of identifying the ``bunched'' observations $X_i=\bar{x}$. This is generally not problematic in settings with boundary bunching, and for interior bunching, measurement error can sometimes be eliminated by using administrative data (see \citealt{goff2020treatment} for an example). For general discussions of optimization frictions in bunching settings, see \citet{kleven2016bunching} and \citet{bertanha2024bunching}.

\subsection{Parameters of interest and the identification problem}
To define treatment effect parameters, we let the outcome $Y_i(\bar{x})$ that would occur if treatment were equal to $\bar{x}$ play the role of the ``untreated'' state. When bunching is at zero, this yields the familiar notation $Y_i(0)$. Let $\text{ATT}(x)$ denote the average effect of changing treatment from the bunching point to $x$, among those treated $X_i=x$:
$$\text{ATT}(x):=\mathbbm{E}[Y_i(x)-Y_i(\bar{x})|X_i=x] = \mathbbm{E}[Y_i|X_i=x]-\mathbbm{E}[Y_i(\bar{x})|X_i=x],$$
where we have used $Y_i = Y_i(X_i)$ in the rightmost expression. This expression highlights the challenge of identifying the causal quantity $\text{ATT}(x)$, since $\E[Y_i(\bar{x})|X_i=x]$ is counterfactual and not directly observed for $x \ne \bar{x}$.

In our empirical application, $x$ measures cigarettes per day, and the bunching point is $\bar{x}=0.$ So, $\text{ATT}(x)$ measures the average birth-weight loss (or gain) mothers who smoke $x$ cigarettes incur for smoking that amount in comparison to a counterfactual where they do not smoke at all. About 80\% of mothers in this application smoke zero cigarettes, leading to a case of boundary bunching as in the right panel of Figure \ref{fig:bunchingtypes}. Here, $\E[Y_i(\bar{x})|X_i=x]$ is the expected birth weight if mothers who smoke $x$ cigarettes per day were to quit. Due to confounders (e.g. drinking during pregnancy), this quantity can be expected to vary with $x$.

Local effects of the treatment around a value $x$ can be obtained by inspecting the derivative of the function $\text{ATT}(x)$. We define the \textit{average marginal effect near the bunching point,} $\text{AME}_{\bar{x}}^{+},$ as the right derivative of $\text{ATT}(x)$ as $x$ approaches the bunching point from above:
$$\text{AME}_{\bar{x}}^{+}:=\lim_{x\downarrow \bar{x}}\frac{\text{ATT}(x)}{x-\bar{x}}=\lim_{x\downarrow \bar{x}}\E\left[\frac{Y_i(x)-Y_i(\bar{x})}{x-\bar{x}}\middle| X_i=x\right].$$
When the bunching point is interior as in the left panel of Figure \ref{fig:bunchingtypes}, or the bunching point is on the right boundary of the support of $X_i$, one could define a similar $\text{AME}_{\bar{x}}^-$ parameter describing the left limit at $\bar{x}.$  We use the right derivative to define the treatment effects of interest because it fits our application and many others. In our application, $\text{AME}_{\bar{x}}^{+}$ is the birth weight loss (or gain) incurred by those who smoked just a little versus the counterfactual where they would not have smoked at all, expressed as a per-cigarette rate.

If the individual potential outcome functions $Y_i(x)$ are differentiable, and regularity conditions permitting the exchange of limits and expectations hold, one can interpret $\text{AME}_{\bar{x}}^{+}$ in terms of an average of the derivatives $Y_i'(x)$ of these dose-response functions, i.e. $$\text{AME}_{\bar{x}}^{+} = \lim_{x\downarrow \bar{x}} \text{AME}(x),$$ with $\text{AME}(x):=\E[Y_i'(x)|X_i=x]$. It is for this reason that we use the term \textit{marginal effect} to refer to $Y_i'(x)$, in line with e.g., \citet{hoderleinmammen,imbensnewey,chiangsasaki}. Other authors use the term \textit{partial effect} (see e.g., \citealt{sasaki2015, katosasaki2017}), emphasizing the interpretation of $Y_i(x)$ as $g(x,U_i)$ for an underlying structural function $g$ over heterogeneity $U_i$ in potential outcomes, in which case $Y_i'(x)$ denotes the partial derivative of $g(x,U_i)$ with respect to $x$. A related quantity is the average causal response on the treated (ACRT) function, studied in \cite{callaway2024difference}.  Despite our use of the term \textit{marginal effect}, our results do not require $Y'_i(x)$ to exist. Rather, we maintain weaker differentiability assumptions at the level of expectations of outcomes, which are sufficient to ensure that $\text{AME}_{\bar{x}}^{+}$ is well-defined.

Like with $\text{ATT}(x)$, identifying average marginal effects is challenging because the regression derivative of $Y_i$ on $X_i$ generally confounds the causal effect of treatment with a bias due to endogeneity:

\vspace{-1cm}
\begin{equation*} \label{eq:avgendogeneity}
 \frac{d}{dx} \mathbbm{E}[Y_i|X_i=x] = \frac{d}{dx} \mathbbm{E}[Y_i(x)|X_i=x] = \overbrace{\left.\frac{d}{dx'} \mathbbm{E}[Y_i(x')|X_i=x]\right|_{x'=x}}^{\text{causal effect}} + \overbrace{\left.\frac{d}{dx'}\mathbbm{E}[Y_i(x)|X_i=x']\right|_{x'=x}}^{\text{selection bias}},
\end{equation*}
where the first term above is, under regularity conditions, equal to the average marginal effect at $x,$ $AME(x)$. This equation depicts the fundamental problem of causal inference: the observed differences in mean outcomes among those with different treatment values combines both the causal effect of $x$ and selection bias. An analogous decomposition for $\text{AME}_{\bar{x}}^{+}$ implies that:
\begin{align} 
\text{AME}_{\bar{x}}^{+} 
=  \lim_{x\downarrow \bar{x}}\frac{d}{dx} \E[Y_i|X_i=x]- \lim_{x\downarrow \bar{x}}\frac{d}{dx} \E[Y_i(\bar{x})|X_i=x].
\label{eq:ameendogeneity}
\end{align}
The first term above is the right limit of the derivative of the regression function, and the second term reflects endogeneity: those with different values of $X_i$ may have different mean values of $Y_i(\bar{x})$. 

The following definitions will be useful in simplifying exposition throughout our analysis. First, we define $m(x)$ as the mean difference in observed outcomes at $x$ versus the boundary as $x \downarrow \bar{x}$:
\begin{equation*} 
    m(x):=\E[Y_i|X_i=x] -\E[Y_i|X_i=\bar{x}^+].
\end{equation*}
Similarly, define:
\begin{equation*} 
    s(x):=\E[Y_i(\bar{x})|X_i=x] - \E[Y_i(\bar{x})|X_i=\bar{x}^+],
\end{equation*}
which denotes the comparison of the counterfactual outcomes $Y_i(\bar{x})$ relative to the boundary as $x \downarrow \bar{x}$. Then, we write
\begin{equation} \label{eq:paramsDelta}
    \text{ATT}(x)=m(x)-s(x) \hspace{1cm} \textrm{ and } \hspace{1cm} \text{AME}_{\bar{x}}^{+}=m'(\bar{x}^+)-s'(\bar{x}^+),
\end{equation}
where the equalities follow from Theorem \ref{prop:AMEDelta} below. Both $m$ and $s$ are defined relative to the limit at the bunching point, so that $m(\bar{x}^+)=s(\bar{x}^+)=0$. Note as well that since only the first term in the definitions of $m$ and $s$ depends on $x$, we have $m'(x)=\frac{d}{dx} \E[Y_i|X_i=x]$ and $s'(x)=\frac{d}{dx} \E[Y_i(\bar{x})|X_i=x],$ whenever these derivatives exist.

Note that the function $m$ and the quantity $m'(\bar{x}^+)$ are identified directly from observables. Thus, the identification challenge for the causal parameters in \eqref{eq:paramsDelta} comes from the $s(\cdot)$ terms, which depend on unobserved counterfactuals.

\subsection{Identification of the average marginal effect near the bunching point} \label{sec:approach}
The main insight of this paper is established in this section. Specifically, it is the observation that to identify the derivative of the counterfactual function $\E[Y_i(\bar{x})|X_i=x]$ for values of $x$ near the bunching point $\bar{x}$, it is sufficient to identify the \textit{distribution} of the expected counterfactuals $E[Y_i(\bar{x})|X_i]$ for $X_i$ near $\bar{x}$. We then show in Section \ref{sec:idfrombunching} that bunching in $X_i$ makes it possible to identify this distribution, even though the counterfactual outcome $Y_i(\bar{x})$ is never observed for those with $X_i \ne \bar{x}$. The bias term $\lim_{x\downarrow \bar{x}}\frac{d}{dx} \E[Y_i(\bar{x})|X_i=x]$ in \eqref{eq:ameendogeneity} can then be identified.

To establish our first results, we assume the following:
\begin{assumption}[Continuity conditions] \label{as:continuoussupport}\label{as:smoothcounterfactual} The following hold:
\begin{enumerate}[(i)]
    \item $f_{X}(x)$ exists and is continuous on an open interval $(\bar{x},\bar{x}+\varepsilon_1)$ for some $\varepsilon_1>0$, and $f_X(\bar{x}^+)$ exists and is strictly positive.
    \item The function $x \mapsto \E[Y_i|X_i=x]$ is differentiable on an interval $(\bar{x},\bar{x}+\varepsilon_2)$ for some $\varepsilon_2>0$, and $\E[Y_i|X_i=\bar{x}^+]$ as well as $\lim_{x \downarrow \bar{x}}\frac{d}{dx}\E[Y_i|X_i=x]$ exist.
    \item The function $x\mapsto \E[Y_i(\bar{x})|X_i=x]$ is differentiable on an interval $(\bar{x},\bar{x}+\varepsilon_3)$ for some $\varepsilon_3>0$, where $\lim_{x\downarrow \bar{x}} \frac{d}{dx}\E[Y_i(\bar{x})|X_i=x]$ exists and is different from zero.
    \item $\text{ATT}(\bar{x}^+)=0,$ and $\text{ATT}'(\bar{x}^+)$ exists.
\end{enumerate}
\end{assumption}

Part (i) of Assumption \ref{as:continuoussupport} states that $X_i$ is continuously distributed with a density on an interval to the right of $\bar{x}$, though this density need not exist everywhere (e.g. there can be multiple bunching points provided that they are well-separated). Part (ii) of Assumption \ref{as:continuoussupport} says that a regression derivative exists, and the regression function and its derivative have a right limit at the bunching point. Both parts (i) and (ii) of Assumption \ref{as:continuoussupport} are restrictions on the observable data that can in principle be verified empirically.

 Part (iii) of Assumption \ref{as:smoothcounterfactual} requires that the selection function $s(x)$ be differentiable in a neighborhood above the bunching point. It also rules out the case in which there is no endogeneity when one approaches $\bar{x}$ from above, although, in this case, no correction for endogeneity is necessary. Note that \cite{caetano2015}'s test can be used in this setting to diagnose the presence of endogeneity needing a correction. 

The first statement in part (iv) of Assumption \ref{as:smoothcounterfactual} requires that $\E[Y_i|X_i=x^+]=\E[Y_i(\bar{x})|X_i=\bar{x}^+]$. The second statement in part (iv) of Assumption \ref{as:smoothcounterfactual} states that on average, the treatment effects are sufficiently smooth near the bunching point. Among those with treatment levels near the bunching point, marginally small doses of the treatment should have, on average, only marginally small effects. In the smoking example, this condition states that among mothers who smoke very little, smoking is not so harmful that a small amount can cause a stark decline in the baby's health. 

Part (iv) of Assumption \ref{as:smoothcounterfactual} is sufficient to guarantee that the $\text{AME}_{\bar{x}}^+$ is well defined, since $\text{ATT}(\bar{x}^+)=0$ implies that $\text{AME}_{\bar{x}}^+=\text{ATT}'(\bar{x}^+)$. A sufficient condition (though stronger than necessary) for part (iv) is that  the $Y_i(x)$ are differentiable and uniformly bounded with probability one near the bunching point, which furthermore implies that $\text{AME}_{\bar{x}}^+=\E[Y_i'(\bar{x})|X_i=\bar{x}^+]$. 
The following proposition establishes the connection between the parameters of interest and the functions $m$ and $s,$ as presented in Equation \eqref{eq:paramsDelta}. All proofs are found in Appendix \ref{ap:proofs}.
\begin{theorem} \label{prop:AMEDelta}
Under Assumption \ref{as:continuoussupport}, $\text{ATT}(x)=m(x)-s(x)$ and $\text{AME}_{\bar{x}}^{+}=m'(\bar{x}^+)-s'(\bar{x}^+)$.
\end{theorem}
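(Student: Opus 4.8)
The plan is to derive both identities directly from the definitions of $m$, $s$, $\text{ATT}$, and $\text{AME}_{\bar{x}}^{+}$, the only real care being the handling of one-sided limits at $\bar{x}$. As a preliminary I would note that the limit $\E[Y_i(\bar{x})|X_i=\bar{x}^+] := \lim_{x\downarrow\bar{x}}\E[Y_i(\bar{x})|X_i=x]$ exists and in fact equals $\E[Y_i|X_i=\bar{x}^+]$: by Assumption \ref{as:continuoussupport}(iv) the limit $\text{ATT}(\bar{x}^+)$ exists, and by Assumption \ref{as:continuoussupport}(ii) so does $\lim_{x\downarrow\bar{x}}\E[Y_i|X_i=x]$, hence $\E[Y_i(\bar{x})|X_i=x] = \E[Y_i|X_i=x] - \text{ATT}(x)$ converges; that $\text{ATT}(\bar{x}^+)=0$ then gives the equality. (Existence of this limit is also implied by Assumption \ref{as:continuoussupport}(iii) alone, since a function that is differentiable on a right-neighborhood of $\bar{x}$ and whose derivative has a finite right-limit at $\bar{x}$ itself has a finite right-limit at $\bar{x}$.) In particular $s$ is well-defined, and $m(\bar{x}^+)=s(\bar{x}^+)=0$.

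For the first identity I would simply rearrange:
\begin{align*}
m(x) - s(x) &= \big(\E[Y_i|X_i=x] - \E[Y_i|X_i=\bar{x}^+]\big) - \big(\E[Y_i(\bar{x})|X_i=x] - \E[Y_i(\bar{x})|X_i=\bar{x}^+]\big)\\
&= \big(\E[Y_i|X_i=x] - \E[Y_i(\bar{x})|X_i=x]\big) - \big(\E[Y_i|X_i=\bar{x}^+] - \E[Y_i(\bar{x})|X_i=\bar{x}^+]\big)\\
&= \text{ATT}(x) - \text{ATT}(\bar{x}^+) = \text{ATT}(x),
\end{align*}
where the last step uses $\text{ATT}(\bar{x}^+)=0$ from Assumption \ref{as:continuoussupport}(iv).

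For the second identity, start from the definition $\text{AME}_{\bar{x}}^{+} = \lim_{x\downarrow\bar{x}}\text{ATT}(x)/(x-\bar{x})$ and substitute $\text{ATT}(x)=m(x)-s(x)$; since $m(\bar{x}^+)=s(\bar{x}^+)=0$, this equals $\lim_{x\downarrow\bar{x}}\big[(m(x)-m(\bar{x}^+))/(x-\bar{x})\big] - \lim_{x\downarrow\bar{x}}\big[(s(x)-s(\bar{x}^+))/(x-\bar{x})\big]$, provided both right-hand limits exist. To see that they do: because the subtracted terms in the definitions of $m$ and $s$ are constant in $x$, we have $m'(x) = \frac{d}{dx}\E[Y_i|X_i=x]$ and $s'(x) = \frac{d}{dx}\E[Y_i(\bar{x})|X_i=x]$ on a right-neighborhood of $\bar{x}$, and Assumptions \ref{as:continuoussupport}(ii) and (iii) guarantee that $\lim_{x\downarrow\bar{x}}m'(x)$ and $\lim_{x\downarrow\bar{x}}s'(x)$ exist. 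Invoking the equivalence between the two notions of right-derivative recorded in the notational conventions (the mean value theorem argument: $\lim_{x\downarrow\bar{x}}(g(x)-g(\bar{x}^+))/(x-\bar{x}) = \lim_{x\downarrow\bar{x}}g'(x) =: g'(\bar{x}^+)$ whenever $g(\bar{x}^+)$ exists and $g$ is differentiable above $\bar{x}$), applied to $g=m$ and $g=s$, the two limits equal $m'(\bar{x}^+)$ and $s'(\bar{x}^+)$ respectively, so $\text{AME}_{\bar{x}}^{+} = m'(\bar{x}^+) - s'(\bar{x}^+)$.

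The proof is essentially bookkeeping, so there is no deep obstacle; the only points needing attention are (a) checking that $\E[Y_i(\bar{x})|X_i=\bar{x}^+]$ exists so that $s$ and the decomposition make sense — which is why Assumption \ref{as:continuoussupport}(iv) (or (iii)) has to be used at the outset — and (b) justifying that the limit defining $\text{AME}_{\bar{x}}^{+}$ splits as $m'(\bar{x}^+) - s'(\bar{x}^+)$, which rests on the separate existence of each one-sided derivative together with the equivalence between defining a right-derivative via difference quotients about the limiting value and via the limit of the ordinary derivative. I would treat that equivalence as the separate lemma it is flagged to be and cite it rather than re-derive it here.
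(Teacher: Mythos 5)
Your proof is correct and follows essentially the same route as the paper's: the same preliminary observation that $\text{ATT}(\bar{x}^+)=0$ forces $\E[Y_i(\bar{x})|X_i=\bar{x}^+]=\E[Y_i|X_i=\bar{x}^+]$, the same rearrangement for the first identity, and the same splitting of the difference quotient justified by the mean-value-theorem equivalence between the two notions of right-derivative (which the paper spells out in-line rather than citing as a separate lemma). No gaps.
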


We now move to the first main result of this paper. Part (iii) of Assumption \ref{as:smoothcounterfactual} implies that the counterfactual function $s(x)$ is differentiable and locally monotonic for $x$ in some neighborhood above the bunching point (without loss, we can take $I=(\bar{x},\bar{x}+\varepsilon),$ for some $\varepsilon < \min\{\varepsilon_1,\varepsilon_2,\varepsilon_3,\varepsilon_4\},$ and we note that $\varepsilon$ need never be known for our identification strategy).
The local monotonicity and differentiability in $I$ allow us to apply the well-known change-of-variables formula from integration theory to derive the following result. 
\begin{theorem}\label{thm:changeofvariables} If Assumption \ref{as:continuoussupport} holds, then there exists $I=(\bar{x},\bar{x}+\varepsilon)$ for some $\varepsilon>0$ such that, for all $x\in I,$ the density $f_{s(X)|I}(s(x))$ exists and is non-zero, and
\begin{equation}\label{eq:changeofvariables}
\left|s'(x)\right|=\frac{f_{X|I}(x)}{f_{s(X)|I}(s(x))}.
\end{equation}   
\end{theorem}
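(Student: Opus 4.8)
The plan is to exploit the local monotonicity and differentiability of $s$ on a suitable interval $I=(\bar x, \bar x+\varepsilon)$ and then invoke the change-of-variables formula for densities under a monotone transformation. First I would fix $\varepsilon$: by part (iii) of Assumption \ref{as:smoothcounterfactual}, $s$ is differentiable on $(\bar x, \bar x+\varepsilon_3)$ and $s'(\bar x^+)=\lim_{x\downarrow\bar x} s'(x)$ exists and is nonzero; by continuity of $s'$ near the boundary (which follows from the existence of the limit together with differentiability, using the equivalence of the two notions of right-derivative stated in the notational conventions) there is an $\varepsilon>0$, which we also take smaller than $\varepsilon_1,\varepsilon_2,\varepsilon_3,\varepsilon_4$, such that $s'$ has constant sign and is bounded away from zero on $I=(\bar x,\bar x+\varepsilon)$. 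On this interval $s$ is therefore strictly monotone and $C^1$ with nonvanishing derivative, hence a diffeomorphism onto its image $s(I)$, an open interval; note $s(\bar x^+)=0$ sits at the appropriate endpoint of $s(I)$.

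Next I would identify the distribution of $s(X_i)$ conditional on the event $\{X_i\in I\}$. Since $X_i$ has a continuous, strictly positive density $f_{X|I}$ on $I$ by part (i) (restricted and renormalized by $\mathbb P(X_i\in I)$), the pushforward of this law under the monotone map $s$ has a density on $s(I)$ given by the standard formula
\begin{equation*}
f_{s(X)|I}\bigl(s(x)\bigr) = \frac{f_{X|I}(x)}{\lvert s'(x)\rvert}, \qquad x\in I,
\end{equation*}
obtained by differentiating the identity relating the CDFs: for $y$ in the interior of $s(I)$, $F_{s(X)|I}(y) = F_{X|I}\bigl(s^{-1}(y)\bigr)$ if $s$ is increasing, or $1-F_{X|I}\bigl(s^{-1}(y)\bigr)$ if decreasing, and then applying the chain rule together with $(s^{-1})'(y)=1/s'(s^{-1}(y))$. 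Because $f_{X|I}$ is continuous and strictly positive and $\lvert s'\rvert$ is continuous and bounded away from zero on $I$, the right-hand side is well-defined, finite, and strictly positive for every $x\in I$, which gives the existence and non-vanishing claims. Rearranging yields exactly \eqref{eq:changeofvariables}.

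The main obstacle is not the change-of-variables step itself — that is classical once monotonicity is in hand — but rather pinning down the interval $I$ cleanly: one must argue that $s'$ does not merely have a nonzero limit at $\bar x^+$ but actually maintains a fixed sign on a genuine one-sided neighborhood, and that the various $\varepsilon_j$ and the sign condition on the selection bias can be simultaneously accommodated. I would handle this by first invoking the stated equivalence $s'(\bar x^+)=\lim_{x\downarrow\bar x}s'(x)$ to transfer information from the limit derivative to the pointwise derivative, then using an elementary $\delta$–argument on the continuous function $s'$ to extract an interval on which $\lvert s'\rvert \ge \tfrac12\lvert s'(\bar x^+)\rvert > 0$. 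A secondary subtlety is measure-theoretic bookkeeping at the endpoint: $s(I)$ is a half-open or open interval with $0$ as a boundary point, so all density statements should be read on the interior of $s(I)$, matching the convention $f_{V|S}$ is defined "where the derivative exists"; this requires no special care beyond a remark. Everything else is routine.
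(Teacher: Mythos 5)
Your proposal is correct and follows essentially the same route as the paper's proof: local strict monotonicity of $s$ obtained from the nonzero right limit of $s'$ via an $\epsilon$--$\delta$ argument, followed by the CDF-based change of variables $F_{s(X)|I}(t)=F_{X|I}(s^{-1}(t))$ (with a sign flip in the decreasing case) and the inverse-function/chain-rule differentiation. One small caution: the existence of $\lim_{x\downarrow\bar{x}}s'(x)$ does not by itself make $s'$ continuous on the interval, but your argument never actually needs that --- the $\delta$-argument applied directly to the definition of the limit already gives the constant sign and the bound $|s'(x)|\ge \tfrac12|s'(\bar{x}^+)|>0$ on a one-sided neighborhood, which is precisely the paper's step.
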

\noindent Note the absolute value $\left|s'(x)\right|$ in \eqref{eq:changeofvariables}: the right-hand side is always positive, but $s'(x)$ will be negative if there is negative selection. The textbook change-of-variables formula (see e.g. \citealt{fremlin2011measure} for a general formulation) states that $f_{u(X)}(t) = f_{X}(u^{-1}(t))/|u'(u^{-1}(t))|$ for any $t \in u^{-1}(I)$, given any function $u(x)$ that is differentiable and strictly increasing on an interval $I$ (and analogously for a strictly decreasing $u$). Then, the claim follows with $u(X)=s(X)$ and $t=s(x)$. Though the change-of-variables formula is a standard tool, a proof of Theorem \ref{thm:changeofvariables} is provided in Appendix \ref{ap:proofs}, which also establishes the local monotonicity of $s$ required for the change-of-variables result. Figure \ref{fig: intuition change of variables} provides a visual illustration of the change-of-variables theorem for scenarios with high and low selection.

\begin{figure}[H]
\begin{center}
\caption{Using the Change-of-Variables Theorem to Identify $|s'(\bar{x}^+)|$}
\label{fig: intuition change of variables}
\vspace{.1in}
%—— Panel 1: High Selection ——
\begin{tikzpicture}[scale=1.2, transform shape]
  \begin{axis}[
      width=0.45\textwidth, height=5cm,
      domain=-4:4, samples=200,
      axis lines=left,
      xlabel={$\bar{x}$}, ylabel={\scriptsize{Density}},
                xtick=\empty,  ytick=\empty,
      xmin=-4, xmax=4, ymin=0, ymax=1.2,
      title={\footnotesize{High Selection: $s(X_i)=2(X_i-\bar{x})$}},
      title style={
        at={(axis description cs:0.5,0)},  % middle of bottom edge
        anchor=north,                       % title’s north edge sits on that line
        yshift=-2em                         % tweak vertical gap
      },
      legend pos=north east,
      legend style={font=\footnotesize},
      legend image code/.code={},   % no legend swatches
    ]
    % Blue curve: f_{X^*}(x) = φ(x;μ=-1,σ=1)
    \addplot[blue,thick,domain=-4:0]
      {exp(-((x+1)^2)/2)/sqrt(2*pi)};
    \addplot[blue,thick,domain=0:4]
      {exp(-((x+1)^2)/2)/sqrt(2*pi)};
    % Shade NEGATIVE side for blue
    \addplot[blue!30,fill=blue!30,draw=none,fill opacity=0.4,domain=-4:0]
      {exp(-((x+1)^2)/2)/sqrt(2*pi)} \closedcycle;

    % Red curve: f_{S}(x) = ½·φ((x)/2;μ=-1,σ=1)
    \addplot[red!80!black,thick,domain=-4:0]
      {0.5*exp(-((x/2+1)^2)/2)/sqrt(2*pi)};
    \addplot[red!80!black,thick,domain=0:4]
      {0.5*exp(-((x/2+1)^2)/2)/sqrt(2*pi)};
    % Shade NEGATIVE side for red
    \addplot[red!30,fill=red!30,draw=none,fill opacity=0.4,domain=-4:0]
      {0.5*exp(-((x/2+1)^2)/2)/sqrt(2*pi)} \closedcycle;

    % Divider & bracket at x=0
    \draw[dotted] (axis cs:0,0) -- (axis cs:0,0.4);

      \node at (0.3,0.8) {\tiny{${\color{blue}f_{X}(\bar{x}^+)}=2{\color{red!80!black}f_{s(X)}(s(\bar{x}^+))}$}};
    \draw[->, blue, thick]
    (-0.06,0.3) -- (-.8,0.7);
    \draw[->, red!80!black, thick]
    (0.1,0.16) -- (1.1,0.7);

    \addplot[blue,only marks,mark=*,mark size=2pt]
      coordinates {(0,{exp(-(((0)+1)^2)/2)/sqrt(2*pi)})};
    \addplot[red!80!black,only marks,mark=*,mark size=2pt]
      coordinates {(0,{0.5*exp(-(((0)/2+1)^2)/2)/sqrt(2*pi)})};

  \end{axis}
\end{tikzpicture}
\quad
%—— Panel 2: Low Selection ——
\begin{tikzpicture}[scale=1.2, transform shape]
  \begin{axis}[
      width=0.45\textwidth, height=5cm,
      domain=-4:4, samples=200,
      axis lines=left,
                xtick=\empty,  ytick=\empty,  
      xlabel={$\bar{x}$}, ylabel={\scriptsize{Density}},
      xmin=-4, xmax=4, ymin=0, ymax=1.2,        % increased ymax
      title={\footnotesize{Low Selection: $s(X_i)=\tfrac13(X_i-\bar{x})$}},
      title style={
        at={(axis description cs:0.5,0)},  % middle of bottom edge
        anchor=north,                       % title’s north edge sits on that line
        yshift=-2em                         % tweak vertical gap
      },      
      legend pos=north east,
      legend style={font=\footnotesize},
      legend image code/.code={},  
    ]
    % Blue curve
    \addplot[blue,thick,domain=-4:0]
      {exp(-((x+1)^2)/2)/sqrt(2*pi)};
    \addplot[blue,thick,domain=0:4]
      {exp(-((x+1)^2)/2)/sqrt(2*pi)};
    \addplot[blue!30,fill=blue!30,draw=none,fill opacity=0.4,domain=-4:0]
      {exp(-((x+1)^2)/2)/sqrt(2*pi)} \closedcycle;

    % Red curve: f_s(x) = 3·φ(3x;μ=-1,σ=1)
    \addplot[red!80!black,thick,domain=-4:0]
      {3*exp(-((3*x+1)^2)/2)/sqrt(2*pi)};
    \addplot[red!80!black,thick,domain=0:4]
      {3*exp(-((3*x+1)^2)/2)/sqrt(2*pi)};
    \addplot[red!30,fill=red!30,draw=none,fill opacity=0.4,domain=-4:0]
      {3*exp(-((3*x+1)^2)/2)/sqrt(2*pi)} \closedcycle;

    % Divider & bracket at x=0
    \draw[dotted] (axis cs:0,0) -- (axis cs:0,1.2);

    \draw[->, blue, thick]
    (-0.06,0.27) -- (-.55,0.5);
    \draw[->, red!80!black, thick]
    (0.07,0.7) to[out=120,in=90] (1.1,0.63);
      \node at (0.55,0.55) {\tiny{${\color{blue}f_{X}(\bar{x}^+)}=\tfrac13{\color{red!80!black}f_{s(X)}(s(\bar{x}^+))}$}};

    \addplot[blue,only marks,mark=*,mark size=2pt]
      coordinates {(0,{exp(-(((0)+1)^2)/2)/sqrt(2*pi)})};
    \addplot[red!80!black,only marks,mark=*,mark size=2pt]
      coordinates {(0,{3*exp(-(((3*(0)+1)^2))/2)/sqrt(2*pi)})};

 \node[font=\footnotesize, yshift=-10pt] at (axis cs:0,0) {\(\bar{x}\)};
    % Legend
    %\addlegendentry{\(\color{blue}f_{X^*}(x)\)}
    %\addlegendentry{\(\color{red}f_{s(X^*)}(s(x))\)}
  \end{axis}
\end{tikzpicture}

\end{center}
Note: The blue curve is the density of $X_i\sim N(\bar{x}-1,\bar{x}+1)$ in both panels. The red curve is the density of $s(X_i)$ given that the selection takes the linear form $\E[Y_i(\bar{x})|X_i=x]= a+b \cdot (x-\bar{x})$, so that $s(x) = b \cdot (x-\bar{x})$ and $s'(x)=b$. The equations relating the heights of the solid dots on each panel highlight the proportionality of the densities at $x=\bar{x}^+$, matching $s'(\bar{x}^+)=b$. Note that for the case with less selection ($b=\frac{1}{3}$), the density of $s(X_i)$ at $\bar{x}^+$ is larger, increasing the denominator in the formula for  $s'(\bar{x}^+).$ To simplify, this illustration assumes $s(\bar{x}^+)=\bar{x},$ which is true when $\bar{x}=0$, as in our application.
\end{figure}
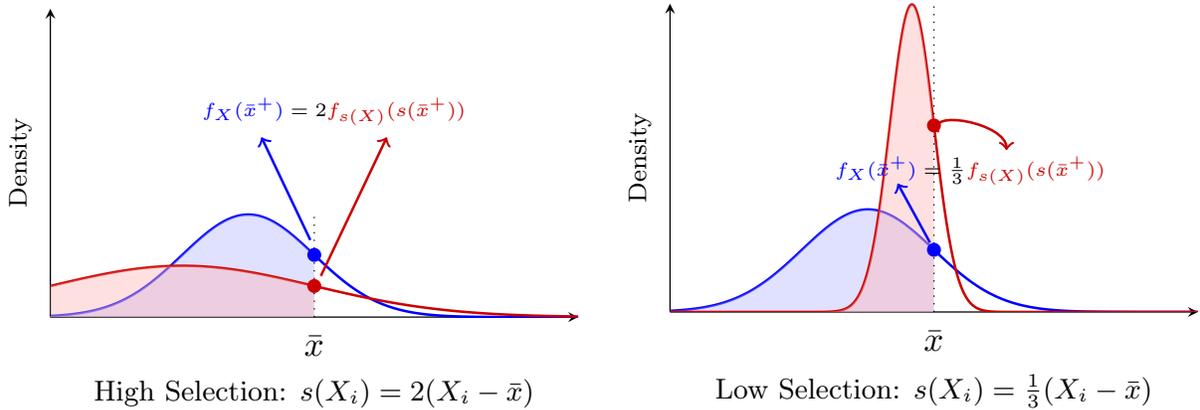

Let $\theta:=\text{sgn}\left(\lim_{x\downarrow \bar{x}}\frac{d}{dx}\E[Y_i(\bar{x})|X_i=x]\right)=\text{sgn}(s'(\bar{x}^+))$ be the sign of the selection bias as $x$ approaches $\bar{x}$ from the right. As an application of Theorem \ref{thm:changeofvariables}, we have the following expression for the $\text{AME}_{\bar{x}}^{+}$:
\begin{theorem}\label{thm:extensivemargin} If Assumption \ref{as:continuoussupport} holds, then
\begin{equation*}
\text{AME}_{\bar{x}}^{+} =m'(\bar{x}^+)-\theta \cdot\frac{f_{X|I}(\bar{x}^+)}{f_{s(X)|I}(s(\bar{x}^+))}.
\end{equation*}
\end{theorem}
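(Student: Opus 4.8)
The plan is to read the result off by combining the two theorems already in hand. Theorem~\ref{prop:AMEDelta} gives $\text{AME}_{\bar{x}}^{+} = m'(\bar{x}^+) - s'(\bar{x}^+)$, so it suffices to show that $s'(\bar{x}^+) = \theta\,f_{X|I}(\bar{x}^+)/f_{s(X)|I}(s(\bar{x}^+))$; Theorem~\ref{thm:changeofvariables} supplies the pointwise identity $|s'(x)| = f_{X|I}(x)/f_{s(X)|I}(s(x))$ on $I=(\bar{x},\bar{x}+\varepsilon)$, and the whole argument is to push this to the limit $x\downarrow\bar{x}$ and then restore the sign.

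First I would check that $s'(\bar{x}^+)$ exists, is finite, is nonzero, and has sign $\theta$. Existence and finiteness follow because $s = m - \text{ATT}$ by Theorem~\ref{prop:AMEDelta}, so $s'(\bar{x}^+) = m'(\bar{x}^+) - \text{ATT}'(\bar{x}^+)$, and both right-limit derivatives exist under Assumption~\ref{as:continuoussupport}(ii) and~(iv). Nonvanishing, together with the identification $s'(\bar{x}^+) = \lim_{x\downarrow\bar{x}}\tfrac{d}{dx}\E[Y_i(\bar{x})|X_i=x]$ and hence $\theta = \text{sgn}(s'(\bar{x}^+))$, is exactly Assumption~\ref{as:continuoussupport}(iii). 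In particular $s'(\bar{x}^+) = \theta\,|s'(\bar{x}^+)|$. Note also that $s(\bar{x}^+)=0$ by the definition of $s$, which we will use in a moment.

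Next I would take $x\downarrow\bar{x}$ in \eqref{eq:changeofvariables}. On the left, $|s'(x)|\to|s'(\bar{x}^+)|$ by continuity of $|\cdot|$ and the equivalence $s'(\bar{x}^+)=\lim_{x\downarrow\bar{x}}s'(x)$ recorded in the notation section, which applies since $s$ is differentiable above $\bar{x}$ and $s(\bar{x}^+)=0$ exists. For the right-hand side, $f_{X|I}(x) = f_X(x)/\mathbb{P}(I) \to f_X(\bar{x}^+)/\mathbb{P}(I) = f_{X|I}(\bar{x}^+)$ by continuity of $f_X$ above $\bar{x}$ and existence of the strictly positive right limit $f_X(\bar{x}^+)$ in Assumption~\ref{as:continuoussupport}(i). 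Rewriting \eqref{eq:changeofvariables} as $f_{s(X)|I}(s(x)) = f_{X|I}(x)/|s'(x)|$ shows that this density converges, as $x\downarrow\bar{x}$, to $f_{X|I}(\bar{x}^+)/|s'(\bar{x}^+)|\in(0,\infty)$; by the composite-function convention $g(h(\bar{x}^+)):=(g\circ h)(\bar{x}^+)$, this limit is precisely $f_{s(X)|I}(s(\bar{x}^+))$. Hence $|s'(\bar{x}^+)| = f_{X|I}(\bar{x}^+)/f_{s(X)|I}(s(\bar{x}^+))$, with the denominator finite and strictly positive so the quotient is well posed.

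Finally I would substitute $s'(\bar{x}^+) = \theta\,|s'(\bar{x}^+)| = \theta\,f_{X|I}(\bar{x}^+)/f_{s(X)|I}(s(\bar{x}^+))$ into the expression from Theorem~\ref{prop:AMEDelta} to conclude. The only step requiring a little care is the interpretation of the boundary value $f_{s(X)|I}(s(\bar{x}^+))$: it is the one-sided limit of the density of $s(X_i)\mid X_i\in I$ at the edge of its support, not a value the density need attain literally. But no separate regularity claim on $f_{s(X)|I}$ is needed — dividing the two genuine, positive, finite limits $f_{X|I}(\bar{x}^+)$ and $|s'(\bar{x}^+)|$ produces it directly, which is what makes the argument short.
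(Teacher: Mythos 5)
Your proposal is correct and follows essentially the same route as the paper: apply Theorem \ref{thm:changeofvariables} on $I$, use that $\text{sgn}(s'(x))$ is constant on $I$ with $s'(\bar{x}^+)=\lim_{x\downarrow\bar{x}}s'(x)$, pass to the limit to obtain $f_{s(X)|I}(s(\bar{x}^+))$ as the quotient of the two existing limits, and substitute into the decomposition from Theorem \ref{prop:AMEDelta}. Your remark that $f_{s(X)|I}(s(\bar{x}^+))$ is produced by, rather than assumed for, the limit argument matches the paper's "Thus $f_{s(X)|I}(s(\bar{x}^+))$ must also exist."
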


\section{Identification using bunching} \label{sec:idfrombunching}
The previous section established that identification of treatment effects may be possible if we can identify the sign of the selection bias at the boundary point,  $\theta=\text{sgn}(s(\bar{x}^+))$, and the limit of the density of the selection bias variable, $f_{s(X)}(s(x)),$ as $x \downarrow \bar{x}$. In this section, we show how information at the bunching point may be used to obtain these quantities. 

We start by noting that the treatment variable plays two distinct roles. First, it describes the dose taken by an individual, i.e. the number of cigarettes smoked in our application. Second, it tells us something about that individual, namely the fact that those with that treatment value are of the ``type'' that selected (or was selected into) that amount. Thus, for example, in the parameter $\text{ATT}(x)=\E[Y_i(x)-Y_i(0)|X_i=x],$ the $x$ in ``$Y_i(x)$'' describes the dose, and the  $x$ in ``$X_i=x$'' describes the group that selected it. Henceforth, we separate the notation of the two concepts: the dose is the treatment variable, denoted $X_i$ as before, and the selection variable is denoted $X_i^*$. 

In most cases, the selection variable $X_i^*$ is identical to $X_i.$ Indeed, this is precisely what gives rise to endogeneity: if $X_i^*$ is correlated with the potential outcomes $Y_i(x),$ then the correlation between $Y_i=Y_i(X_i)$ and $X_i=X_i^*$ reflects both the causal effect (i.e. the part that refers to the variation of the function $Y_i(x)$ with $x$ for a given $i$), and the selection function (i.e. the part that refers to the variation of $Y_i(x)$ across the $i$ with different values of $X_i^*$). This is the case here as well for values away from the bunching point because, when $X_i^*>\bar{x},$ there is no constraint on the treatment value, and so $X_i^*=X_i$. However, the bunching point is special in that multiple values of the selection variable $X_i^*$ occur simultaneously at the same treatment value $X_i=\bar{x}$. 

A separation between ``types'' and their value of a variable that exhibits bunching is a common feature of the bunching literature \citep{saez10, klevenwaseem, blomquist2021bunching, bertanha2024bunching,bertanha2023better,ccn_metrics,goff2020treatment,CCNT,pollinger}. The separation arises from constraints on individuals' choices that cause different types to all choose the common bunching point. The idea is that at the bunching point, selection breaks away from the dose, and observations with diverse selection values receive the exact same dosage amount. Thus, the bunching point affords the possibility of learning about the relationship between the selection variable and other variables without any interference from dose variation.

We write
\begin{equation}\label{eq:xbarmax}
  X_i=\max\{X_i^*,\bar{x}\},
\end{equation}
which fits the case of bunching at the left boundary of the support of $X_i$. The right boundary case also fits this description, by redefining the treatment to be $2\bar{x}-X_i$. Interior bunching resulting from kinks in the budget function can also be adapted to fit Equation \eqref{eq:xbarmax}, as we describe in Example \ref{ex:kink} (Section \ref{sec:xstar}).

In general, one can think of $X_i^*$ as an index summarizing all observable and unobservable individual characteristics that determine the treatment. We return to $X_i^*$ in Section \ref{sec:xstar}, where we examine its possible economic interpretation in terms of structural primitives or reduced form quantities in specific models, and discuss its (non-)uniqueness. To aid comprehension in the following sections, we provide here a brief intuitive discussion of $X_i^*$ in the context of our empirical application to maternal smoking. In that setting, it is natural to think that, although all nonsmoking mothers share a value $X_i=0$ of the treatment, they may differ in the intensity of their preference towards not smoking or other factors that influence their choice. For instance, if $\rho_i$ is a parameter that governs the person's relative preference towards smoking, there may exist a value $\bar{\rho}$ and a smooth function $h$ such that $X_i = h(\rho_i)$ when $\rho_i \ge \bar{\rho}$ (i.e. when the person's preference towards smoking is sufficiently high), and $X_i=0$ otherwise, where $P(\rho_i < \bar{\rho})>0$ (i.e. some mothers strictly prefer not to smoke). 

Intuitively, in our method $X^*_i$ allows us to track observations at $X_i=\bar{x}$ based on how selected they are relative to the observations near the bunching point on the positive side. The key assumptions about $X_i^*$ will be that the smoothness conditions from Section \ref{sec:approach}  can be extended to values of $X_i^*$ around $\bar{x},$ so that those observations away from the bunching point with $X_i$ near $\bar{x}$ are comparable to the observations with $X_i^*=\bar{x}$. This allows us to substitute the limit of the density of the selection above the bunching point in Theorem \ref{thm:extensivemargin} into the density of the selection at the bunching point. The following sections then show that the sign of the selection bias is identified (Section \ref{sec:thetaidentification}), and that the density may be obtained by a deconvolution of the density of the outcome near the bunching point from the density of the outcome at the bunching point (Sections \ref{sec:translating} and \ref{sec:densityidentification}). Finally, in Section \ref{sec:xstar} we discuss the nature of $X_i^*$ when it arises from choice models, how it may be artificially constructed under some conditions, and the invariance of the identification results to monotonic transformations of $X_i^*$. 

\subsection{Identifying the sign of the selection bias, \texorpdfstring{$\theta$}{theta}}\label{sec:thetaidentification}

We begin by extending the definition of $s(x)$ to leverage the notation $X_i^*,$ as $s(x):=\E[Y_i(\bar{x})|X_i^*=x]-\E[Y_i(\bar{x})|X_i^*=\bar{x}^+]$. Since $X_i=X_i^*$ when $X_i > \bar{x}$, this coincides with the function $s$ defined in Section \ref{sec:approach} for all $x > \bar{x}$. For $x\leq \bar{x},$ $X_i=\bar{x}$ but $s(x)$ may vary with the following restriction:

\begin{assumption} \label{as:smono} For any $x < \bar{x},$ $\text{sgn}(s(x))=-\lim_{x \downarrow \bar{x}} \text{sgn}(s(x))$.
\end{assumption}
Assumption \ref{as:smono} states that if $s(x)$ is increasing in a positive neighborhood around the bunching point, then $s(x)<0$ for all values of $x<\bar{x}$. Conversely, if $s(x)$ is decreasing in a positive neighborhood around the bunching point, then $s(x)>0$ for all values of $x< \bar{x}$. Specifically, either $\E[Y_i(\bar{x})|X_i^*=x']<\E[Y_i(\bar{x})|X_i^*=\bar{x}^+]<\E[Y_i(\bar{x})|X_i^*=x'']$ for all $x'< \bar{x}$ and all $x''>\bar{x}$ in a neighborhood right above the bunching point, or the reverse ordering is true. Intuitively, the selection function $\E[Y_i(\bar{x})|X_i^*=x]$ remains uniformly either above or below its value at the bunching point $\E[Y_i(\bar{x})|X_i^*=\bar{x}]$, for all $x$ to the left of the bunching point, and is unrestricted to the right of the bunching point. 

In the smoking example, suppose that among mothers smoking positive amounts, those who smoke more are negatively selected relative to those who smoke less (i.e., smoking more is associated with worse untreated outcomes). Then, Assumption \ref{as:smono} states that the nonsmoking mothers would have even higher birth weights. This needs to hold only on average for every selection value, so  some individual nonsmoking mothers could have lower birth weights than some smoking mothers.

Assumption \ref{as:smono} holds trivially if $x\mapsto\E[Y_i(\bar{x})|X_i^*=x]$ is monotonic, in which case $s(x)$ is also monotonic. However, the assumption is weaker than monotonicity. Figure \ref{fig:squigglyincreasing} illustrates two examples of non-monotonic functions that satisfy Assumption \ref{as:smono}, one case with $s'(\bar{x})>0,$
and another with $s'(\bar{x})<0,$ respectively. Note that $s(\bar{x}^+)=0$ by definition, and Assumption \ref{as:smono} does not constrain the behavior of $s(x)$ for positive $x,$ outside of a neighborhood of the bunching point.

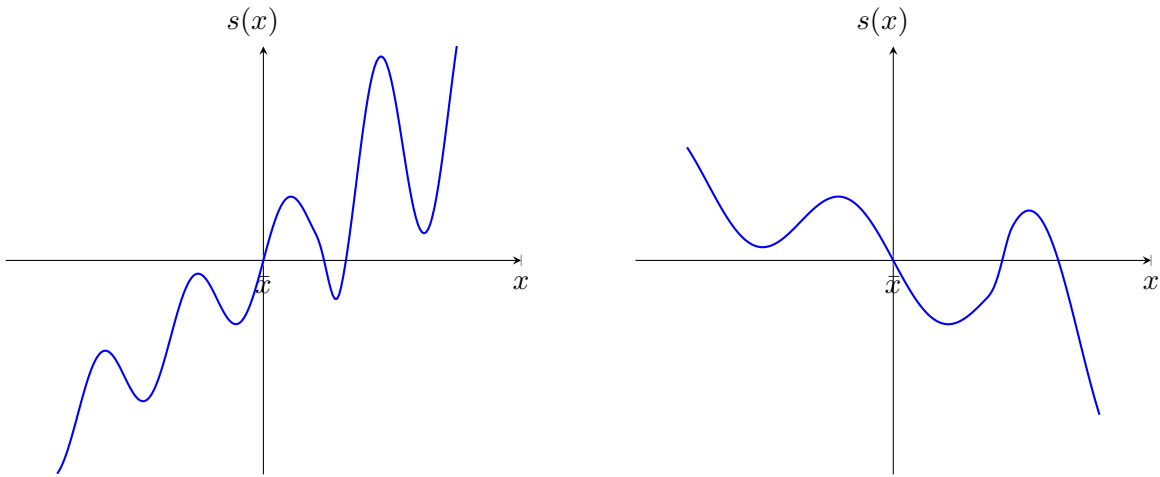
\begin{figure}[htp!]
\begin{center}
    \caption{Examples of functions $s$ that satisfy Assumption \ref{as:smono} but are not monotonic.} \label{fig:squigglyincreasing}
    \begin{tikzpicture}[scale = 1]
			\begin{axis}[
                samples=200,
			axis x line=middle,
			axis y line=middle,
                %y axis line style={opacity=0},
			xmin=-5, xmax=5, 
			ymin=-5, ymax=5,
			ylabel={$s(x)$},
                ylabel style={
                at={(axis description cs:0.5,1)},
                anchor=south,
                xshift=-4pt,
                },            
			ytick=\empty,
			xtick=\empty,
			extra x ticks={0,5},
			extra x tick labels={$\bar{x}$,$x$},
			extra y ticks={\empty},
			extra y tick labels={\empty},
			]
                \pgfmathsetmacro{\deltaval}{0.5}
			\addplot[blue,thick,domain=-4:4]
      expression{(1 + 1.5 * smoothstep(clamp((x-1)/\deltaval,0,1)))* sin(200*x)+ x};      
      %{sin(200*x)*(1+1.5*(x>1))+x};
			\end{axis}		
    \end{tikzpicture} \quad \quad \quad  
    \begin{tikzpicture}[scale = 1]
			\begin{axis}[
                samples=200,
			axis x line=middle,
			axis y line=middle,
                %y axis line style={opacity=0},
			xmin=-5, xmax=5, 
			ymin=-5, ymax=5,
			ylabel={$s(x)$},
                ylabel style={
                at={(axis description cs:0.5,1)},
                anchor=south,
                xshift=-4pt,
                },
			ytick=\empty,
			xtick=\empty,
			extra x ticks={0,5},
			extra x tick labels={$\bar{x}$,$x$},
			extra y ticks={\empty},
			extra y tick labels={\empty},
			]
                \pgfmathsetmacro{\deltaval}{0.5}
			\addplot[blue,thick,domain=-4:4]
            expression{-sin(100*x)* (1 + 1.5 * smoothstep(clamp((x-1.8)/\deltaval,0,1)))- 0.5*x}; 
      %{-sin(100*x)*(1+1.5*(x>1.8))-.5*x};
			\end{axis}		
    \end{tikzpicture}
\end{center}
Note: Each panel shows an example of a selection function $s(x)$ that satisfies Assumption \ref{as:smono} but is not monotonic in $x$. Note that in both cases $s(\bar{x}^+)=0$, which is true by definition.
\end{figure}

The following lemma establishes that the sign of the selection bias is equal to the sign of the discontinuity of the expected outcome at the bunching point. 
\begin{lemma}\label{lem:theta} If Assumptions \ref{as:continuoussupport} and \ref{as:smono} hold, then $\theta$ is identified as
$$\theta=\text{sgn}\left(\E[Y_i|X_i=\bar{x}^+]-\E[Y_i|X_i=\bar{x}]\right).$$
\end{lemma}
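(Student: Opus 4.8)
The plan is to connect the observable discontinuity $\E[Y_i|X_i=\bar{x}^+]-\E[Y_i|X_i=\bar{x}]$ to the unobservable selection quantity whose sign is $\theta$. The key observation is that the outcomes at the bunching point $X_i=\bar{x}$ are $Y_i=Y_i(\bar{x})$ by construction (since $X_i=\bar{x}$ means the dose is $\bar{x}$), so $\E[Y_i|X_i=\bar{x}] = \E[Y_i(\bar{x})|X_i=\bar{x}]$. The right-limit term $\E[Y_i|X_i=\bar{x}^+]$ equals $\E[Y_i(\bar{x})|X_i^*=\bar{x}^+]$ by Assumption \ref{as:continuoussupport}(iv), which gives $\text{ATT}(\bar{x}^+)=0$ and hence $\E[Y_i|X_i=\bar{x}^+]=\E[Y_i(\bar{x})|X_i=\bar{x}^+]=\E[Y_i(\bar{x})|X_i^*=\bar{x}^+]$ (using $X_i=X_i^*$ above $\bar{x}$). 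Therefore the observable discontinuity equals $\E[Y_i(\bar{x})|X_i^*=\bar{x}^+] - \E[Y_i(\bar{x})|X_i^*=\bar{x}]$.

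Next I would rewrite the second term by conditioning on the value of $X_i^*$ within the event $\{X_i=\bar{x}\} = \{X_i^* \le \bar{x}\}$. By the law of iterated expectations,
\[
\E[Y_i(\bar{x})|X_i=\bar{x}] = \E\big[\,\E[Y_i(\bar{x})|X_i^*]\,\big|\,X_i^* \le \bar{x}\,\big].
\]
Using the extended definition $s(x) = \E[Y_i(\bar{x})|X_i^*=x] - \E[Y_i(\bar{x})|X_i^*=\bar{x}^+]$, this is $\E[Y_i(\bar{x})|X_i^*=\bar{x}^+] + \E[s(X_i^*)\mid X_i^* \le \bar{x}]$. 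Hence the observable discontinuity equals $-\,\E[s(X_i^*)\mid X_i^* \le \bar{x}]$, so it remains to show $\text{sgn}\big(\E[s(X_i^*)\mid X_i^*\le\bar{x}]\big) = -\theta$, i.e. $\text{sgn}\big(\E[s(X_i^*)\mid X_i^*\le\bar{x}]\big) = \text{sgn}(-s'(\bar{x}^+))$.

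This is where Assumption \ref{as:smono} does the work: for every $x<\bar{x}$ we have $\text{sgn}(s(x)) = -\lim_{x\downarrow\bar{x}}\text{sgn}(s(x)) = -\theta$ (the equality $\theta = \lim_{x\downarrow\bar{x}}\text{sgn}(s(x)) = \text{sgn}(s'(\bar{x}^+))$ holds because $s(\bar{x}^+)=0$ and $s$ is locally monotone above $\bar{x}$ with nonzero derivative by Assumption \ref{as:continuoussupport}(iii), as established in Theorem \ref{thm:changeofvariables}). So $s(X_i^*)$ has constant sign $-\theta$ on the event $\{X_i^*<\bar{x}\}$, which has positive probability (it equals $P(X_i=\bar{x})>0$, the bunching mass; the boundary point $X_i^*=\bar{x}$ carries no mass since $X_i^*$ is continuously distributed there). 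A random variable that is almost surely of one sign and not almost surely zero has conditional expectation of that same sign, giving $\text{sgn}\big(\E[s(X_i^*)\mid X_i^*\le\bar{x}]\big) = -\theta$, and therefore $\theta = \text{sgn}\big(\E[Y_i|X_i=\bar{x}^+]-\E[Y_i|X_i=\bar{x}]\big)$.

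The main obstacle I anticipate is bookkeeping around the sign convention and the degenerate/measure-zero cases: the paper's $\text{sgn}$ is defined so that $\text{sgn}(0)=0$, and one must verify that $s(X_i^*)$ is not identically zero on the conditioning event (which follows since $|s(x)|>0$ strictly for $x$ in a left neighborhood of $\bar{x}$ under Assumption \ref{as:smono} combined with $s'(\bar{x}^+)\neq 0$ — strictly, Assumption \ref{as:smono} only gives a sign, so one should note that $s$ being of one strict sign on all of $(-\infty,\bar{x})\cap\text{supp}$ plus continuity forces $\E[s(X_i^*)\mid X_i^*<\bar{x}]$ to be strictly of that sign), and that the right-limit manipulations using $X_i=X_i^*$ for $x>\bar{x}$ and the identification of $\E[Y_i|X_i=\bar{x}^+]$ with $\E[Y_i(\bar{x})|X_i^*=\bar{x}^+]$ are legitimate — the latter is exactly the content of Assumption \ref{as:continuoussupport}(iv). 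Everything else is a direct application of iterated expectations.
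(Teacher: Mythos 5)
Your proposal is correct and follows essentially the same route as the paper's own proof: identify $\E[Y_i|X_i=\bar{x}^+]$ with $\E[Y_i(\bar{x})|X_i^*=\bar{x}^+]$ via $\text{ATT}(\bar{x}^+)=0$, apply iterated expectations over $X_i^*$ on the event $\{X_i^*\le\bar{x}\}$ to get that the observable discontinuity equals $-\E[s(X_i^*)\mid X_i^*\le\bar{x}]$, and then invoke Assumption \ref{as:smono} to sign that conditional expectation. Your added care about the measure-zero boundary point and the strictness of the sign is a slight refinement of, not a departure from, the paper's argument.
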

\noindent Lemma \ref{lem:theta} relates to \citet{caetano2015}'s test of exogeneity, which is based on the discontinuity of the outcome at the bunching point. If the sign of the discontinuity is not zero, then the test rejects the exogeneity of $X_i$. However, in our setting, we can do more than just test exogeneity, as the same discontinuity also allows us to sign the selection bias within an interval of $\bar{x}$.

The intuition of Lemma \ref{lem:theta} can be obtained from Figure \ref{fig:squigglyincreasing}. Suppose that we observe a positive discontinuity of the outcome at the bunching point. Then,  $\E[Y_i(\bar{x})|X_i^*=x]$ must be below $\E[Y_i(\bar{x})|X_i^*=\bar{x}^+]$ for at least some $x<0$. By Assumption \ref{as:smono}, we then know that all $\E[Y_i(\bar{x})|X_i^*=x]$ must be below $\E[Y_i(\bar{x})|X_i^*=\bar{x}^+]$ for all $x<\bar{x},$ so $s(x)<0$ for all $x<\bar{x}$. We must therefore be in a situation akin to the left plot. It follows that $\E[Y_i(\bar{x})|X_i^*=x]$ for $x$ slightly above $\bar{x}$ are all above $\E[Y_i(\bar{x})|X_i^*=\bar{x}^+],$ and thus $s'(\bar{x}^+)>0$. 

\subsection{Translating the identification problem into the bunching point}\label{sec:translating}

We next concern ourselves with the elimination of the unknown interval $I$ from the formula for $\text{AME}_{\bar{x}}^{+}$ in Theorem \ref{thm:extensivemargin}. Precisely, we need to substitute the density $f_{s(X)|I}$ with the density $f_{s(X^*)|X=\bar{x}}.$ In the following section, we then show that $f_{s(X^*)|X=\bar{x}}$ may be identified.

\begin{assumption}\label{as:bunchingcontinuousnew1} The following hold:
\begin{enumerate}[(i)]
    \item $f_{s(X^*)}(v)$ exists for all $v \in s\left((-\infty,\bar{x}] \cup I\right)$, and is right-continuous in $v$ at $0$. 
    \item The function $x\mapsto \E[Y_i(\bar{x})|X^*_i=x]$ is right-continuous in $x$ at $\bar{x}$.
\end{enumerate}
\end{assumption}

In Theorem \ref{thm:changeofvariables}, we established the existence of $f_{s(X^*)}(v)$ on $I$. Part (i) of Assumption \ref{as:bunchingcontinuousnew1} extends this result, requiring that the density exists also to the left of the bunching point. It is sufficient (but not necessary) for Assumption \ref{as:bunchingcontinuousnew1} (i) that $x\mapsto \E[Y_i(\bar{x})|X_i^*=x]$ is monotonic and $X_i^*$ has a density for $(-\infty,\bar{x}]$. More generally, if $X_i^*$ has a density on $(-\infty,\bar{x}]$, then part (i) holds provided that the set of $x$ such that $s(x)=t$ has Lebesgue measure zero, for any $t \in s^{-1}((-\infty,\bar{x}])$. Thus part (i) could be thought of as a consequence of $X_i^*$ being continuously distributed on the left side of $\bar{x}$, requiring no restrictive assumptions on selection. We state condition (i) above because it is weaker than this.

Item (ii) of Assumption \ref{as:bunchingcontinuousnew1} implies that the observations with $X_i^*=\bar{x}$ are comparable on average to the observations with $X_i^*=\bar{x}^+$. In the smoking example, it is equivalent to saying that if the mothers who smoke very little (i.e. those with $\rho_i$ slightly larger than $\bar{\rho}$) were to stop smoking, their outcomes would be very similar to the outcomes of nonsmoking mothers who are indifferent between not smoking and smoking a bit (i.e., those with $\rho_i=\bar{\rho}$). Note that the claim is not that mothers near the bunching point are comparable to mothers \textit{at} the bunching point, since the mothers at the bunching point can also include those who strictly prefer not to smoke.

Assumption \ref{as:bunchingcontinuousnew1} (ii)   states that $\E[Y_i(\bar{x})|X_i^*=\bar{x}] = \E[Y_i(\bar{x})|X_i^*=\bar{x}^+],$ and thus it implies that $s(\bar{x})=0$. Moreover, it allows us to extend part (iii) of Assumption \ref{as:smoothcounterfactual} to the interval $[\bar{x},\bar{x}+\varepsilon_3),$ and note that, since $\lim_{x \downarrow \bar{x}} \frac{d}{dx}\E[Y_i(\bar{x})|X_i=x] \ne 0,$ we have $s'(\bar{x})\ne 0$. We then obtain the following extension of Theorem \ref{thm:extensivemargin}:
\begin{theorem} \label{cor:AMEbunching} If Assumptions \ref{as:continuoussupport}-\ref{as:bunchingcontinuousnew1} hold, then
\begin{equation} \label{eq:AMEbunching}
\text{AME}_{\bar{x}}^{+} = m'(\bar{x}^+)-\theta \cdot\frac{ f_{X}(\bar{x}^+)/F_X(\bar{x})}{f_{s(X^*)|X=\bar{x}}(0)}.
\end{equation}
\end{theorem}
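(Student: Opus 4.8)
The plan is to start from the formula in Theorem \ref{thm:extensivemargin}, namely $\text{AME}_{\bar{x}}^{+} = m'(\bar{x}^+) - \theta \cdot f_{X|I}(\bar{x}^+)/f_{s(X)|I}(s(\bar{x}^+))$, and to replace the two conditional densities (conditioned on the unknown interval $I$) with objects conditioned on the bunching point $X_i = \bar{x}$. The first observation is purely definitional: $f_{X|I}(\bar{x}^+) = \lim_{x \downarrow \bar{x}} f_X(x)/P(X_i \in I) = f_X(\bar{x}^+)/P(\bar{x} < X_i < \bar{x}+\varepsilon)$, using the notational convention $f_{V|S}(v) = f_V(v)/P(S)$ laid out in the preliminaries, together with part (i) of Assumption \ref{as:continuoussupport} for the existence and positivity of $f_X(\bar{x}^+)$. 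Analogously, $f_{s(X)|I}(s(\bar{x}^+)) = f_{s(X)}(s(\bar{x}^+))/P(X_i \in I)$, where this last equality requires knowing that the event $\{s(X_i) = s(\bar{x}^+)\}$ does not have positive probability mass concentrated inside $I$ in a way that breaks the conditioning identity — but local monotonicity of $s$ on $I$ (established in Theorem \ref{thm:changeofvariables}) ensures $s$ is injective on $I$, so conditioning on $s(X) \in s(I)$ is the same as conditioning on $X \in I$. Dividing, the common factor $P(X_i \in I)$ cancels, so that $f_{X|I}(\bar{x}^+)/f_{s(X)|I}(s(\bar{x}^+)) = f_X(\bar{x}^+)/f_{s(X)}(s(\bar{x}^+))$, and crucially this no longer references $I$ — but it still references the \emph{unconditional} density $f_{s(X)}$ evaluated at $s(\bar{x}^+)$, and it implicitly uses the $X_i^*$-based definition of $s$ only for $x > \bar{x}$.

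The second and more substantive step is to rewrite $f_{s(X)}(s(\bar{x}^+))$, the right-limit of the density of $s(X_i)$ over the positive side, in terms of $f_{s(X^*)|X=\bar{x}}(0)$. Here is where Assumption \ref{as:bunchingcontinuousnew1} does the work. Part (ii) gives $s(\bar{x}) = 0$ and extends part (iii) of Assumption \ref{as:smoothcounterfactual} so that $s$ is differentiable with $s'(\bar{x}) \ne 0$ on a right-neighborhood including $\bar{x}$ itself; hence $s(\bar{x}^+) = s(\bar{x}) = 0$, and the density we need is $f_{s(X)}$ evaluated at $0$. Part (i) of Assumption \ref{as:bunchingcontinuousnew1} guarantees $f_{s(X^*)}(v)$ exists for $v$ in a neighborhood of $0$ (covering both $s((-\infty,\bar{x}])$ and $s(I)$) and is right-continuous at $0$. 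Now use $X_i = \max\{X_i^*, \bar{x}\}$ from \eqref{eq:xbarmax}: on the event $X_i^* > \bar{x}$ we have $X_i = X_i^*$, so the part of the distribution of $s(X_i)$ coming from positive-side observations agrees with the restriction of the distribution of $s(X_i^*)$ to $X_i^* > \bar{x}$; while the event $X_i = \bar{x}$ corresponds exactly to $X_i^* \le \bar{x}$, which has probability $F_X(\bar{x}) = P(X_i^* \le \bar{x}) > 0$. Taking the limit from the right at $0$ of the density of $s(X_i)$, only the positive-side piece contributes to values strictly above $0$ if $s'(\bar{x}^+) > 0$ (or strictly below if $s'(\bar{x}^+) < 0$), so one-sided continuity of $f_{s(X^*)}$ at $0$ lets us identify $f_{s(X)}(s(\bar{x}^+))$ with $f_{s(X^*)}(0)$ — \emph{this} is the step translating "just above the bunching point" into "at the bunching point." Finally, conditioning on $X_i = \bar{x}$, i.e. on $X_i^* \le \bar{x}$, and using the continuity/positivity pieces, $f_{s(X^*)}(0) = f_{s(X^*)|X_i^* \le \bar{x}}(0) \cdot P(X_i^* \le \bar{x}) + (\text{contribution from } X_i^* > \bar{x})$; evaluated as a one-sided limit at $0$ with $s(\bar{x})=0$ a boundary value of $s((-\infty,\bar{x}])$, the bookkeeping yields $f_{s(X^*)}(0) = f_{s(X^*)|X=\bar{x}}(0) \cdot F_X(\bar{x})$ in the limit. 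Substituting back gives $f_X(\bar{x}^+)/f_{s(X)}(s(\bar{x}^+)) = \big(f_X(\bar{x}^+)/F_X(\bar{x})\big)/f_{s(X^*)|X=\bar{x}}(0)$, which is exactly \eqref{eq:AMEbunching}.

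The main obstacle I anticipate is making the density-bookkeeping in the second step rigorous: precisely decomposing the distribution of $s(X_i)$ (resp. $s(X_i^*)$) into its mass below $0$ (from $X_i^* \le \bar{x}$, which sits entirely at or below $0$ because $s \le 0$ there when $\theta = 1$, by Assumption \ref{as:smono} and part (ii)) and its mass above $0$ (from $X_i^* > \bar{x}$), and then arguing that the \emph{right}-limit of $f_{s(X)}$ at $0$ equals the right-limit of $f_{s(X^*)}$ at $0$, which in turn — because $s$ maps $(\bar{x}, \bar{x}+\varepsilon)$ diffeomorphically onto a one-sided neighborhood of $0$ and is unconstrained by the $X_i^* \le \bar{x}$ mass on that side — equals $F_X(\bar{x})^{-1}$ times $f_{s(X^*)|X=\bar{x}}(0)$ only after one correctly handles whether $f_{s(X^*)|X=\bar{x}}$ refers to a one- or two-sided density at the boundary value $0$. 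The sign $\theta$ has to be tracked throughout so that "right-limit" is the side actually populated by the positive-side observations; Assumption \ref{as:smono} is what guarantees the $X_i^* \le \bar{x}$ mass lies strictly on the opposite side of $0$, so it does not interfere with the one-sided limit. Everything else — the cancellation of $P(X_i \in I)$, the identification of $\theta$ by Lemma \ref{lem:theta}, and the passage $s(\bar{x}^+) = s(\bar{x}) = 0$ — is routine once the notational conventions for conditional densities are unwound carefully.
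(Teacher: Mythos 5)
Your proposal is correct and follows essentially the same route as the paper's own proof under the stated (weaker) assumptions: start from Theorem \ref{thm:extensivemargin}, cancel the conditioning on the unknown interval, use Assumption \ref{as:smono} to place the bunched mass ($X_i^*\le\bar{x}$) strictly on the opposite side of $0$ from the mass coming from $I$, and then let the two-sided existence of $f_{s(X^*)}(0)$ (plus its right-continuity, Assumption \ref{as:bunchingcontinuousnew1}(i)) equate the right-derivative of the CDF at $0$ (which gives $f_X(\bar{x}^+)/|s'(\bar{x}^+)|$) with the left-derivative (which gives $F_X(\bar{x})\cdot f_{s(X^*)|X=\bar{x}}(0)$). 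The one sentence that literally reads ``$f_{s(X^*)}(0)=f_{s(X^*)|X=\bar{x}}(0)\cdot F_X(\bar{x})$ with the $X_i^*>\bar{x}$ contribution dropping out'' should be understood as the identification of the two-sided density with the \emph{left}-sided derivative of the CDF — exactly the step the paper carries out — rather than a claim that the positive-side contribution is zero (it is not; it is the entire right-derivative).
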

\noindent Equation \eqref{eq:AMEbunching} shows that, under Assumptions \ref{as:continuoussupport}-\ref{as:bunchingcontinuousnew1}, identifying $\text{AME}_{\bar{x}}^{+}$ reduces to the problem of identifying $\theta$ and $f_{s(X^*)|X= \bar{x}}(0).$ Since $\theta$ is identified by Lemma \ref{lem:theta} in Section \ref{sec:thetaidentification}, the only remaining piece is the identification of the density of $s(X_i^*)$ at the bunching point, which we tackle in the following section.

Figure \ref{fig: intuition change of variables section 3} illustrates how we apply the change-of-variables theorem in Theorem \ref{cor:AMEbunching}. The left panel is identical to the left panel in Figure \ref{fig: intuition change of variables}, except that the densities are with respect to the selection variable $X^*_i,$ and dashed lines depict the parts of these densities that are not identifiable. The densities can \textit{both} be identified only at $x=\bar{x}$. The right panel shows how identification is achieved. The red line is  $f_{s(X^*)|X=\bar{x}},$ which on $(-\infty,\bar{x}]$ corresponds to the solid red line in the left panel divided by the probability of bunching, $F_X(\bar{x})$. Equivalently, we rescale the solid part of the blue density in the left panel by the same amount.

\begin{figure}[H]
\begin{center}
\caption{Using Theorem \ref{cor:AMEbunching} to identify $|s'(\bar{x})|$}
\label{fig: intuition change of variables section 3}
\vspace{.1in}
%—— Panel 1: High Selection ——
\begin{tikzpicture}[scale=1.2, transform shape]
  \begin{axis}[
      width=0.45\textwidth, height=5cm,
      domain=-4:4, samples=200,
      axis lines=left,
      xlabel={$\bar{x}$}, ylabel={\scriptsize{Density}},
      xtick=\empty,  ytick=\empty,
      xmin=-4, xmax=4, ymin=0, ymax=1.2,
      title={\footnotesize{Unconditional Densities}},
      title style={
        at={(axis description cs:0.5,0)},
        anchor=north,
        yshift=-2em
      },
      legend pos=north east,
      legend style={font=\footnotesize},
      legend image code/.code={},
    ]
    % Blue curve: f_{X^*}(x) = φ(x+1)
    \addplot[dashed,blue,thick,domain=-4:0]
      {exp(-((x+1)^2)/2)/sqrt(2*pi)};
    \addplot[blue,thick,domain=0:4]
      {exp(-((x+1)^2)/2)/sqrt(2*pi)};
    \addplot[blue!30,fill=blue!30,draw=none,fill opacity=0.4,domain=-4:0]
      {exp(-((x+1)^2)/2)/sqrt(2*pi)} \closedcycle;

    % Red curve: f_S(x) = 0.5 * φ(x/2 + 1)
    \addplot[red!80!black,thick,domain=-4:0]
      {0.5*exp(-((x/2+1)^2)/2)/sqrt(2*pi)};
    \addplot[dashed,red!80!black,thick,domain=0:4]
      {0.5*exp(-((x/2+1)^2)/2)/sqrt(2*pi)};
    \addplot[red!30,fill=red!30,draw=none,fill opacity=0.4,domain=-4:0]
      {0.5*exp(-((x/2+1)^2)/2)/sqrt(2*pi)} \closedcycle;

    \draw[dotted] (axis cs:0,0) -- (axis cs:0,0.4);
    \node at (0.3,0.8) {\tiny{${\color{blue}f_{X}(\bar{x}^{+})}=2{\color{red!80!black}f_{s(X^*)}(0)}$}};
    \draw[->, blue, thick]
    (-0.06,0.3) -- (-.8,0.7);
    \draw[->, red!80!black, thick]
    (0.1,0.16) -- (1.1,0.7);

    \addplot[blue,only marks,mark=*,mark size=2pt]
      coordinates {(0,{exp(-((0+1)^2)/2)/sqrt(2*pi)})};
    \addplot[red!80!black,only marks,mark=*,mark size=2pt]
      coordinates {(0,{0.5*exp(-((0/2+1)^2)/2)/sqrt(2*pi)})};
  \end{axis}
\end{tikzpicture}
\quad
%—— Panel 2: Rescaled conditional densities ——
\begin{tikzpicture}[scale=1.2, transform shape]
  \begin{axis}[
      width=0.45\textwidth, height=5cm,
      domain=-4:4, samples=200,
      axis lines=left,
      xlabel={$\bar{x}$}, ylabel={\scriptsize{Density}},
      xtick=\empty,  ytick=\empty,
      xmin=-4, xmax=4, ymin=0, ymax=.8,
      title={\footnotesize{Conditional Densities}},
      title style={
        at={(axis description cs:0.5,0)},
        anchor=north,
        yshift=-2em
      },
      legend pos=north east,
      legend style={font=\footnotesize},
      legend image code/.code={},
    ]
    % Blue curve: f_{X^*}(x) = φ(x+1)
    \addplot[blue,thick,domain=0:4]
      {exp(-((x+1)^2)/2)/sqrt(2*pi)};
    % Red curve: f_S(x) = 0.5 * φ(x/2 + 1)
    \addplot[red!80!black,thick,domain=-4:0]
      {0.5*exp(-((x/2+1)^2)/2)/sqrt(2*pi)};
    \draw[dotted] (axis cs:0,0) -- (axis cs:0,0.4);
    \node at (0.3,0.6) {\tiny{${\color{blue}f_{X}(\bar{x}^{+})/F_X(\bar{x})}=2{\color{red!80!black}f_{s(X^*)|X=\bar{x}}(0)}$}};
    \draw[->, blue, thick]
    (-0.085,0.28) -- (-.8,0.5);
    \draw[->, red!80!black, thick]
    (0.1,0.16) -- (1.1,0.5);

    \addplot[blue,only marks,mark=*,mark size=2pt]
      coordinates {(0,{exp(-((0+1)^2)/2)/sqrt(2*pi)})};
    \addplot[red!80!black,only marks,mark=*,mark size=2pt]
      coordinates {(0,{0.5*exp(-((0/2+1)^2)/2)/sqrt(2*pi)})};
  \end{axis}
\end{tikzpicture}
\end{center}
Note: The left panel is identical to the left panel from Figure \ref{fig: intuition change of variables}, except that it now refers to $X_i^*.$ Dashed lines show the unidentifiable parts of the distributions. Note that $\bar{x}$ is the only point at which both densities are identifiable. The right panel rescales the identifiable parts of the left-panel densities by $1/F_X(\bar{x}),$ so the red line shows the density conditional on $X^*_i\leq \bar{x}$ (which is equivalent to conditioning on $X_i=\bar{x}$). In this illustration, $\bar{x}=0,$ as in our application, and therefore $s(\bar{x}^+)=\bar{x}$.
\end{figure}
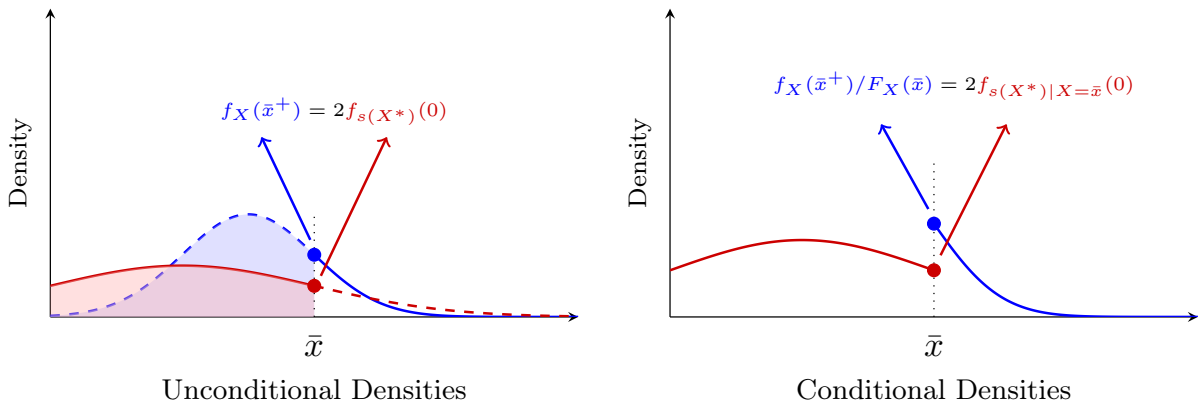

\subsection{Identifying the distribution of \texorpdfstring{$s(X^*)|X= \bar{x}$}{}}\label{sec:densityidentification}
Finally, we turn to the identification of $f_{s(X^*)|X=\bar{x}}(0)$. Define the random variable
$$\epsilon_i = Y_i - \mathbbm{E}[Y_i|X^*_i]=Y_i-\text{ATT}(X_i)+\E[Y_i(\bar{x})|X_i^*],$$
which is the unconfounded variation in the outcome, i.e. the idiosyncratic part of the outcome that remains after we eliminate the mean effect of treatment and the part of the outcome determined by selection. Note that for $X_i>\bar{x},$ $\epsilon_i=Y_i - \mathbbm{E}[Y_i|X_i]$ is identified.

We can write with probability one that $Y_i=y_{\bar{x}}+\text{ATT}(X_i)+s(X_i^*)+\epsilon_i,$ where the constant $y_{\bar{x}}:=\mathbbm{E}[Y_i(\bar{x})|X_i^*=\bar{x}]$ (using that $s(X_i^*)=\mathbbm{E}[Y_i(\bar{x})|X^*_i]-y_{\bar{x}}$ by Assumption \ref{as:bunchingcontinuousnew1}). This shows that when $X_i > \bar{x}$, there is variation in $Y_i$ coming both from $\text{ATT}(X_i)$ and from $s(X_i^*)$. However, exactly at $X_i=\bar{x},$ $\text{ATT}(\bar{x})=0,$ and thus
\begin{equation}\label{eq:equationzero}
    Y_i = y_{\bar{x}}+ s(X_i^*) + \epsilon_i
\end{equation}
with probability one. Thus, the outcome variation at the bunching point reflects the variation of $s(X_i^*)$ unconfounded by the variation of $\text{ATT}(X_i)$. Unfortunately, the distribution of the outcome at the bunching is a convolution of the density we want to identify, $f_{s(X^*)|X=\bar{x}},$ and the distribution of the idiosyncratic remainder, $y_{\bar{x}}+\epsilon_i$ (i.e., $f_{Y|X=\bar{x}}=f_{s(X^*)|X=\bar{x}}\ast f_{y_{\bar{x}}+\epsilon|X=\bar{x}}$). The following conditions allow us to deconvolve these distributions.

\begin{assumption}\label{as:deconvregularity} The following hold:
\begin{enumerate}[(i)]
    \item $f_{Y|X=\bar{x}}$ exists.
    \item $\epsilon_i|X^*_i=x\rightarrow_{d}\epsilon_i|X_i^*=\bar{x}$ as $x\downarrow \bar{x}.$ 
    \item $\epsilon_i \indep X_i^*| X_i=\bar{x}$.
\end{enumerate}
\end{assumption}
Item (i) of Assumption \ref{as:deconvregularity} states that the outcome has a density at the bunching point. Item (ii) says that the idiosyncratic variation in the outcome near the bunching point is distributed similarly to the idiosyncratic variation in the outcome of those at the bunching point with $X_i^*=\bar{x}$. If $Y|X=x$ also has a density for all $x$ near the bunching point, then item (ii) of Assumption \ref{as:deconvregularity} is equivalent to $Y_i(\bar{x})|X^*_i=x\rightarrow_{d}Y_i(\bar{x})|X_i^*=\bar{x}$ as $x\downarrow \bar{x}$, i.e. that the distribution of $Y_i(\bar{x})$ does not change abruptly as $X_i^*$ approaches $\bar{x}$.

Note that $\epsilon_i=Y_i-\E[Y_i|X_i^*]$ is mean independent of $X_i^*$ by construction. Part (iii) of Assumption \ref{as:deconvregularity} extends this into full independence, at least at the bunching point. In our application, this condition says that, among the nonsmoking mothers, after removing the mean of the birth weight that is due to the selection variable $X_i^*$, the remainder is independent of the relative preference for smoking (or whatever else determines $X_i^*$). Part (iii) of Assumption \ref{as:deconvregularity} may be substituted with a weaker but less intuitive condition known as subindependence, which has long been used in the deconvolution literature (see discussion in e.g. \citealt{Hamedanisubindependence}), and was formalized in \citet{schennach2019convolution}. In our context, the relevant subindependence condition translates to: for all $t \in \mathbbm{R},$
$$\E[e^{\textbf{i}t(s(X_i^*)+\epsilon_i)}|X_i=\bar{x}]=\E[e^{\textbf{i}t s(X_i^*)}|X_i=\bar{x}]\cdot \E[e^{\textbf{i}t\epsilon_i}|X_i=\bar{x}],$$ where $\textbf{i}=\sqrt{-1}$. \citet{schennach2019convolution} shows that subindependence is no ``stronger'' than mean independence, in the sense that subindependence imposes the same number of restrictions on the data-generating process as mean independence does. Nevertheless, mean independence does not imply subindependence.

Item (iii) of Assumption \ref{as:deconvregularity} implies that $f_{y_{\bar{x}}+\epsilon|X=\bar{x}}=f_{y_{\bar{x}}+\epsilon|X^*=\bar{x}}.$  Item (ii) of Assumption \ref{as:deconvregularity} implies that 
$f_{y_{\bar{x}}+\epsilon|X^*=\bar{x}}=f_{Y|X=\bar{x}^+}.$ Therefore, we can write 
$f_{Y|X=\bar{x}}=f_{s(X^*)|X=\bar{x}}\ast f_{Y|X=\bar{x}^+},$ which is a standard convolution problem with a well-known closed form solution using the Fourier representation (see e.g. \citealt{schennachmeasurement}),  
\begin{align*} 
    f_{s(X^*)|X=\bar{x}}(v)&=\frac{1}{2\pi}\int \frac{\E[e^{\textbf{i}\xi Y_i}|X_i=\bar{x}]}{\E[e^{\textbf{i}\xi Y_i}|X_i=\bar{x}^+]}e^{-\textbf{i}\xi v} d\xi.
\end{align*}
Evaluating the density at zero yields the desired identification result.
\begin{lemma}\label{lem:u(D*)}
If Assumptions \ref{as:smoothcounterfactual}-\ref{as:deconvregularity} hold, then  $f_{s(X^*)|X=\bar{x}}(0)$ is identified as
\begin{equation*}
    f_{s(X^*)|X=\bar{x}}(0)=\frac{1}{2\pi}\int \frac{\E[e^{\bmi\xi Y_i}|X_i=\bar{x}]}{\E[e^{\bmi\xi Y_i}|X_i=\bar{x}^+]}d\xi,
\end{equation*}
where the integral on the right-hand side converges.
\end{lemma}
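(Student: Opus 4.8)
The plan is to establish the displayed formula in three moves: first reduce the distribution of $Y_i$ at the bunching point to a convolution, then invert that convolution via Fourier transforms, and finally justify evaluating the resulting density at zero together with the convergence of the stated integral. From Equation \eqref{eq:equationzero}, at $X_i=\bar{x}$ we have $Y_i = y_{\bar{x}} + s(X_i^*) + \epsilon_i$ with probability one, where $y_{\bar{x}} = \E[Y_i(\bar{x})|X_i^*=\bar{x}]$ is a constant. Part (iii) of Assumption \ref{as:deconvregularity} gives independence of $s(X_i^*)$ and $\epsilon_i$ conditional on $X_i=\bar{x}$ (or, weaker, the subindependence factorization of characteristic functions), so $f_{Y|X=\bar{x}} = f_{s(X^*)|X=\bar{x}} \ast f_{y_{\bar{x}}+\epsilon|X=\bar{x}}$. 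The two auxiliary reductions noted just before the lemma then identify the blurring density with an observable quantity: part (iii) of Assumption \ref{as:deconvregularity} yields $f_{y_{\bar{x}}+\epsilon|X=\bar{x}} = f_{y_{\bar{x}}+\epsilon|X^*=\bar{x}}$, and part (ii) (together with Assumption \ref{as:bunchingcontinuousnew1}(ii), which gives $y_{\bar{x}} = \E[Y_i(\bar{x})|X_i^*=\bar{x}^+]$, so that the shift constant matches) yields $f_{y_{\bar{x}}+\epsilon|X^*=\bar{x}} = f_{Y|X=\bar{x}^+}$. Hence $f_{Y|X=\bar{x}} = f_{s(X^*)|X=\bar{x}} \ast f_{Y|X=\bar{x}^+}$, a convolution of two identified densities with one unknown factor.

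Next I would pass to characteristic functions. Writing $\phi_{\bar{x}}(\xi) := \E[e^{\bmi \xi Y_i}|X_i=\bar{x}]$ and $\phi_{\bar{x}^+}(\xi) := \E[e^{\bmi \xi Y_i}|X_i=\bar{x}^+]$, the convolution becomes the pointwise product, so the characteristic function of $s(X^*)$ conditional on $X_i=\bar{x}$ equals $\phi_{\bar{x}}(\xi)/\phi_{\bar{x}^+}(\xi)$ wherever the denominator is nonzero. By Theorem \ref{cor:AMEbunching} (via Theorem \ref{thm:changeofvariables} and Assumption \ref{as:bunchingcontinuousnew1}(i)), $f_{s(X^*)|X=\bar{x}}$ exists and is right-continuous at $0$; combined with integrability of the ratio (the convergence claim), Fourier inversion gives $f_{s(X^*)|X=\bar{x}}(v) = \tfrac{1}{2\pi}\int \phi_{\bar{x}}(\xi)\phi_{\bar{x}^+}(\xi)^{-1} e^{-\bmi\xi v}\, d\xi$ for a.e. $v$, and at a point of continuity this holds with equality in the genuine pointwise sense; evaluating at $v=0$ kills the exponential factor and produces exactly the claimed expression.

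The step I expect to be the main obstacle is the final one: establishing that the integral $\int \phi_{\bar{x}}(\xi)/\phi_{\bar{x}^+}(\xi)\, d\xi$ converges and that Fourier inversion is legitimate. The density $f_{s(X^*)|X=\bar{x}}$ is only assumed to exist and be right-continuous at $0$, not to be integrable or to have an integrable characteristic function, so one cannot invoke the textbook inversion theorem directly; the convergence of the ratio integral is an additional hypothesis baked into the lemma's conclusion, and the proof must show it is compatible with the stated assumptions and that under it the inversion formula holds at $v=0$. I would handle this by noting that if $\phi_{\bar{x}}/\phi_{\bar{x}^+}$ is absolutely integrable, then $\tfrac{1}{2\pi}\int \phi_{\bar{x}}(\xi)\phi_{\bar{x}^+}(\xi)^{-1} e^{-\bmi\xi v}\,d\xi$ defines a bounded continuous function of $v$ whose Fourier transform is $\phi_{\bar{x}}/\phi_{\bar{x}^+}$; since that ratio is also the characteristic function of the (existing) density $f_{s(X^*)|X=\bar{x}}$, uniqueness of the Fourier transform forces the two to coincide as $L^1$ densities, hence everywhere they are both continuous — in particular at $0$, by the right-continuity hypothesis and continuity of the inversion integral. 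A secondary point worth spelling out is that $\phi_{\bar{x}^+}(\xi) \ne 0$ on the relevant domain, which follows because it is itself the characteristic function of the density $f_{Y|X=\bar{x}^+} = f_{y_{\bar{x}}+\epsilon|X^*=\bar{x}}$ appearing as a convolution factor of $f_{Y|X=\bar{x}}$ whose characteristic function $\phi_{\bar{x}}$ we have used; more carefully, on any interval where the product representation is used the denominator is nonvanishing because it divides a nonzero quantity, and one restricts attention to that set.
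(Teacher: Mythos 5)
Your proposal is correct and follows essentially the same route as the paper: use parts (ii) and (iii) of Assumption \ref{as:deconvregularity} to write $f_{Y|X=\bar{x}}$ as the convolution of $f_{s(X^*)|X=\bar{x}}$ with $f_{Y|X=\bar{x}^+}$, invert via the Fourier ratio, and evaluate at $v=0$. The paper works first at the level of CDFs and differentiates under the integral by dominated convergence, and it disposes of the shift $y_{\bar{x}}$ by a phase-factor cancellation rather than via Assumption \ref{as:bunchingcontinuousnew1}(ii), but these are cosmetic differences; if anything, your discussion of why inversion is valid at $v=0$ and why the denominator is nonvanishing is more explicit than the paper's.
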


Figure \ref{fig:deconvolution intuition} illustrates the components of the deconvolution.  The outcome distributions were specified as a sequence of normal distributions $f_{Y|X=x},$ depicted in solid blue. Since, for $X_i>0,$ $Y_i=\text{ATT}(X_i)+y_{\bar{x}}+\epsilon_i,$ the observable solid blue densities $f_{Y|X=x}$ converge to $f_{y_{\bar{x}}+\epsilon|X=\bar{x}^+}$ as $x\downarrow \bar{x}$. So, the dotted blue density $f_{Y|X=\bar{x}^+}=f_{y_{\bar{x}}+\epsilon|X=\bar{x}^+}.$ The dashed red line is the unobserved density of $s(X_i^*)$ at the bunching point, $f_{s(X^*)|X=\bar{x}}$, which is specified here as normal as well. The solid black line is the observed density of the outcome at the bunching point, which is the convolution of the red dashed density ($f_{s(X^*)|X=\bar{x}}$) and the blue dotted density ($f_{\epsilon|X=\bar{x}^+}$). Since the images were produced using the actual convolution of the depicted densities, the dimensions illustrate the real relationship between these densities. 

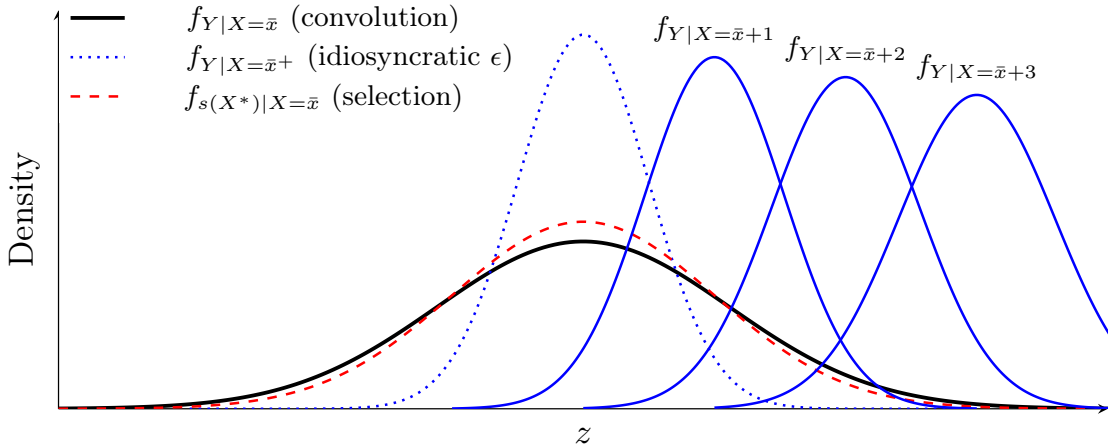
\begin{figure}[htp!]
\begin{center}
\caption{Components of the Deconvolution to Identify $f_{s(X^*)\mid X=\bar{x}}$}
\label{fig:deconvolution intuition}

\begin{tikzpicture}[scale=1.2, transform shape]
  \begin{axis}[
      width=0.8\textwidth, height=6cm,
      domain=-4:4, samples=200,
      axis lines=left,
      xlabel={$z$}, ylabel={Density},
      xmin=-4, xmax=4,
      ymin=0, ymax=0.85,
      xtick=\empty,
      ytick=\empty,
      scaled ticks=false,
      yticklabel style={/pgf/number format/fixed, /pgf/number format/precision=4},
      legend style={
        at={(axis cs:-4,0.75)},
        anchor=west,
        font=\footnotesize,
        draw=none
      },
    ]

    % 1) Convolution: σ ≈ sqrt(0.5^2 + 1^2) = 1.118
    \addplot[black, very thick, domain=-4:4, samples=200] 
      {exp(-x^2/(2*1.118^2)) / (1.118*sqrt(2*pi))};  
    \addlegendentry{$f_{Y\mid X=\bar{x}}$ (convolution)}

    % 2) Right‐limit error: σ = 0.5
    \addplot[blue, thick, dotted, domain=-3:3, samples=200]
      {exp(-x^2/(2*0.5^2)) / (0.5*sqrt(2*pi))};
    \addlegendentry{\hspace{.23in}$f_{Y\mid X=\bar{x}^+}$ (idiosyncratic $\epsilon$)}

    % 3) Selection density: σ = 1
    \addplot[red, thick, dashed, domain=-4:4, samples=200]
      {exp(-x^2/2) / sqrt(2*pi)};
    \addlegendentry{$f_{s(X^*)\mid X=\bar{x}}$ (selection)}

    % 4) x=1: μ = 1, σ = 0.532
    \addplot[blue, thick, domain=-1:3, samples=200]
      {exp(-(x - 1)^2 / (2 * 0.532^2)) / (0.532 * sqrt(2*pi))};

    % 5) x=2: μ = 2, σ = 0.564
    \addplot[blue, thick, domain=0:4, samples=200]
      {exp(-(x - 2)^2 / (2 * 0.564^2)) / (0.564 * sqrt(2*pi))};

    % 6) x=3: μ = 3, σ = 0.596
    \addplot[blue, thick, domain=1:5, samples=200]
      {exp(-(x - 3)^2 / (2 * 0.596^2)) / (0.596 * sqrt(2*pi))};

    % Labels at each mode
    \node[anchor=south,font=\footnotesize] 
      at (axis cs:1,{1/(0.532*sqrt(2*pi))}) {$f_{Y\mid X=\bar{x}+1}$};
    \node[anchor=south,font=\footnotesize] 
      at (axis cs:2,{1/(0.564*sqrt(2*pi))}) {$f_{Y\mid X=\bar{x}+2}$};
    \node[anchor=south,font=\footnotesize] 
      at (axis cs:3,{1/(0.596*sqrt(2*pi))}) {$f_{Y\mid X=\bar{x}+3}$};
  \end{axis}
\end{tikzpicture}
\end{center}
Note: The solid blue lines depict the observable densities of the outcome, which approaches the density of $y_{\bar{x}}+\epsilon|X_i=\bar{x}^+$ as $x\downarrow \bar{x},$ depicted by the dotted blue line. The dashed red line is the unobservable density of $s(X_i^*)|X_i=\bar{x},$ which is the object we want to identify. The solid black line is the observed density of $Y_i=X_i=\bar{x},$ which is the convolution of the dotted blue and the dashed red distributions. Plots were constructed with $\bar{x}=0,$  $Y_i|X_i=x\sim N(x, 0.5+0.032x),$ and $s(X_i^*)|X_i=\bar{x}\sim N(0, 1)$. The density $f_{Y| X=\bar{x}}$ was produced by convoluting the distributions of $Y_i|X_i=\bar{x}^+$ and $s(X^*_i)|X_i=\bar{x}$.
\end{figure}

Our final identification result is derived from the combination of  Equation \eqref{eq:AMEbunching} with Lemmas \ref{lem:theta} and \ref{lem:u(D*)}:
\begin{theorem} \label{thm:AMEid}
    Under Assumptions \ref{as:continuoussupport}-\ref{as:deconvregularity}, $\text{AME}_{\bar{x}}^+$ is identified as:
    \begin{equation}\label{eq:u'}
    \text{AME}_{\bar{x}}^+=\lim_{x\downarrow \bar{x}}\frac{d}{dx}\E[Y_i|X_i=x]-2\pi\theta \left(\int \frac{\E[e^{\textbf{i}\xi Y_i}|X_i=\bar{x}]}{\E[e^{\textbf{i}\xi Y_i}|X_i=\bar{x}^+]}d\xi\right)^{-1}\cdot \frac{f_X(\bar{x}^+)}{F_X(\bar{x})},
    \end{equation}
     where $\theta=\text{sgn}\left(\E[Y_i|X_i=\bar{x}^+]-\E[Y_i|X_i=\bar{x}]\right)$.
\end{theorem}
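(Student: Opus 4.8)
The plan is to obtain the statement by assembling, with no new work, the three ingredients that have already been established: Theorem \ref{cor:AMEbunching}, Lemma \ref{lem:theta}, and Lemma \ref{lem:u(D*)}. First I would record the elementary fact, already noted in the discussion surrounding Theorem \ref{prop:AMEDelta}, that since only the first term in the definition of $m(x)=\E[Y_i|X_i=x]-\E[Y_i|X_i=\bar{x}^+]$ depends on $x$, we have $m'(x)=\frac{d}{dx}\E[Y_i|X_i=x]$ wherever this derivative exists; combined with part (ii) of Assumption \ref{as:continuoussupport}, this gives $m'(\bar{x}^+)=\lim_{x\downarrow\bar{x}}\frac{d}{dx}\E[Y_i|X_i=x]$, and this limit exists.

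Next, because the present hypotheses (Assumptions \ref{as:continuoussupport}--\ref{as:deconvregularity}) subsume the hypotheses of Theorem \ref{cor:AMEbunching} (Assumptions \ref{as:continuoussupport}--\ref{as:bunchingcontinuousnew1}), I would invoke that theorem to write
\[
\text{AME}_{\bar{x}}^{+} = m'(\bar{x}^+) - \theta\cdot\frac{f_X(\bar{x}^+)/F_X(\bar{x})}{f_{s(X^*)|X=\bar{x}}(0)},
\]
where the denominator $f_{s(X^*)|X=\bar{x}}(0)$ is nonzero — this is part of the content of Theorem \ref{cor:AMEbunching}, where it follows from $s'(\bar{x})\ne 0$ via the change-of-variables identity — so the ratio is well defined. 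I would then substitute $f_{s(X^*)|X=\bar{x}}(0)=\frac{1}{2\pi}\int\frac{\E[e^{\bmi\xi Y_i}|X_i=\bar{x}]}{\E[e^{\bmi\xi Y_i}|X_i=\bar{x}^+]}\,d\xi$ from Lemma \ref{lem:u(D*)} (which also asserts convergence of this integral, and since it equals $2\pi$ times a strictly positive density, its reciprocal is legitimate), together with $\theta=\text{sgn}\left(\E[Y_i|X_i=\bar{x}^+]-\E[Y_i|X_i=\bar{x}]\right)$ from Lemma \ref{lem:theta}, and the substitution for $m'(\bar{x}^+)$ from the first step. Collecting terms reproduces exactly \eqref{eq:u'}.

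Since this is purely a synthesis step, there is no substantive obstacle; the only things requiring care are bookkeeping: verifying that the hypotheses of each cited result are implied by Assumptions \ref{as:continuoussupport}--\ref{as:deconvregularity}, that the denominator $f_{s(X^*)|X=\bar{x}}(0)$ (equivalently, the integral in \eqref{eq:u'}) is nonzero so the reciprocal is valid, and that $m'(\bar{x}^+)$ may legitimately be replaced by $\lim_{x\downarrow\bar{x}}\frac{d}{dx}\E[Y_i|X_i=x]$. Each of these points has already been settled in the statements or surrounding remarks of Theorems \ref{prop:AMEDelta} and \ref{cor:AMEbunching} and Lemmas \ref{lem:theta} and \ref{lem:u(D*)}, so the proof is short.
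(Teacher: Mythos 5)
Your proposal is correct and follows exactly the paper's own argument: combine Theorem \ref{cor:AMEbunching} with Lemma \ref{lem:u(D*)} for the density, Lemma \ref{lem:theta} for $\theta$, and the identity $m'(\bar{x}^+)=\lim_{x\downarrow\bar{x}}\frac{d}{dx}\E[Y_i|X_i=x]$ established in the proof of Theorem \ref{prop:AMEDelta}. Your additional bookkeeping on the non-vanishing of the denominator is consistent with what the cited results already guarantee.
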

\noindent The expression is familiar in that the treatment effect is calculated by correcting the outcome variation with a scaled inverse Mills Ratio term, as is usually seen in the censoring and sample selection literatures, where some of the model components are truncated or missing below a certain threshold. In our setting the scaling factor makes use of the full distribution of the outcome observable from the data both at the bunching point and as one approaches it.

Note that when there is no endogeneity at $X_i=\bar{x}$, $s'(\bar{x})=0$. Therefore, $\theta=0$ and Equation \eqref{eq:u'} still holds. This means that the right-hand side of \eqref{eq:u'} can be used to identify $\text{AME}_{\bar{x}}^+$ while remaining agnostic about endogeneity

In the Supplementary Appendix \ref{sec:controls}, we show how identification may be obtained in the presence of controls. Specifically, all the assumptions required for identification may be done conditional on a vector of controls $Z_i,$ so that the requirements may effectively be weaker. In particular, the treatment and selection effects may change direction for different subgroups, and we can identify the $\text{AME}_{\bar{x}}^+(Z_i)$ to study heterogeneous treatment effects. We propose estimators in the case with discrete controls (Section \ref{sec:discretecontrols}), continuous controls (Section \ref{sec:continuouscontrols}), and when the vector of controls is large and with mixed continuous and discrete controls (Section \ref{sec:discretecontinuouscontrols}). 

\subsection{The selection variable \texorpdfstring{$X_i^*$}{}} \label{sec:xstar}

While the existence of a selection variable $X_i^*$ satisfying Equation \eqref{eq:xbarmax} is without loss of generality, the identification result of Theorem \ref{thm:AMEid} relies upon assumptions made about $X_i^*$. To guide researchers in assessing the plausibility of these assumptions, in this section we illustrate how $X^*_i$ relates to well-defined empirical quantities.

We begin with a parametric choice model that leads to a particularly simple expression for $X_i^*$.
\begin{example}[Isoelastic model of constrained choice] \label{ex:cigarettes}
Each individual chooses the number of units of a good or service $x$ to consume at price $p$ along with the quantity of a numeraire good $r$, subject to the budget constraint $W_i = px + r$ and the non-negativity constraints $x\geq 0$ and $r\geq0.$ Note that this model can describe our application, where $x$ refers to the choice of cigarettes smoked per day, $p$ is the price of cigarettes, and $r$ is spending on other consumption categories. Consider the isoelastic family of utility functions: \begin{align*}
V(x,r;\rho_i) =\left\{\begin{tabular}{ll}
$(1+\rho_i)^{\gamma} \left( \frac{(1+x)^{1-\gamma} - 1}{1-\gamma} \right) + r,\quad $ &if $\gamma >0, \medspace \gamma \neq 1$
\\
$(1+\rho_i)\log(1+x) + r,\quad $ &if  $  \gamma = 1$.
\end{tabular}\right.
\end{align*}
\noindent The parameter $\gamma$ modulates the degree of concavity of the utility function. The individual-level parameter $\rho_i$ can be interpreted as the preference for $x$ relative to the numeraire good. For individuals with $\rho_i<-1$, $x$ is a ``bad,'' while for individuals with $\rho_i>-1$, $x$ is a ``good.'' In this case, define 
$$X^*_i = \frac{1+\rho_i}{p^{1/\gamma}}-1,$$
which implies that $X_i^*=X_i$ if $\rho_i\geq p^{1/\gamma}-1.$ Otherwise, the individual chooses $X_i=0.$ This implies that at the bunching point, we have individuals that consider $x$ a ``bad'' and others who consider $x$ a ``good,'' but whose positive preferences for $x$ are not strong enough to overcome the opportunity cost of foregoing $r,$ given its price. 

If the price $p=1$, then $X^*_i = \rho_i$, so the selection variable is exactly the preference parameter $\rho_i$. For instance, this is the case in time use models, where $x$ refers to the number of hours in the day spent on a given activity (e.g. time watching TV, as in \citealt{ccn_metrics}).\footnote{In time use models, $r$ refers to the remaining activities adding up to $W_i=24$ hours per day, so that $x+r=24$. In this setting, the budget $W_i$ is constant across individuals, and $p=1$ because individuals trade-off $x$ and $r$ at the rate 1-to-1.} In this example, individuals with $-1 < \rho_i \leq 0$ are choosing $X_i=0$ but still value watching TV positively, they just do not value it as highly as they value the alternative use of their time.
\end{example}

In Appendix \ref{sec:economic}, we extend this example beyond the isoelastic family. We show that in a class of utility maximization models that feature a scalar preference parameter $\rho_i$, we can write $X_i^*=h(\rho_i)$ where $h$ is a strictly increasing and differentiable function. The intuition is that with scalar heterogeneity, if $\rho_i$ pins down $i$'s marginal rate of substitution between $X_i$ and the numeraire good when $X_i=0$, then it also pins down $X_i$.

In Appendix \ref{sec:generalization AME}, we show that any monotonic differentiable transformation of $X_i^*$ yields precisely the same constructive estimand that identifies the parameter $AME_{\bar{x}}^+$ in Theorem \ref{thm:AMEid}. Specifically, we prove that if $X_i^*=h(\rho_i)$ for any strictly increasing and differentiable $h$, the assumptions that establish Theorem \ref{thm:AMEid} can be made directly  on $\rho_i=h^{-1}(X_i^*)$, rather than on $X_i^*$.  Our result does not require that $\rho_i$ be observable, nor that the function $h$ be known to the econometrician, and holds whether or not one posits a utility maximization model. Thus, to make use of Theorem \ref{thm:AMEid} to identify $AME_{\bar{x}}^+$, one needs only to believe that there exists \textit{some} $\rho_i$ that satisfies the Assumptions \ref{as:smono}-\ref{as:deconvregularity}, and some monotonic and differentiable transformation $h$ such that $X_i = \max\{\bar{x},h(\rho_i)\}$. This result allows the researcher to make assumptions about $\rho_i$ rather than about the more abstract object $X_i^*$ without any loss of generality, while retaining the same identification equations and estimators.

In Appendix \ref{sec:unrestrictedheterigeneity}, we consider a general utility maximization framework in which heterogeneity in individuals' choices are not necessarily explained by a single scalar parameter. We show that $X_i^*$ can be defined for those individuals with $X_i=\bar{x}$ as the marginal utility (after substituting budget constraints and profiling the utility function over any additional choice variables), evaluated at the bunching point. Those individuals who are indifferent between $X_i=\bar{x}$ or an amount slightly larger than the bunching point have $X_i^*=\bar{x}$ exactly, while those who strictly prefer $X_i=\bar{x}$ to a value just above it have $X_i^* < \bar{x}$. One can then translate the key aspects of Assumptions \ref{as:smono}-\ref{as:deconvregularity} in terms of comparisons of observations with different degrees of indifference towards veering away from the bunching point. 

Finally, in Appendix \ref{sec:noxstar}, we show that $X_i^*$ can be constructed without the need for any underlying choice model. As discussed in the introduction, bunching has been observed in some examples where the treatment variable is not a clear function of individual choices (e.g., \cite{caetano2018identifying} study of the effects of neighborhood crime), and this type of construction can be useful in such cases.  Concretely, if $\epsilon_i \indep X_i $ for $X_i>\bar{x},$ and $s(x)$ is sufficiently smooth, then $s$ can be extrapolated to $x\leq \bar{x}$. In this case, an $X_i^*$ satisfying $\epsilon \indep X_i^*|X_i=\bar{x}$ can always be constructed to satisfy the identification restrictions. 

While the above results are motivated by settings in which bunching occurs at the boundary of the support of the treatment variable, the following example discusses how $X_i^*$ emerges naturally in settings with interior bunching at a kink. In such settings, there is no hard constraint that $X_i \ge \bar{x}$. Rather, Equation \eqref{eq:xbarmax} emerges from a discontinuous change in individuals' incentives at $X_i=\bar{x}$.
\begin{example}[Interior bunching at a kink] \label{ex:kink}
 Each individual chooses $x$ (e.g., income) by maximizing a utility function  $u(x,t;A_i)$ that is strictly quasi-concave in $x$ and decreasing in $t$ (e.g., tax liability). Although the vector of individual characteristics $A_i$ influences the individual's choice, it need not be observed by the researcher. Suppose that $t$ as a function of $x$ exhibits a convex kink at $\bar{x}$, so that costs increase faster with $x$ when $x > \bar{x}$ than they do when $x < \bar{x}$.  \citet{goff2020treatment} shows that in this setting optimal choice can be written as
$$\tilde{X}_i = \begin{cases}
X_i(0) & \textrm{ if } X_i(0) < \bar{x}\\
\bar{x} & \textrm{ if } X_i(1) \le \bar{x} \le X_i(0)\;,\\
X_i(1) & \textrm{ if } X_i(1) > \bar{x}\\
\end{cases}$$
where $X_i(0)$ and $X_i(1)$ are, respectively: the counterfactual choices that the individual would make if the budget function to the left of the kink applied globally, or if the budget function to the right of the kink applied globally.  

If we consider only the observations with $X_i\geq \bar{x}$, then the mapping from Equation \eqref{eq:xbarmax}, $X_i=\max\{X^*_i,\bar{x}\}$, holds by defining $X^*_i = X_i(1)$. Thus, $X_i^*$ could be interpreted as the counterfactual choice that would be made if the budget function to the right of the kink applied globally. Conversely, if we consider only the observations with $X_i\leq \bar{x}$, we would have $X_i=\min\{X^*_i,\bar{x}\}$ by defining $X^*_i = X_i(0)$, and $X_i^*$ could be interpreted as the counterfactual choice that would be made if the budget function to the left of the kink applied globally.
\end{example}

\section{Identification of causal effects away from the bunching point} \label{sec:global}
Given identification of $\text{AME}^+_{\bar{x}}$ demonstrated in Section \ref{sec:idfrombunching}, we consider now how global effects $\text{ATT}(x)$ may be identified by extrapolating the information available near the bunching point. The extension requires a sufficient degree of smoothness of the counterfactual function near $\bar{x},$ which is guaranteed by the following assumption. 
\begin{assumption}\label{as:analytic}  The counterfactual function $x\mapsto \E[Y_i(\bar{x})|X_i=x]$ is real analytic on an interval $I'=(\bar{x},\bar{x}+\varepsilon_5)$ for some $\varepsilon_5>0$.
\end{assumption}
\noindent Functions that are real analytic on $I$ are infinitely differentiable functions whose Taylor series around any point $x \in I$ converges pointwise to the function in a neighborhood of $x$. This class includes all functions that locally behave like polynomial, exponential, trigonometric, hyperbolic, logarithmic, or inverse trigonometric functions, as well as compositions, ratios, and roots of these. A sufficient condition for Assumption \ref{as:analytic} is that the observable function $\mathbbm{E}[Y_i|X_i=x]$ and the $\text{ATT}(x)$ function are both analytic on $I$. This may be a more appealing argument than reasoning about the properties of the selection function $s(x),$ since $\E[Y_i|X_i=x]$ is identifiable, and it may be plausible to assume that the dose-response function $\text{ATT}(x)$ is sufficiently smooth in some applications. 

Recall that $\text{ATT}(x)=m(x)-s(x)$. The following theorem establishes the desired local extrapolation. 
\begin{theorem}\label{thm:localexpansion} (Local 
 ATTs) If Assumptions \ref{as:continuoussupport}-\ref{as:analytic} hold, then there exists an $\varepsilon>0$ such that, for all $x\in I_{\varepsilon}:=(\bar{x},\bar{x}+\varepsilon)\subset I',$ 
\begin{equation} \label{eq:taylorexpression}
    \text{ATT}(x)=m(x)-\sum_{k=1}^\infty s^{(k)}(\bar{x}^+)\cdot  \frac{(x-\bar{x})^k}{k!},
\end{equation}
 where all the derivatives and limits in the equation above are well defined. Moreover, for any $K \ge 0,$ there exists a value $\zeta_{\bar{x}}(x)\in (\bar{x},x]$ such that,
\begin{equation*}
    R_{k}(x-\bar{x}):=\!\!\sum_{k=K+1}^{\infty}\!\! s^{(k)}(\bar{x}^+) \cdot \!\frac{(x-\bar{x})^{k}}{k!}=s^{(k)}(\zeta_{\bar{x}}(x)) \cdot \!\frac{(x-\bar{x})^{K+1}}{\hspace{-.4cm}(K+1)!}.
\end{equation*} 
\end{theorem}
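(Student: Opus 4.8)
The plan is to deduce this from standard facts about real analytic functions, once the smoothness of $s$ on the relevant interval is established. First I would show that $s(x)$ is real analytic on some interval $I_\varepsilon = (\bar x, \bar x+\varepsilon)$. By definition $s(x) = \E[Y_i(\bar x)\mid X_i = x] - \E[Y_i(\bar x)\mid X_i=\bar x^+]$ differs from $x \mapsto \E[Y_i(\bar x)\mid X_i=x]$ only by an additive constant, so Assumption \ref{as:analytic} directly gives that $s$ is real analytic on $I' = (\bar x, \bar x + \varepsilon_5)$. (Equivalently, one could argue from analyticity of $m$ and $\text{ATT}$ via the sufficient condition mentioned in the text, using that sums and differences of analytic functions are analytic; I would mention this route as an alternative since $m$ is identified.) I also need to confirm that the one-sided limits $s^{(k)}(\bar x^+)$ all exist and are finite; this follows because a function analytic on $(\bar x, \bar x+\varepsilon_5)$ whose value $s(\bar x^+)=0$ exists (from part (iv) of Assumption \ref{as:smoothcounterfactual}, together with Assumption \ref{as:bunchingcontinuousnew1}(ii) extending analyticity/continuity to the closed left endpoint) extends analytically to a neighborhood of $\bar x$, so all right-derivatives at $\bar x$ agree with the ordinary derivatives of the extension and are finite.

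The second step is to invoke the defining property of real analyticity at the point $\bar x$: there exists $\varepsilon \le \varepsilon_5$ such that for all $x \in I_\varepsilon$, the Taylor series of $s$ about $\bar x^+$ converges pointwise to $s(x)$, i.e. $s(x) = \sum_{k=0}^\infty s^{(k)}(\bar x^+)\,(x-\bar x)^k / k!$. Since $s(\bar x^+) = 0$ the $k=0$ term vanishes, and substituting into $\text{ATT}(x) = m(x) - s(x)$ (which holds by Theorem \ref{prop:AMEDelta}) yields Equation \eqref{eq:taylorexpression}. That all derivatives and limits appearing are well defined is exactly what step one secured.

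For the remainder statement, I would apply the Lagrange form of Taylor's theorem to $s$ on $[\bar x, x]$: since $s$ is $(K+1)$-times continuously differentiable on a neighborhood of $[\bar x, x] \subset I_\varepsilon$ (indeed $C^\infty$ there, being analytic), for each $x$ there is $\zeta_{\bar x}(x) \in (\bar x, x]$ with
\begin{equation*}
s(x) = \sum_{k=0}^{K} s^{(k)}(\bar x^+)\,\frac{(x-\bar x)^k}{k!} + s^{(K+1)}(\zeta_{\bar x}(x))\,\frac{(x-\bar x)^{K+1}}{(K+1)!}.
\end{equation*}
Comparing with the convergent series representation from step two, the tail $\sum_{k=K+1}^\infty s^{(k)}(\bar x^+)(x-\bar x)^k/k!$ equals the Lagrange remainder $s^{(K+1)}(\zeta_{\bar x}(x))(x-\bar x)^{K+1}/(K+1)!$, which is the claimed identity for $R_k(x-\bar x)$.

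The main obstacle is not the analysis of convergence per se — that is immediate from the definition of real analytic — but rather the bookkeeping at the left endpoint $\bar x$: one must be careful that ``real analytic on the open interval $(\bar x, \bar x+\varepsilon_5)$'' plus existence of the one-sided limit $s(\bar x^+)=0$ genuinely delivers an analytic extension across $\bar x$ (so that the one-sided Taylor expansion about $\bar x^+$ makes sense and the radius of convergence is positive there). This requires noting that the radius of convergence of the Taylor series at an interior point $x_0$ near $\bar x$ is at least the distance to the nearest singularity, and that as $x_0 \downarrow \bar x$ there is no obstruction accumulating at $\bar x$ precisely because the boundary limits of $s$ and its derivatives exist and are finite; hence one can choose $\varepsilon > 0$ uniform enough that the expansion about $\bar x^+$ is valid on all of $I_\varepsilon$. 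I would state this carefully and otherwise keep the argument short, as the rest is textbook.
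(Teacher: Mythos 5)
Your proposal follows essentially the same route as the paper: observe that $s$ is analytic on $I'$ because it differs from the assumed-analytic counterfactual function by a constant, extend $s$ analytically to the closed endpoint $\bar{x}$ so that $s(\bar{x})=0$ and $s^{(k)}(\bar{x})=s^{(k)}(\bar{x}^+)$, expand in a Taylor series about $\bar{x}$, substitute into $\text{ATT}(x)=m(x)-s(x)$ from Theorem \ref{prop:AMEDelta}, and obtain the remainder from the Lagrange form of Taylor's theorem. On all of those steps you and the paper agree, and your writing of the remainder as $s^{(K+1)}(\zeta_{\bar{x}}(x))\,(x-\bar{x})^{K+1}/(K+1)!$ is the correct form.

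The one place you go beyond the paper is in trying to \emph{justify} the analytic extension across $\bar{x}$, and that justification does not work as stated. You argue that because the one-sided limits of $s$ and all its derivatives at $\bar{x}$ exist and are finite, ``there is no obstruction accumulating at $\bar{x}$'' and the radius of convergence of the Taylor series at points $x_0\downarrow\bar{x}$ stays bounded away from zero. That inference is false: the function $x\mapsto e^{-1/(x-\bar{x})}$ is real analytic on $(\bar{x},\bar{x}+\varepsilon_5)$, has finite one-sided limits (all equal to zero) for every derivative at $\bar{x}$, yet its radius of convergence at $x_0$ collapses like $x_0-\bar{x}$ and it admits no analytic extension to $\bar{x}$ (its formal Taylor series there is identically zero). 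Existence of all boundary derivative limits yields only a $C^\infty$ extension, not an analytic one. The paper does not attempt this argument; it simply \emph{takes} the analytic continuation to $[\bar{x},\bar{x}+\varepsilon_5]$ as available, in effect reading Assumption \ref{as:analytic} as analyticity up to and including the left endpoint. So your proof lands in the same place as the paper's, but you should either delete the heuristic justification and assert the endpoint analyticity as part of the hypothesis (as the paper does), or strengthen Assumption \ref{as:analytic} explicitly; as written, that paragraph of your argument is a gap rather than a fix.
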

The second part of Theorem \ref{thm:localexpansion} offers a practical strategy for finite approximations to Equation \eqref{eq:taylorexpression}. Explicitly, for a suitably large $K,$ $\text{ATT}(x)\approx m(x)-\sum_{k=1}^K s^{(k)}(\bar{x}^+)\cdot  \frac{(x-\bar{x})^k}{k!}
$. The error of the $K$-th degree approximation does not exceed $\sup_{x \in I_{\varepsilon}}|s^{(K+1)}(x)|\cdot \varepsilon^{K+1}/(K+1)!$. Since $(K+1)!$ has supra-exponential growth, even if $\varepsilon$ and the high-order derivatives are very large, the approximation error decays quickly.

 Equation \eqref{eq:taylorexpression} indicates that if all the derivatives $s^{(k)}(\bar{x}^+)$ are identifiable, then the $\text{ATT}(x)$ is identifiable in a neighborhood of $\bar{x}$. This implies that extrapolations near the bunching point are possible, provided $\theta$ is identified and the $s^{(k)}(\bar{x}^+)$ are known for all $k \ge 1$. By differentiating Equation \eqref{eq:changeofvariables} with respect to $x$, we can see that this is indeed possible since the density $f_{s(X)|I}$ (and hence its derivatives) is identified on $s(I_{\epsilon})$.
\begin{corollary} \label{cor:extrapolation}
    If Assumptions \ref{as:continuoussupport}-\ref{as:analytic} hold and $f_{X^*}(x)$ is infinitely differentiable at $x=\bar{x}$, then $\text{ATT}(x)$ is identified for each $x \in I_\epsilon$.
\end{corollary}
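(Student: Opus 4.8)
The plan is to reduce the problem to identifying the Taylor coefficients of the selection function at the bunching point. Since $\text{ATT}(x)=m(x)-s(x)$ with $m$ directly identified, and since Theorem~\ref{thm:localexpansion} gives $s(x)=\sum_{k\ge 1}s^{(k)}(\bar x^+)(x-\bar x)^k/k!$ for all $x\in I_\varepsilon$, it is enough to show that $s^{(k)}(\bar x^+)$ is identified for every $k\ge 1$; the corollary follows immediately.

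To recover these coefficients I would repeatedly differentiate the change-of-variables identity of Theorem~\ref{thm:changeofvariables}. Using Assumption~\ref{as:bunchingcontinuousnew1}(ii) to extend it up to $\bar x$, and local monotonicity of $s$ (so that $s'$ has the constant sign $\theta$ near $\bar x$, which is $\pm 1$ by Assumption~\ref{as:smoothcounterfactual}(iii)), write it as $s'(x)\,f_{s(X)|I}(s(x))=\theta\,f_{X|I}(x)$ on a right-neighborhood of $\bar x$. Applying Leibniz's rule and Fa\`a di Bruno's formula to the $k$-th derivative of this equation and letting $x\downarrow\bar x$ yields, for every $k\ge 0$, a relation of the form
\begin{equation*}
 f_{s(X)|I}(s(\bar x^+))\cdot s^{(k+1)}(\bar x^+)=P_k\!\left(\{s^{(i)}(\bar x^+)\}_{i\le k},\ \{f^{(i)}_{s(X)|I}(s(\bar x^+))\}_{i\le k},\ \{f^{(i)}_{X|I}(\bar x^+)\}_{i\le k}\right),
\end{equation*}
where $P_k$ is an explicit polynomial and $f_{s(X)|I}(s(\bar x^+))\ne 0$ by Theorem~\ref{thm:changeofvariables}. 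This is a triangular recursion: its diagonal coefficient is nonzero, the lower-order $s^{(i)}(\bar x^+)$ ($i\le k$) are known by induction (the base case $k=0$ being $s'(\bar x^+)=\theta f_{X|I}(\bar x^+)/f_{s(X)|I}(s(\bar x^+))$, identified by Theorem~\ref{cor:AMEbunching} and Lemma~\ref{lem:theta}), and the inputs $f^{(i)}_{X|I}(\bar x^+)$ are identified because $f_{X^*}=f_X$ on $I$ by \eqref{eq:xbarmax} (the unknown normalization $\mathbb{P}(X_i\in I)$ cancels from the identity, which only involves $f_{X^*}$ and $f_{s(X^*)}$) and $f_{X^*}$ is $C^\infty$ at $\bar x$ by hypothesis, so $f^{(i)}_{X^*}(\bar x^+)=\lim_{x\downarrow\bar x}f^{(i)}_X(x)$.

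The remaining ingredient — and the step I expect to be the main obstacle — is identification of the germ of $f_{s(X)|I}$ at $s(\bar x^+)$, i.e. of $f^{(i)}_{s(X)|I}(s(\bar x^+))$ for all $i\ge 0$. The value $(i=0)$ is delivered by the deconvolution: the function-valued form of Lemma~\ref{lem:u(D*)} identifies $f_{s(X^*)\mid X=\bar x}$ on $s((-\infty,\bar x])$, so $f_{s(X^*)}(s(\bar x^+))=F_X(\bar x)\,f_{s(X^*)\mid X=\bar x}(0)$ is identified, and Assumption~\ref{as:bunchingcontinuousnew1}(i) guarantees this agrees with the one-sided limit that actually enters the change-of-variables formula. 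For $i\ge 1$ the difficulty is that the deconvolution pins $f_{s(X^*)}$ down only on the bunching side of $s(\bar x^+)$, whereas the change-of-variables identity evaluates it on the $I$-side; bridging these is exactly where Assumption~\ref{as:analytic} and the smoothness of $f_{X^*}$ do their work. Since $s$ is real analytic near $\bar x$ with $s'(\bar x^+)\ne 0$, it is a local analytic diffeomorphism there, so the pushforward density $f_{s(X)|I}$ inherits $C^\infty$-smoothness from $f_{X^*}$ on the $I$-side and is thereby determined as a smooth germ at $s(\bar x^+)$ by the identified data. With all $f^{(i)}_{s(X)|I}(s(\bar x^+))$ in hand, solving the triangular recursion identifies every $s^{(k)}(\bar x^+)$, and Theorem~\ref{thm:localexpansion} then gives $\text{ATT}(x)=m(x)-s(x)$ for each $x\in I_\varepsilon$, as claimed.
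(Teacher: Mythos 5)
Your overall architecture matches the paper's: differentiate the change-of-variables identity repeatedly to obtain a triangular recursion for the $s^{(k)}(\bar{x}^+)$, whose base case is Theorem \ref{cor:AMEbunching} together with Lemma \ref{lem:theta}, and feed the resulting coefficients into the Taylor expansion of Theorem \ref{thm:localexpansion}. The inputs $f^{(i)}_{X^*}(\bar{x}) = f^{(i)}_X(\bar{x}^+)$ are handled correctly, and the cancellation of the normalizing constant is right. The problem is the step you yourself flag as the main obstacle: identifying $f^{(i)}_{s(X^*)|X=\bar{x}}(0)$ for $i \ge 1$. Your proposed resolution --- that the pushforward density $f_{s(X)|I}$ ``inherits $C^\infty$-smoothness from $f_{X^*}$ on the $I$-side and is thereby determined as a smooth germ at $s(\bar{x}^+)$ by the identified data'' --- is circular as written: on the $I$-side the pushforward formula is $f_{s(X)|I}(v) = f_{X|I}(s^{-1}(v))/|s'(s^{-1}(v))|$, whose derivatives at $v=0$ involve exactly the derivatives $s^{(k)}(\bar{x}^+)$ you are trying to identify. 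Substituting that expression back into your triangular recursion collapses it to a tautology rather than delivering identification.

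The paper closes this gap from the other side. The deconvolution of Lemma \ref{lem:u(D*)} identifies not just the number $f_{s(X^*)|X=\bar{x}}(0)$ but the entire function $v \mapsto f_{s(X^*)|X=\bar{x}}(v)=\frac{1}{2\pi}\int \frac{\E[e^{\textbf{i}\xi Y_i}|X_i=\bar{x}]}{\E[e^{\textbf{i}\xi Y_i}|X_i=\bar{x}^+]}e^{-\textbf{i}\xi v}\,d\xi$ in closed form, so every derivative $f^{(\ell)}_{s(X^*)|X=\bar{x}}(0)$ is identified by differentiating under the integral sign with respect to $v$. The extra hypotheses of the corollary ($f_{X^*}$ infinitely differentiable at $\bar{x}$, $s$ analytic, hence the change-of-variables identity valid on an interval $I^*$ straddling $\bar{x}$) then ensure that these derivatives, computed from the bunching side, are the ones entering the recursion. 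If you replace your $I$-side pushforward argument with this direct differentiation of the deconvolution formula, your proof goes through and coincides with the paper's.
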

Corollary \ref{cor:extrapolation} guarantees the identification of the $\text{ATT}(x)$ in a neighborhood of the bunching point. In practice, finite approximations may be used to approximate the value. For example, a first-degree approximation is simply 
$$
\text{ATT}_{\bar{x}}(x)\approx \E[Y_i|X_i=x]-\E[Y_i|X_i=\bar{x}^+]-2\pi\theta \left(\int \frac{\E[e^{\textbf{i}\xi Y_i}|X_i=\bar{x}]}{\E[e^{\textbf{i}\xi Y_i}|X_i=\bar{x}^+]}d\xi\right)^{-1} \frac{f_X(\bar{x}^+)}{F_X(\bar{x})}(x-\bar{x}),$$ and a second-degree approximation adds the term
\begin{equation*}
\left(s'(0)\frac{f'_{X}(\bar{x}^+)}{f_X(\bar{x}^+)}-s'(0)^2\frac{f'_{s(X^*)|X=\bar{x}}(0)}{f_{s(X^*)|X=\bar{x}}(0)}\right)\frac{(x-\bar{x})^2}{2},
\end{equation*}
where $s'(0)$ is the correction term in Equation \eqref{eq:u'}, and $f'_{s(X^*)|X=\bar{x}}(0)/f_{s(X^*)|X=\bar{x}}(0)$ 
 is equal to $\left(\int \E[e^{\textbf{i}\xi Y_i}|X_i=\bar{x}^+]^{-1}{\E[e^{\textbf{i}\xi Y_i}|X_i=\bar{x}]}d\xi\right)^{-1}\left(\int \textbf{i}\xi\E[e^{\textbf{i}\xi Y_i}|X_i=\bar{x}^+]^{-1}\E[e^{\textbf{i}\xi Y_i}|X_i=\bar{x}]d\xi\right)$. This expression may seem complex but, in practice, one would have already estimated $f_{s(X^*)|X=\bar{x}}(0)$ for a first-degree approximation, and standard deconvolution packages often automatically provide the first derivative $f_{s(X^*)|X=\bar{x}}'(0)$ at the same time.

 One limitation of Theorem \ref{thm:localexpansion} and Corollary \ref{cor:extrapolation} is that the interval $I_\varepsilon$ could in principle be quite small. A sufficient condition for the ATT to be defined far away from the bunching point is that the higher order derivatives decay suitably fast with $k$.

\begin{corollary} \label{cor:extrapolationfar}
    Suppose the assumptions of Corollary \ref{cor:extrapolation} hold and that $\limsup_{k \to \infty} \left| \frac{s^{(k)}(\bar{x})}{k!} \right|^{1/k} < 1/M$, then $\text{ATT}(x)$ is identified for all $x \in [\bar{x},\bar{x}+M]$.
\end{corollary}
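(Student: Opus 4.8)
The plan is to recognize the stated summability condition as a Cauchy–Hadamard bound on the radius of convergence of the Taylor series of $s$ at $\bar{x}$, and to combine this with the expansion already established in Theorem \ref{thm:localexpansion} and Corollary \ref{cor:extrapolation}. Concretely, define
\[
\tilde{s}(x):=\sum_{k=1}^{\infty} s^{(k)}(\bar{x}^{+})\,\frac{(x-\bar{x})^{k}}{k!}.
\]
By the conventions in Section \ref{sec:model}, $s^{(k)}(\bar{x}^{+})=s^{(k)}(\bar{x})$, so the hypothesis $\limsup_{k\to\infty}\bigl|s^{(k)}(\bar{x})/k!\bigr|^{1/k}<1/M$ is exactly the statement that the radius of convergence $R$ of this power series satisfies $R>M$. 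Hence $\tilde{s}$ converges absolutely and is real-analytic on the open interval $(\bar{x}-R,\bar{x}+R)$, which strictly contains $[\bar{x},\bar{x}+M]$.

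Next I would establish that $\tilde{s}$ is identified on $[\bar{x},\bar{x}+M]$. Corollary \ref{cor:extrapolation} (via differentiation of the change-of-variables identity \eqref{eq:changeofvariables}, the identification of the densities $f_{X|I}$ and $f_{s(X)|I}$ and their derivatives at $\bar{x}^{+}$, and the identification of $\theta$ in Lemma \ref{lem:theta}) shows that every coefficient $s^{(k)}(\bar{x}^{+})$ is a known functional of the distribution of the observables. Since the series converges on $[\bar{x},\bar{x}+M]$, the function $x\mapsto m(x)-\tilde{s}(x)$ is an identified function on $[\bar{x},\bar{x}+M]$, with $m(x)=\E[Y_i|X_i=x]-\E[Y_i|X_i=\bar{x}^{+}]$ identified directly.

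It remains to show that this identified function equals $\text{ATT}(x)=m(x)-s(x)$ on the whole interval, i.e.\ that $s\equiv\tilde{s}$ there. Theorem \ref{thm:localexpansion} already gives $s(x)=\tilde{s}(x)$ for $x$ in a right-neighborhood $I_{\varepsilon}$ of $\bar{x}$. Under the assumptions of Corollary \ref{cor:extrapolation}, Assumption \ref{as:analytic} makes $s$ real-analytic on $I'=(\bar{x},\bar{x}+\varepsilon_{5})$; for the conclusion to concern $[\bar{x},\bar{x}+M]$ one is in the regime $\varepsilon_{5}>M$. Both $s$ and $\tilde{s}$ are then real-analytic on the connected interval $(\bar{x},\bar{x}+\min\{R,\varepsilon_{5}\})$, which contains $\bar{x}+M$, and they agree on $I_{\varepsilon}$, a subset with an accumulation point in this interval. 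By the identity theorem for real-analytic functions, $s\equiv\tilde{s}$ on $(\bar{x},\bar{x}+\min\{R,\varepsilon_{5}\})$, hence on $(\bar{x},\bar{x}+M]$; and $\text{ATT}(\bar{x})=0$ by definition. Therefore $\text{ATT}(x)=m(x)-\tilde{s}(x)$ for all $x\in[\bar{x},\bar{x}+M]$, which is identified.

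I expect the main obstacle to be the domain bookkeeping: the summability condition controls only the radius of convergence of the series at $\bar{x}$, which a priori can be far smaller than the interval on which $s$ is analytic, so one must be explicit that the conclusion presupposes $s$ stays analytic out to $\bar{x}+M$ (equivalently $\varepsilon_{5}>M$) and then invoke the real-analytic identity theorem on the correct one-sided connected interval to upgrade the local agreement of $s$ and $\tilde{s}$ near $\bar{x}$ to agreement on all of $[\bar{x},\bar{x}+M]$. The remaining ingredients — Cauchy–Hadamard, and the verification that each Taylor coefficient is a known functional of identified objects, already carried out in Corollary \ref{cor:extrapolation} — are routine.
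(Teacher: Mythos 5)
Your proof takes the same route as the paper: apply Cauchy--Hadamard to the Taylor expansion from Theorem \ref{thm:localexpansion} to conclude that the series $\sum_k s^{(k)}(\bar{x}^+)(x-\bar{x})^k/k!$ converges on $[\bar{x},\bar{x}+M]$. You are in fact more careful than the paper, whose proof stops at the radius-of-convergence computation; your observation that convergence of the series does not by itself guarantee it equals $s(x)$ beyond the domain of analyticity $I'$ --- so that one implicitly needs $\varepsilon_5>M$ and an appeal to the identity theorem for real-analytic functions --- addresses a genuine gap in the paper's one-line argument.
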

\noindent Corollary \ref{cor:extrapolationfar} specifies ``how far'' one can extrapolate from the derivatives of $s^{(k)}(\bar{x})$ to obtain $\text{ATT}(x)$. Specifically, if the $|s^{(k)}(\bar{x})|$ are bounded by $M^{-k}\cdot k!$ uniformly over $k$ for some $M,$ then the $\text{ATT}(x)$ identification can be extrapolated as far as $\bar{x}+M$.

\section{Estimation}\label{sec:estimation}

For a sample $\{(Y_i,X_i)', i=1,\dots, n\},$ the average marginal effect at the bunching point may be estimated following Equation \eqref{eq:u'}. Specifically, we use the following formulas:
 \begin{equation*}\widehat{\text{AME}}_{\bar{x}}^+=\hat{m}'(\bar{x}^+)-\hat{\theta} \cdot \hat{f}_{s(X^*)|X=\bar{x}}(0)^{-1}\cdot \frac{\hat{f}_X(\bar{x}^+)}{\hat{F}_X(\bar{x})},
    \end{equation*}
where $\hat{m}'(\bar{x}^+)$ is an estimator of $\lim_{x\downarrow\bar{x}}\frac{d}{dx}\E[Y_i|X_i=x],$ and $\hat{\theta}=\text{sgn}(\hat{\E}[Y_i|X_i=\bar{x}^+]-\hat{\E}[Y_i|X_i=\bar{x}])$. A first-degree approximation following Corollary \ref{cor:extrapolation} uses the estimator
\begin{equation*}\widehat{\text{ATT}}(x)=\hat{E}[Y_i|X_i=x]-\hat{E}[Y_i|X_i=\bar{x}^+]-\hat{\theta} \cdot \hat{f}_{s(X^*)|X=\bar{x}}(0)^{-1}\cdot \frac{\hat{f}_X(\bar{x}^+)}{\hat{F}_X(\bar{x})}\cdot (x-\bar{x}),
    \end{equation*}
    and analogously for a second-degree approximation. All the components of these formulas are standard objects frequently studied in econometrics. We discuss next how each component can be estimated.

The terms  $\hat{F}_X(\bar{x})$ and $\hat{\E}[Y_i|X_i=\bar{x}]$ may be estimated with simple averages:
    \begin{equation*}
    \hat{F}_{X}(\bar{x})=\frac{1}{n}\sum_{i=1}^n\bm{1}(X_i=\bar{x}), \text{ and }\hat{\E}[Y_i|X_i=\bar{x}]=\hat{F}_{X}(\bar{x})^{-1}\cdot \frac{1}{n}\sum_{i=1}^nY_i\bm{1}(X_i=\bar{x}).
\end{equation*}
    
 The terms $\hat{\E}[Y_i|X_i=\bar{x}^+]$ and   $\hat{m}'(\bar{x}^+)$ are standard non-parametric regression boundary quantities. Estimation of these objects has been extensively researched in the statistics literature on local polynomial estimators, and in the Regression Discontinuity Design and Regression Kink Design literatures in economics. In line with classical methods in this literature and with the vast majority of applications in boundary regression estimation, we propose using a local linear regression of $Y_i$ onto $X_i$ at $X_i=\bar{x},$ using only observations such that $X_i>\bar{x},$ for its superior properties of bias reduction and variance control at the boundary over other methods.\footnote{See \cite{ruppert1994multivariate} and \cite{fan2018local}, and also \cite{cheruiyot2020local}. See \cite{imbens2019optimized} and citations therein for recent proposals which may be superior to local linear estimators.} The intercept coefficient of this regression is $\hat{\E}[Y_i|X_i=\bar{x}^+],$ and the slope coefficient is $\hat{m}'(\bar{x}^+).$ This may be executed using any package for local linear regression available in standard statistical software (R, STATA, etc.). 
    
    Explicitly, for a bandwidth $h_1>0$ and a kernel function $k_1,$\footnote{The triangular kernel $k_1(\nu)=(1-|\nu|),$ where $|\nu|\leq 1$ is recommended for boundary regressions such as this (\citealt{cheng1997automatic}).} solve the problem
    \begin{equation*}
        \hat{b}_0,\hat{b}_1=arg\min_{b_0,b_1}\sum_{i=1}^n(Y_i-b_0-b_1(X_i-\bar{x}))^2\cdot k_1\left(\frac{X_i-\bar{x}}{h_1}\right)\bm{1}(X_i>\bar{x}),
    \end{equation*}
    then $\hat{\E}[Y_i|X_i=\bar{x}^+]=\hat{b}_0,$ and  $\hat{m}'(\bar{x}^+)=\hat{b}_1.$
    This estimator has a closed-form expression, which is commonly found in nonparametric econometrics textbooks, e.g. \cite{li2007nonparametric}. 
    Note that the term $\hat{E}[Y_i|X_i=x]$ is not a boundary quantity, but it may be estimated analogously, with a local linear regression of $Y_i$ onto $X_i$ at $x,$ using only observations such that $X_i>\bar{x}.$
    
     The term $\hat{f}_X(\bar{x}^+)$ is a boundary density. As in the case of nonparametric boundary regression discussed above, the tendency for higher bias in this scenario necessitates the use of corrective methods, such as the use of local polynomial estimators. We recommend the approach recently proposed in \cite{pinkse2023estimates},\footnote{Other estimators of boundary densities include \cite{hjort1996locally}, \cite{loader1996local},  \cite{cheng1997automatic}, \cite{zhang1998kernel}, \cite{bouezmarni2010nonparametric} and \cite{cattaneo2020simple}.}  which has two important properties which are of great value in our case, and which are not found in other estimators currently available. First, this estimator achieves the same rates of bias convergence at the boundary that is normally achieved in interior points. Second, the density estimator is never negative, a situation which would be complicated to address in our case. Additionally, the estimators have simple closed-form expressions, requiring only the choice of a bandwidth tuning parameter, $h_2$.

Following \cite{pinkse2023estimates}, let $L_{X}(x)=\log f_{X}(x).$ We begin by estimating $L'_{X}(\bar{x}^+)$ as\footnote{This estimator is derived from applying Example 1 with $z=0$ to Equation (2) in \cite{pinkse2023estimates}.}
\begin{equation*}
    \hat{L}'_{X}(\bar{x}^+)=-\frac{\sum_{i=1}^n\left(1-{2(X_i-\bar{x})}/{h_2}\right)\bm{1}(\bar{x}< X_i\leq \bar{x} +h_2)}{\sum_{i=1}^n(X_i-\bar{x})\left(1-{(X_i-\bar{x})}/{h_2}\right)\bm{1}(\bar{x}< X_i\leq \bar{x}+ h_2)}.
\end{equation*}
This, then, allows us to estimate the density at the boundary as
\begin{equation*}
    \hat{f}_{X}(\bar{x}^+)=\frac{\frac{1}{nh_2}\sum_{i=1}^nk_2\left(\frac{X_i-\bar{x}}{h_2}\right)}{\int_0^1k_2(\nu)\exp(\hat{L}_{X}'(\bar{x}^+)\nu h_2)d\nu}, 
\end{equation*}
which can be calculated for many standard positive kernel functions $k_2$. For example, as in Example 5 of \cite{pinkse2023estimates}, when $k_2$ is the Epanechnikov kernel $k_2(\nu)=3/4(1-\nu^2)$ (which is the kernel recommended for boundary estimation in that paper), the denominator is equal to 
\begin{equation*}
    \frac{3}{2}\cdot\frac{2+\hat{L}_{X}'(\bar{x}^+)^2h_2^2-e^{\hat{L}_{X}'(\bar{x}^+)h_2}(2-2\hat{L}_{X}'(\bar{x}^+)h_2)}{\hat{L}_{X}'(\bar{x}^+)^3h_2^3}.
\end{equation*}
This estimator is available in packaged form in standard statistics software and can be implemented by simply restricting the sample to observations such that $X_i>\bar{x}$ and then using the package to estimate the density of $X_i$ at $X_i=\bar{x}.$ Incidentally, the same package also provides the estimator of the derivative $f_{X}'(\bar{x}^+),$ that can be used in the second-order approximation of the $\text{ATT}(x)$ (see Section \ref{sec:global}).

 The final term $\hat{f}_{s(D^*)|X=0}(0)$ is a standard deconvolution estimator. We follow the estimator described in \cite{schennachmeasurement}, which is the focus of an extensive literature, although there are many alternative proposals which are also referenced therein. 

We first write $\hat{\E}[e^{\textbf{i}\xi Y_i}|X_i=\bar{x}^+]$ as  a local linear regression of $e^{\textbf{i}\xi Y_i}$ onto $X_i$ at $X_i=\bar{x}$ using only observations such that $X_i>\bar{x}.$ To do this, for a matrix $\textbf{x}$ with rows $(1,(X_i-\bar{x}))'$ and a diagonal matrix $\textbf{k},$ with diagonal elements $k_3((X_i-\bar{x})/h_3)\bm{1}(X_i>\bar{x}),$ where $k_3$ is the triangular kernel, and $\textbf{e}_1=(1,0)',$ define the vector $A(\xi)=(e^{\textbf{i}\xi Y_1},\dots,e^{\textbf{i}\xi Y_n})',$ and program the function
\begin{equation*}
    \hat{\phi}(\xi)=\textbf{e}_1(\textbf{x}'\textbf{k}\textbf{x})^{-1}\textbf{x}'\textbf{k}A(\xi).
\end{equation*}

This is then imputed into a standard convolution estimator such as, for example:
\begin{equation*}
    \hat{f}_{s(X^*)|X=\bar{x}}(0)=\frac{1}{nh_4}\sum_{i=1}^ng(Y_i)\bm{1}(X_i=\bar{x}), 
\end{equation*}
with
\begin{equation*}
    g(Y_i)=\frac{1}{\hat{F}_X(\bar{x})\cdot 2\pi} \int  e^{\textbf{i}\xi Y_i}\frac{\phi_K(h_4\xi)}{A(\xi)} d\xi,
\end{equation*}
where $\phi_{k_4}(h_4\xi)=\int k_4(\nu)e^{\textbf{i}h_4\xi \nu}d\nu$ is the Fourier transform of the kernel $k_4$ evaluated at $h_4\xi.$ 

The nonparametric estimators just described require the choice of the bandwidth tuning parameters: $h_1,h_2, h_3$ and $h_4$, which modulate the bias-variance trade-off. This choice is rather important, and the subject of a great deal of interest in the nonparametrics estimation literature. At this stage, our recommendation is that if an optimal method for bandwidth selection exists for the specific estimator used at a given step, then it should be used.\footnote{For the selection of $h_1$ and $h_3$, \cite{ruppert1995effective} propose an optimal bandwidth estimator for the local linear regression, and this or similar approaches for bandwidth selection are usually offered in standard local linear regression packages. There are many proposals for improvement of bandwidth selection in the Regression Discontinuity Design literature which may be adapted to this context, see, e.g. \cite{imbens2012optimal}, \cite{arai2016optimal}, \cite{arai2018simultaneous} and \cite{calonico2020optimal}. For $h_2,$ the optimal bandwidth is $h = (72/(nf_X(0)^+\beta_2^2))^{1/5}$, which may be calculated following Example 6 in \cite{pinkse2023estimates}. $\beta_2$ can be estimated using a pilot estimate of $\hat{f}_X(0)_+,$ and both these terms are then added to the formula of the optimal bandwidth. Nevertheless, although theoretically sound, this method has not yet been studied. Thus, in practice, we recommend testing several bandwidths around this benchmark and looking for robustness of the results. For $h_4,$ consider the several approaches studied in \cite{delaigle2004practical}.} However, it is possible that the optimal bandwidths for $\widehat{\text{AME}}_{\bar{x}}^+$ are not the optimal bandwidths for each of the separate components. 

Additionally, there is an interest in the use of bias correction techniques for inference in the Regression Discontinuity Design literature which may have relevance in this context as well (e.g. \cite{calonico2014robust}, \cite{noack2024bias}, \cite{he2020wild}, \cite{armstrong2020simple} and citations therein). This is because,  if optimal bandwidths are used,  $\widehat{\text{AME}}_{\bar{x}}^+$ will likely be asymptotically biased.  We leave these questions for future research. 

\begin{remark}\label{rem:efficiency}(Improving efficiency by parameterizing the outcome distributions)\label{sec:parametric}
We can estimate $\hat{f}_{s(X^*)|X=\bar{x}}(0)$  more efficiently  via deconvolution if either of the distributions $f_{Y|X=\bar{x}}$ or $f_{Y|X=\bar{x}^+}$ are assumed to be of a known parametric family. Since the outcome is observed both at and above the bunching point, this assumption may not be overly speculative, and it is directly testable using standard Kolmogorov-Smirnoff tests or \cite{goldman2018comparing}. 

Assuming a parametric distributional class for the outcome is helpful because it allows the relevant characteristic function to be estimated less noisily. For example, in the empirical analysis in Section \ref{sec:application}, we assume that $Y_i|X_i=\bar{x}^+\sim N(\E[Y_i|X_i=\bar{x}^+],\sigma^2)$, while we allow $f_{Y|X=\bar{x}}$ to be fully nonparametric. This normality assumption appears to be a good approximation, as can be seen in Figure \ref{fig:condition_kdensities}. 

We estimate the parameter $\sigma^2$ in two steps. First, we restrict the sample to observations with $X_i>\bar{x},$ and for each $i,$ we predict $\hat{Y}_i$ via local linear regression estimated on the non-bunched sample. Then, we form the squared residual $\hat{\epsilon}_i^2=(Y_i-\hat{Y}_i)^2.$ Second, we fit a local linear model of $\hat{\epsilon}_i^2$ on $X_i$ estimated on the $X_i>\bar{x}$ subsample, and we estimate $\hat{\sigma}^2$ as the predicted value of this regression at $X=\bar{x}.$ We note that the variance of $\epsilon_i|X_i=\bar{x}^+$ is the same as the variance of $Y_i|X_i=\bar{x}^+.$ We then deconvolve the distribution $N(0,\hat{\sigma}^2)$ from $\hat{F}_{Y|X=\bar{x}}$ using standard methods.\footnote{Specifically, we employ the ``decon'' package in R.}
\end{remark}

\begin{remark} (Normal case) If $Y_i|X_i=\bar{x}$ and $Y_i|X_i=\bar{x}^+$ are both normal, then it is straightforward to verify that $s(X_i)|X_i=\bar{x}\sim N(\E[Y_i|X_i=\bar{x}]-\E[Y_i|X_i=\bar{x}^+],\sigma^2_{Y_i|X_i=\bar{x}}-\sigma^2_{Y_i|X_i=\bar{x}^+}),$ where $\sigma^2_V$ is the variance of $V.$ 

Although our identification and estimation results do not require that the distributions of $Y_i|X_i=\bar{x}$ and $Y_i|X_i=\bar{x}^+$ be known a priori, this simple expression may be used as a robustness check of the nonparametric estimation results whenever the distributions appear to be normal, as in the case of our application.
\end{remark}
%%%%%%%%%%%%%%

\section{Application: the effect of maternal smoking on birth weight\label{sec:application}}

We apply our method to estimate the marginal effect of maternal smoking during pregnancy on birth weight. This question is important for both economics and epidemiology, given that maternal smoking during pregnancy is recognized as a critical modifiable risk factor for low birth weight (\citealt{almond2005costs}). Low birth weight not only results in immediate societal costs, but also has significant long-term consequences for children's later-life outcomes (\citealt{black2007cradle}).

Our application uses the data set from \cite{almond2005costs}, which is also used in \cite{caetano2015}. These data, from the U.S. National Center for Health Statistics, include both maternal cigarettes smoked daily during pregnancy (our treatment) and birth weight in grams (our outcome) for over 430,000 mother-child pairs. These data also include many additional covariates/controls which \cite{almond2005costs} use to estimate an effect of maternal smoking on birth weight of around -200 grams, assuming selection on observables. \cite{caetano2015} uses these data to illustrate the discontinuity test, showing that selection-on-observables does not seem to be a valid assumption using \cite{almond2005costs}'s very detailed control specification.

We drop premature births (gestation < 36 weeks) as well as birth weight outliers (a small subset of observations with very high, >6 kg, or very low, <1 kg, full-term birth weights) from the data.\footnote{Our estimates barely change when we extend the sample to include premature babies and outliers.} About 81\% of the mothers in our analysis sample smoke zero cigarettes daily, about 11\% smoke between 1 and 10 cigarettes, and 99.95\% smoke 40 cigarettes or less. Figure \ref{fig:E[Y|X]} shows $\E[Y_i|X_i=x]$, the average birth weight among mothers smoking different amounts in our sample. 
The evidence of a discontinuity in $\E[Y_i|X_i=x]$ at $x=0$ is clear. While the average birth weight among mothers who smoke zero cigarettes is 3,499 grams, the average birth weight for mothers who smoke one cigarette is 3,338 grams, with the analogous quantity for mothers smoking 2--5 cigarettes ranging between 3,278 and 3,330. The rich controls in \cite{almond2005costs} can account for only 55 out of the 161 grams (3,499-3,338) difference in birth weight between the children of mothers who smoke zero versus one cigarette. Thus, there remains a lot of ``room''---106 grams---for both the treatment 
\vspace{.2cm}\begin{figure}[h!]
\protect\caption{Evidence of Bunching and Discontinuity Test\label{fig:E[Y|X]}}
\vspace{-.2in}
\begin{center}
\includegraphics[scale=.85]{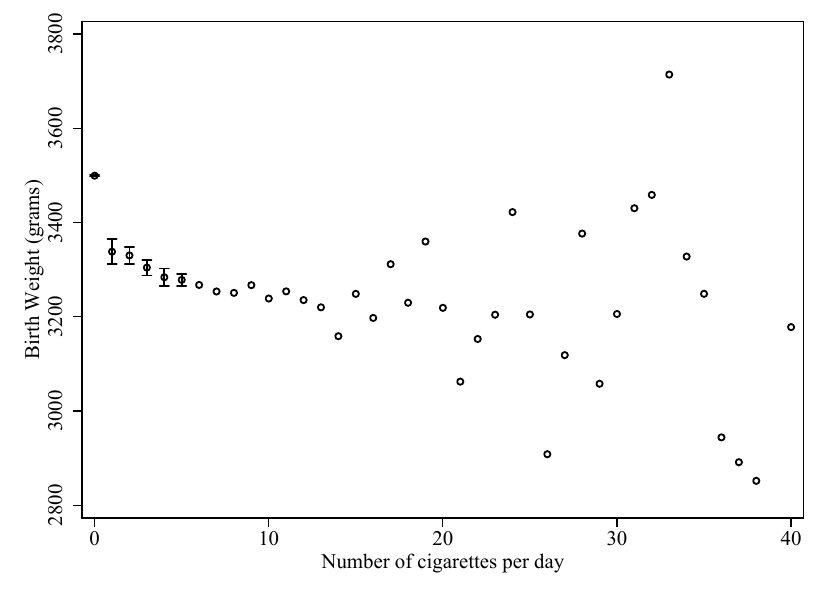}
\end{center}
\vspace{-0.2in}
\singlespace \footnotesize{Note: The figure shows $\hat{\E}[Y_i|X_i=x]$ for different values of $x$ (along with the 95\% confidence interval for $x\leq 5$).}
\end{figure} effect of cigarettes and selection on unobservables to explain this birth weight discontinuity.\footnote{The numbers cited in this paragraph are very similar to those reported in \cite{caetano2015}. That paper removes neither premature nor ``outlier'' births from the analysis sample. Similarly, including premature and outlier births in the causal analysis yields similar estimates of $\text{AME}_{0}^{+}$ and $\text{ATT}(x).$ Indeed, the full-sample results tend to be smaller in magnitude than those reported in Table \ref{tab:beta0} and Figure \ref{fig:ATT_d}, further supporting our qualitative claims.}

We estimate the marginal effect $\text{AME}_{0}^{+}$ of cigarette smoking on birth weight using the procedure outlined in Section \ref{sec:estimation}. In particular, we increase efficiency by assuming that $Y_i|X_i=\bar{x}^+$ is normally distributed (see Remark \ref{sec:parametric}). This assumption is empirically well-supported in our setting; Figure \ref{fig:condition_kdensities} makes clear that the conditional distributions $Y_i|X_i=x$ are all very close to normal with nearly the same variance for $0<x\leq 5$.\footnote{Figure \ref{fig:qq_plots} in the Supplementary Appendix \ref{ap:emp_supp} presents QQ-plots as additional graphical evidence that these conditional distributions are approximately normal.} In addition to simplifying estimation, Figure \ref{fig:condition_kdensities} can also be interpreted as indirect evidence for item (iii) of Assumption \ref{as:deconvregularity}, that $\epsilon_i \indep X_i^*|X_i=0$ (cf. Proposition \ref{prop:additive} in Appendix \ref{sec:noxstar}). To see this, note that the distribution of $\epsilon_i|X_i=x$ is just a horizontal shift of the distribution of $Y_i|X_i=x.$ Therefore, the fact that the $f_{Y|X=x}$ look like simple horizontal shifts for $0<X_i\leq 5$ is direct evidence that $\epsilon_i\indep X_i|0<X_i\leq 5$ as well as indirect evidence that this pattern may continue for $X_i^*\leq 0$ (though, of course, this cannot be directly verified). 
\begin{figure}[h!]
\protect\caption{Birth Weight Distributions Conditional on Maternal Smoking\label{fig:condition_kdensities}}
\vspace{-.2in}
\begin{center}
\includegraphics[scale=0.8]{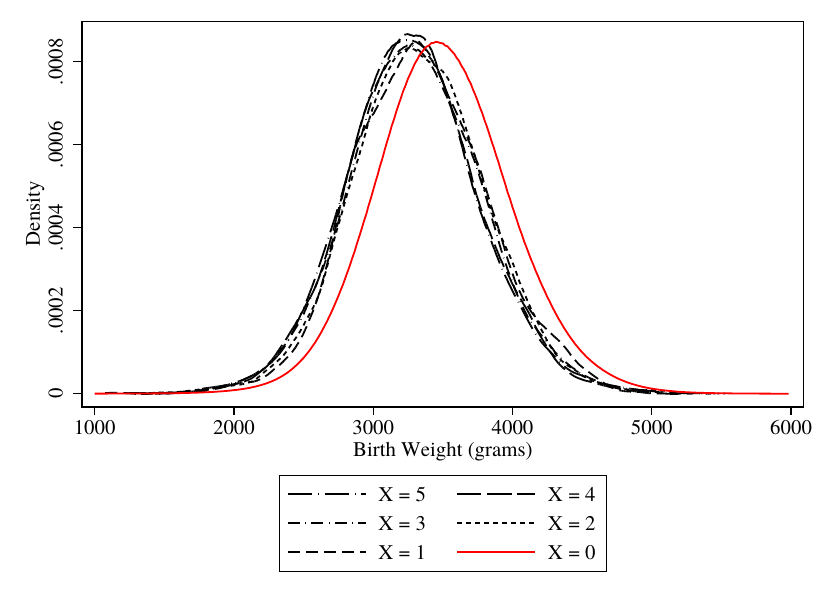}
\end{center}
\vspace{-0.2in}
\singlespace \footnotesize{Note: The figure plots the kernel density estimates (Epanechnikov, bandwidth=100) of birth weight conditional on maternal cigarettes smoked per day ($X_i$).}
\end{figure}

For context, in Figure \ref{fig:condition_kdensities} we also show the conditional distribution $Y|X=0$. 
While we estimate the variance of $Y_i|X_i=\bar{x}^+$ allowing for a trend in the analogous variances of $Y_i|X_i=x$ for $x>0$, we obtain nearly identical results simply using the variance of $Y|X=1$ as our estimator. This is not surprising given the evident homoskedasticity in the figure.

Estimation requires that we select several different bandwidths. We use a bandwidth of 4 cigarettes for both $\hat{\E}[Y_i|X_i=0^+]$ and $\hat{f}_X(0^+).$ However, we find very similar results using alternative, reasonable bandwidth choices for these objects. The selection of a bandwidth for $\hat{m}'(0^+)$ is more consequential for the standard error of our estimates, thus we report $\widehat{\text{AME}}_{0}^{+}$ for several bandwidth choices.

Table \ref{tab:beta0} presents our main results. We estimate the marginal effect of smoking on birth weight at zero to be around -8 grams. We estimate $s'(0^+),$ the selection bias around zero, to also be around -8 grams. The endogeneity term is quite precisely estimated in our application -- most of the sampling variation in $\widehat{\text{AME}}_{0}^{+}$ comes from sampling variation in the estimated slope of $\E[Y_i|X_i=x]$ as $x\downarrow 0$. 

\vspace{.3cm}
\begin{table}[h!]
\begin{singlespace}
\protect\caption{Main Results: The Average Marginal Effect of Smoking Near Zero Cigarettes\label{tab:beta0}}
\vspace{-0.2in}
\begin{center}
{\small \begin{tabularx}{\textwidth}{CCCCCC}
\toprule

& \multicolumn{5}{c}{bandwidth used in $\hat{m}'(0^+)$} \tabularnewline
\midrule\addlinespace[1.5ex]

&\(h =\) 4&\(h =\) 5&\(h =\) 6 &\(h =\) 7 &\(h =\) 8 \tabularnewline \addlinespace[.005in]
\midrule \midrule \(\widehat{\text{AME}}_{0}^{+} \)&-8.42&-10.57&-8.77 & -7.97 & -7.44  \tabularnewline \addlinespace[.005in]
&(8.26)&(5.47)&(3.20)&(2.74)&(2.37) \tabularnewline \addlinespace[.005in]

\bottomrule \addlinespace[1.5ex]

\end{tabularx}
}
\end{center}
\vspace{-.2in}
\footnotesize Note: $X_i$ is measured in cigarettes per day, and $Y_i$ is measured in grams. The bandwidths for $\hat{\E}[Y_i|X_i=0^+]$ and $\hat{f}_X(0^+)$ are both set to 4. The estimate for $s'(0^+)$ used for each of the displayed bandwidths is -8.10. Standard errors based on 2,500 bootstrap iterations. Data taken from \cite{almond2005costs}.
\end{singlespace}
\end{table}
\vspace{.1cm}

As shown in Section \ref{sec:global}, if we can make a local extrapolation, then we can further recover $\text{ATT}(x)$ for positive $x$. We use the first-degree ATT approximation formula in Section \ref{sec:estimation}, and present the estimates in Figure \ref{fig:ATT_d} for $x \leq 5.$  

\begin{figure}[h!]
\protect\caption{ATT Estimates on Birth Weight for Different Levels of Maternal Smoking\label{fig:ATT_d}}
\vspace{-.2in}
\begin{center}
\includegraphics[scale=0.75]{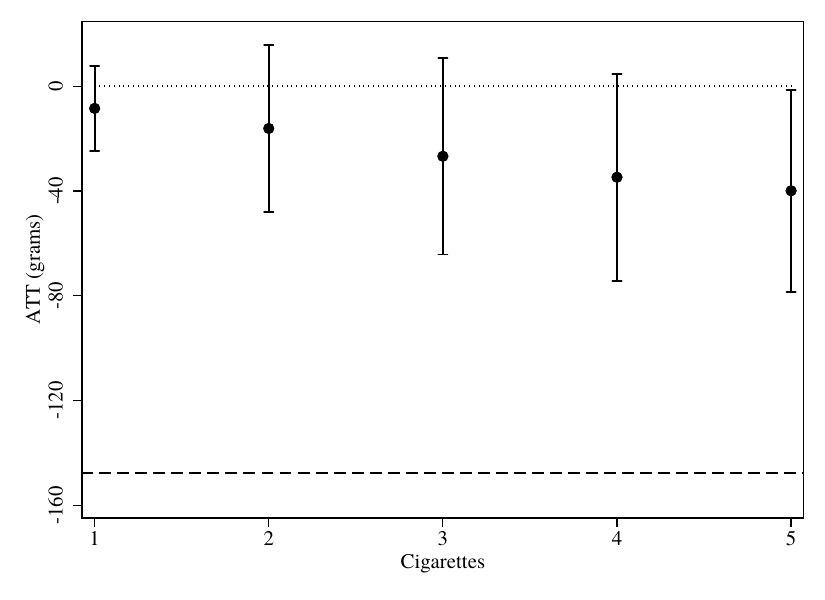}
\end{center}
\vspace{-0.2in}
\singlespace \footnotesize{Note: This figure reports $\widehat{\text{ATT}}(x)=\hat{E}[Y_i|X_i=x]-\hat{E}[Y_i|X_i=\bar{x}^+]-\hat{\theta} \cdot \hat{f}_{s(X^*)|X=\bar{x}}(0)^{-1}\cdot ({\hat{f}_X(\bar{x}^+)}/{\hat{F}_X(\bar{x})})\cdot (x-\bar{x})$  for different values of $x$, represented in the horizontal axis. It uses the same bandwidths in the estimation of $\hat{\theta} \cdot \hat{f}_{s(X^*)|X=\bar{x}}(0)^{-1}\cdot {\hat{f}_X(\bar{x}^+)}/{\hat{F}_X(\bar{x})}$ as in the $h=6$ estimate from Table \ref{tab:beta0}. The $\hat{\E}[Y_i|X_i=x]$ are estimated using using a local linear polynomial with triangular kernel and bandwidth of 3. 95\% confidence intervals based on 2,500 bootstrap iterations shown. The dashed line at -147.6 g is the estimated discontinuity in birth weight at $X=0$ (i.e., $\hat{\E}[Y_i|X_i=0^+]-\hat{\E}[Y_i|X_i=0]$). Data taken from \cite{almond2005costs}. }
\end{figure}

Our estimates show small negative $\text{ATT}$s. For instance, if mothers who currently smoke one cigarette were to quit smoking, their babies would gain about 8 grams at birth. The effects of quitting smoking add up to a gain of only about 40 grams (1.4 ounces) for mothers who smoked 5 cigarettes per day. The estimates for $x\leq 3$ are not significant at standard levels, while the $x=4$ and $x=5$ estimates are marginally significant at 10\% and 5\%, respectively. Figure \ref{fig:ATT_d} thus suggests that the effect of maternal smoking on birth weight is small, and we can rule out effects larger than 85 grams (3 ounces) at standard levels of significance. Indeed, these estimates are much smaller than the effect implied by the discontinuity of the outcome at $X=0$ shown in the dashed line of Figure \ref{fig:ATT_d}. Our estimates support the qualitative point in \cite{almond2005costs} that smoking seems to have only small effects on birth weight, although our findings suggest even smaller effects than in that paper. For reference, the average birth weight for a full-term birth in our sample is 3,458 grams (7 pounds and 10 ounces).

\section{Concluding remarks}\label{sec:conclusion}
When a treatment variable has bunching, this paper presents a new design for identification of the average marginal treatment effects at the bunching point. This is the first identification approach leveraging bunching which does not make assumptions on functional forms or on the shape of the distribution of the unobservables. Since the method does not rely on exclusion restrictions or special data structures, it provides a new avenue for the identification of treatment effects when well-established methods are not applicable.

The approach requires that the treatment be continuously distributed near the bunching point, and it relies on the continuity of the selection function at the bunching point selection value. Intuitively, those who chose the bunching point as an interior solution (i.e., not as a corner solution) are comparable to those right above the bunching point. Besides this and other regularity conditions, the method also requires two other conditions that are hard to explain succinctly, but are implied if the selection equation is monotonic in the selection variable and if the idiosyncratic errors, which are by definition mean independent from the selection variable, are independent from the selection variable at the bunching point.

Identification is achieved by the comparison of the density of the treatment near the bunching point (observed on the positive side) and the density of the selection function at the bunching point (identifiable on the negative side, thanks to a deconvolution of the distribution of the outcome at the bunching point to eliminate the noise from the idiosyncratic error). The ratio of these is exactly the magnitude of the selection bias. 

The approach results in the identification of the average marginal effect as a closed-form expression of identifiable quantities which are fairly standard well-known quantities in the econometrics literature, including the limits as the treatment approaches the bunching point of  (1) the density of the treatment, (2) the expected outcome, and  (3) the derivative of the expected outcome. The final term is the density of the selection variable at the bunching point, which is obtained through a deconvolution of the outcome near the bunching point from the outcome at the bunching point. All the terms in the identification equations can be estimated with off-the-shelf methods readily available in package form in all standard statistical software.

We apply the method to the estimation of the effect of smoking during pregnancy on the baby's birth weight. Our results show that the effects are rather small, strengthening the qualitative results in the previous economics literature.

There is ample opportunity for further technical advancements that would enhance the applicability of this method. Of note, there seems to be a scarce supply of options for the estimation of boundary derivatives in the literature. Even for local polynomial estimators \citep{fangijbels1996}, the optimal degree, kernel, and bandwidths for estimation of boundary derivatives remain unknown. Deconvolution estimators are ubiquitous in other fields, but still rare in economics, with the notable exception of the measurement error literature (see e.g. \citealt{schennachmeasurement}). The application of deconvolution to bunching is more aligned with the classical setting, where the distributions of the ``recorded signal'' and the ``distortion'' are identified, but the behavior of deconvolution estimators with boundary plugins is largely unexplored.

%Biliography%%%%%%%%%%%%%%%%%%%%%%%%%%%%%%%%%%%%%%%%%%%%%%%%%%%%%%%%%%%%%%%%%%%%
%\addtolength{\voffset}{.5cm}
\begin{singlespace}
\bibliographystyle{apalike}
\bibliography{nielsen_caetano_investments_outcomes.bib}
\end{singlespace}

%%%%%%%%%%%%%%%%%%%%%%%%%%%%%%%%%%%%%%%%%%%%%%%%%%%%%%%%%%%%%%%%%%%%%%%%%%%%%%%%
%Appendices

\appendix
\begin{center}
\section*{{\huge Appendix}}
\end{center}

\section{Proofs} \label{ap:proofs}

\subsection{Proof of Proposition \ref{prop:AMEDelta}}
Note that $\mathbbm{E}[Y_i(\bar{x})|X_i=x] = \E[Y_i|X_i=x]-ATT(x)$. The limit $\E[Y_i|X_i=\bar{x}^+]$ exists by Assumption \ref{as:continuoussupport} (ii) and $ATT(\bar{x}^+)$ exists by Assumption \ref{as:continuoussupport} (iii). Therefore,  $\E[Y_i(\bar{x})|X_i=\bar{x}^+]$ exists as well. Then, by $ATT(\bar{x}^+)=0,$ it must be equal to $\E[Y_i|X_i=\bar{x}^+]$. Therefore, 
\begin{align*}
    \text{ATT}(x)&=\E[Y_i|X_i=x]-\E[Y_i(\bar{x})|X_i=x]
    \\
    &=(\E[Y_i|X_i=x]-\E[Y_i|X_i=\bar{x}^+])-(\E[Y_i(\bar{x})|X_i=x])+\E[Y_i(\bar{x})|X_i=\bar{x}^+]):=m(x)-s(x),
\end{align*} 
which completes the result for the $\text{ATT}(x)$.

  Using again that $\mathbbm{E}[Y_i(\bar{x})|X_i=\bar{x}^+] = \E[Y_i|X_i=\bar{x}^+],$
    \begin{align*}
        \text{AME}_{\bar{x}}^+&:=\lim_{x\downarrow \bar{x}}\frac{\E[Y_i|X_i=x]-\E[Y_i(\bar{x})|X_i=x]}{x-\bar{x}}\\
        &=\lim_{x\downarrow \bar{x}} \left\{\frac{\E[Y_i|X_i=x]-\E[Y_i|X_i=\bar{x}^+]}{x-\bar{x}}-\frac{\E[Y_i(\bar{x})|X_i=x]+\E[Y_i(\bar{x})|X_i=\bar{x}^+]}{x-\bar{x}}\right\}\\
        &=\lim_{x\downarrow \bar{x}}\frac{\E[Y_i|X_i=x]-\E[Y_i|X_i=\bar{x}^+]}{x-\bar{x}}-\lim_{x\downarrow \bar{x}}\frac{\E[Y_i(\bar{x})|X_i=x]+\E[Y_i(\bar{x})|X_i=\bar{x}^+]}{x-\bar{x}}\\
        &=\lim_{x\downarrow \bar{x}}\frac{m(x)-m(\bar{x}^+)}{x-\bar{x}}-\lim_{x\downarrow \bar{x}}\frac{s(x)-s(\bar{x}^+)}{x-\bar{x}}:=m'(\bar{x}^+)-s'(\bar{x}^+),
    \end{align*}
    where the third line follows from the two limits existing separately, and the final line by $m(\bar{x}^+)=s(\bar{x}^+)=0$ (which follows immediately from their definitions), and then the definitions of $m'(\bar{x}^+)$ and $s'(\bar{x}^+)$ (see beginning of Section \ref{sec:model} for the introduction of this notation). 
    
    That the two limits above exist follows from parts (ii) and (iii) from Assumption \ref{as:continuoussupport} and the mean value theorem. Specifically, because $\E[Y_i|X_i=\bar{x}^+]$ exists and is finite and $\E[Y_i|X_i=x]$ is differentiable by part (ii) of Assumption \ref{as:continuoussupport}, the mean-value theorem applies, and therefore
    \begin{align} \label{eq:allisoneohm}
        \frac{\E[Y_i|X_i=x]-\E[Y_i|X_i=\bar{x}^+]}{x-\bar{x}}=\frac{d}{dx}\E[Y_i|X_i=x]\Big|_{x=\zeta(x)},
    \end{align}
    for some $\zeta(x)\in (\bar{x},x)$. The limit of the right-hand side exists by part (ii) of Assumption \ref{as:continuoussupport}, thus the limit of the left-hand side also exists. The argument for $s$ is analagous using part (iii) of Assumption \ref{as:smoothcounterfactual}. Note that \eqref{eq:allisoneohm} implies that $m'(\bar{x}) = \lim_{x \downarrow \bar{x}} m'(x)$ and $s'(\bar{x}) = \lim_{x \downarrow \bar{x}} s'(x)$, which will be useful in later proofs.

\subsection{Proof of Theorem \ref{thm:changeofvariables}}
    We begin by establishing local (strict) monotonicity of $s(x)$ on an interval $I = (\bar{x}, \bar{x}+\delta)$ where $\delta > 0$. By part (iii) of Assumption \ref{as:smoothcounterfactual} $\lim_{x \downarrow \bar{x}} s'(x) \ne 0$, or equivalently that for any $\epsilon > 0$, there exists a $\delta > 0$ such that $x \in (\bar{x},\bar{x}+\delta)$ implies $|s'(x)-k| \le \epsilon$, where we let $k:=s'(\bar{x}^+) \ne 0$. Now, consider $\epsilon = |k|/2,$ and define $I_\delta  = (\bar{x},\bar{x}+\delta)$ for the corresponding value of $\delta$. Then, $x \in I_\delta \implies |s'(x) - k| \le k/2$, which implies that $s'(x)$ has the same (non-zero) sign as $k$ does for all $x \in I_\delta$.
    
    Thus, $s$ admits an inverse function $u^{-1}(t)$ for all $t \in s(I_\delta)$. Given part (iii) of Assumption \ref{as:smoothcounterfactual}, $s'(x)$ exists for all $x \in (\bar{x},\bar{x}+\delta')$, where $\delta':=\min\{\delta, \varepsilon_3\}$. This implies that $s^{-1}$ admits the derivative function $\frac{d}{dt} s^{-1}(t) = 1/s'(s^{-1}(t))$ on $s(I)$, where we define $I:=(\bar{x},\bar{x}+\delta')$.
    
    Suppose first that $s$ is strictly increasing. Then, for any $t$,
    \begin{align*}
       P(s(X_i) \le t|X_i \in I) = P(X_i \le s^{-1}(t)|X_i \in I)
    \end{align*}
    Since $f_{X|I}(x) = \frac{d}{dx} P(X_i \le x|X_i \in I) = f_X(x)/P(X_i \in I)$ exists for all $x \in I$, this implies that $f_{s(X)|I}(t)$ exists for any $t \in s(I)$ and is equal to $\frac{d}{dt}P(X_i \le s^{-1}(t)|X_i \in I) = f_{X|I}(s^{-1}(t))/s'(s^{-1}(t))$, using the chain rule. Note that $P(X_i \in I)>0$ by part (i) of Assumption \ref{as:continuoussupport}. The case in which $s$ is decreasing is the same aside from the introduction of a minus sign. Combining both cases, we have that $|s'(x)| = f_{X|I}(x)/f_{s(X)|I}(s(x))$ for any $x \in I$.

\subsection{Proof of Theorem \ref{thm:extensivemargin}}  
    We have by Theorem \ref{thm:changeofvariables} that $s'(x)=\text{sgn}(s'(x)) \cdot f_{X|I}(x)/f_{s(X)|I}(s(x))$ for any $x \in I$. The limit $s'(\bar{x}^+)$ exists (and differs from zero) by part (iii) of Assumption \ref{as:continuoussupport}, and thus 
    $$s'(\bar{x}^+)= \lim_{x \downarrow \bar{x}} \left\{\text{sgn}(s'(x)) \cdot f_{X|I}(x)/f_{s(X)|I}(s(x))\right\}$$
    where $s'(\bar{x}^+) = \lim_{x \downarrow \bar{x}}s'(x)$ by the discussion following Eq. \eqref{eq:allisoneohm}. 
    
    Since $\text{sgn}(s'(x))$ is constant for $x\in I$ (see proof of Theorem \ref{thm:changeofvariables} above), $\theta= \text{sgn}(s'(\bar{x}^+))$ exists. The limit $f_{X|I}(x) = \frac{f_X(\bar{x}^+)}{P(X_i \in I)}$ exists by part (i) of Assumption \ref{as:continuoussupport}. Thus $f_{s(X)|I}(s(\bar{x}^+))$ must also exist, and
    $$s'(\bar{x}^+)= \theta \cdot f_{X|I}(\bar{x}^+)/f_{s(X)|I}(s(\bar{x}^+))$$
    Then by Proposition \ref{prop:AMEDelta}, we obtain the expression in Theorem \ref{thm:extensivemargin}.
    
\subsection{Proof of Lemma \ref{lem:theta}}
Under Assumption \ref{as:continuoussupport}, $\text{ATT}(\bar{x}^+)=0,$ so $\E[Y_i|X_i=\bar{x}^+]=\E[Y_i(\bar{x})|X_i=\bar{x}^+]$. Denote $y_{\bar{x}}:=\E[Y_i(\bar{x})|X_i^*=\bar{x}]$. Now consider first $\E[Y_i|X_i=\bar{x}] = \E[Y_i(\bar{x})|X_i^* \le \bar{x}] = \mathbbm{E}[\E[Y_i(\bar{x})|X_i^*]|X_i^* \le \bar{x}]$. The inner expectation is $\E[Y_i(\bar{x})|X_i^*] = y_{\bar{x}} + s(X^*_i)$, and thus $$\E[Y_i|X_i=\bar{x}^+]-\E[Y_i|X_i=\bar{x}] = -\mathbbm{E}[s(X_i^*)|X_i^* \le \bar{x}]$$
By part (ii) of Assumption \ref{as:smono}, $\text{sgn}(s(x))=-\theta$. Suppose that $\theta = 1$. Then $s(x) < 0$ for all $x \le \bar{x}$, and $\text{sgn}\left(\E[Y_i|X_i=\bar{x}^+]-\E[Y_i|X_i=\bar{x}]\right)=1$. If on the other hand $\theta = -1$, then $s(x) > 0$ for all $x \le \bar{x}$, and $\text{sgn}\left(\E[Y_i|X_i=\bar{x}^+]-\E[Y_i|X_i=\bar{x}]\right)=-1$.

Finally, suppose that in violation of part (iii) of Assumption \ref{as:smoothcounterfactual}, $s'(\bar{x})=0,$ so that $\theta = 0$. In this case, $s(x)=0$ for all $x < \bar{x}$ by Assumption \ref{as:smono}, and the expression of Lemma \ref{lem:theta} still holds. This is useful in establishing that the expression in Theorem \ref{thm:AMEid} holds even when there is no selection at $\bar{x}$.

\subsection{Proof of Theorem \ref{cor:AMEbunching}}
We first give a shorter proof that relies on a strengthening of Assumption \ref{as:bunchingcontinuousnew1}, as it is more intuitive. Specifically, suppose we add to Assumption \ref{as:bunchingcontinuousnew1} that $s'(x)$ and $f_{X^*}(x)$ are both continuous at $x=\bar{x}$. Then, paralleling the proof of Theorem \ref{thm:changeofvariables}, we could establish local monotonicity of $s(x)$ on an interval $I^* = (\bar{x}-\delta, \bar{x}+\delta)$ that now straddles both sides of the bunching point. To see this, let $k:=s'(\bar{x}) \ne 0$. Then by $\lim_{x \rightarrow \bar{x}} s'(x) = s'(\bar{x}) \ne 0$, we know that for $\epsilon =|k|/2$, there exists a $\delta > 0$ such that $|x-\bar{x}| < \delta$ implies $|s'(x)-k| \le |k|/2$, which implies that $s'(x)$ has the same (non-zero) sign as $k$ does for all $x \in I^*$. Then we would have, exactly as in the case of Theorem \ref{thm:changeofvariables}, that for any $x \in I^*$,
\begin{equation}\label{eq:changeofvariablesbunchingclosure}
s'(x)= \theta \cdot \frac{f_{X^*|I^*}(x)}{f_{s(X^*)|I^*}(s(x))},
\end{equation}
and in particular since $\bar{x} \in I^*$, $s'(\bar{x})= \theta \cdot \frac{f_{X^*|I^*}(\bar{x})}{f_{s(X^*)|I^*}(0)}$ using that $s(\bar{x})=0$ by continuity of $s$ at $\bar{x}$. We can now transform this expression so that it no longer depends on $f_{s(X^*)|I^*}(0),$ but it depends instead on $f_{s(X^*)|X^*\le \bar{x}}(0)$. Note that the point $\bar{x}$ belongs to the intersection of $I^*$ and the set $(-\infty,\bar{x}]$, so
$$f_{s(X^*)|I^*}(0) = \frac{P(X_i^* \le \bar{x})}{P(X_i^* \in I^*)} \cdot f_{s(X^*)|X^* \le \bar{x}}(0). $$
Meanwhile, $f_{X^*|I^*}(\bar{x}^+) = f_{X^*}(\bar{x})/P(X_i^* \in I^*)=f_{X^*}(\bar{x}^+)/P(X_i^* \in I^*)$. Thus, all together we have:
\begin{equation}\label{eq:sprimexnoi}
s'(\bar{x})=\theta\cdot \frac{f_{X^*}(\bar{x}^+)/F_X(\bar{x})}{f_{s(X^*)|X=\bar{x}}(0)}, 
\end{equation}
noting that $f_{s(X^*)|X^*\leq \bar{x}}=f_{s(X^*)|X=\bar{x}}$ and $P(X^* \le \bar{x}) = P(X=\bar{x}) = F_{X}(\bar{x}),$ since $X=\bar{x}$ if and only if $X^* \le \bar{x}$. Eq. \eqref{eq:sprimexnoi} is useful because the quantity $P(X_i^* \in I^*)$ for the unknown interval $I^*$ has canceled out. The final result then follows by Proposition \ref{prop:AMEDelta}, and using that $X_i=\bar{x} \iff X_i^* \le \bar{x}$.\\

Now, we show that the result holds under the weaker version of Assumption \ref{as:bunchingcontinuousnew1} made before the statement of Theorem \ref{cor:AMEbunching}. To begin, note that Assumptions \ref{as:continuoussupport}-\ref{as:bunchingcontinuousnew1} do not require $s(x)$ to be differentiable exactly at $\bar{x}$, though they do guarantee that $s'(\bar{x}^+)$ and $f_{X^*}(\bar{x}^+)$ exist. Without $s'(\bar{x})$, we cannot necessarily construct the interval $I^*$ used above that extends on either side of $\bar{x}$ whose width is based upon the value of $s'(\bar{x})$.

Instead, we can use Assumption \ref{as:smono} and the existence of $f_{s(X^*)}(0)$ by Assumption \ref{as:bunchingcontinuousnew1} to allow our logic to ``cross'' the bunching point: existence of $f_{s(X^*)}(0)$ implies that $\lim_{h \downarrow 0} h^{-1}\{P(s(X_i^* \le t+h)-P(s(X_i^* \le t)\}$ and $\lim_{h \uparrow 0} h^{-1}\{P(s(X_i^* \le t+h)-P(s(X_i^* \le t)\}$ both exist and are equal to one another. Let $I^-:=(-\infty, \bar{x}] \cup I = (-\infty,\bar{x}+\delta')$ denote the interval in part (i) of Assumption \ref{as:bunchingcontinuousnew1} upon which $f_{s(X^*)}(s(\cdot))$ is assumed to exist, which includes $\bar{x}$. Note that by part (ii) of Assumption \ref{as:bunchingcontinuousnew1}, $s(\bar{x}) = s(\bar{x}^+)=0$. Note that $0$ belongs to the interior of $s(I^-)$, and the existence of $f_{s(X^*)}(0)$ implies the existence of $f_{s(X^*)|I^-}(0)=f_{s(X^*)}(0)/P(X_i^* \in I^-)$.

Suppose first that $s'(\bar{x}^+) > 0$ so that $s$ is strictly increasing on $I$. In this case, Assumption \ref{as:smono} implies that $s(x) < 0$ for all $x < \bar{x}$. Let $s^{-1}(\cdot)$ continue to denote the inverse of $s$ on the interval $I$, which is well-defined as a function from $s(I)$ to $I$. Therefore, conditional on the event $X_i^* \in I^-$, $s(X^*_i) \le t$ occurs if and only if $X_i^* \le s^{-1}(t)$ for any $t \in s(I)$, even though $s(\cdot)$ need not be invertible on all of $I^{-}$. Thus:
\begin{align*}
    f_{s(X^*)|I^-}(0)=\lim_{t \downarrow \bar{x}} \frac{d}{dv} P(s(X^*_i) \le t|X_i^* \in I^-) = \lim_{t \downarrow \bar{x}} \frac{d}{dv} P(X^*_i \le s^{-1}(t)|X_i^* \in I^-) = \lim_{t \downarrow \bar{x}} \frac{f_{X^*|I^-}(s^{-1}(t))}{s'(s^{-1}(t))}.
\end{align*}
 By differentiability of $s$ on $I$ (part (iii) of Assumption \ref{as:continuoussupport}), $s^{-1}$ is continuous on $I$ and thus we have that $\lim_{t \downarrow \bar{x}} f_{X^*|I^-}(s^{-1}(t)) =\lim_{t \downarrow \bar{x}} f_{X^*}(s^{-1}(t))/P(X_i^* \in I^-) = f_{X^*}(\bar{x}^+)/P(X_i^* \in I^-)$, where the limit $f_{X^*}(\bar{x}^+)$ exists by part (i) of Assumption \ref{as:continuoussupport}. Similarly, $\lim_{t \downarrow \bar{x}} s'(s^{-1}(t)) = \lim_{x \downarrow \bar{x}} s'(x)$, where the limit exists and differs from zero by part (iii) of Assumption \ref{as:continuoussupport}. Thus, using right continuity of $f_{s(X^*)}(t)$ at $t=0$ by part (i) of Assumption \ref{as:bunchingcontinuousnew1}:
\begin{equation} \label{eq:fsx0}
f_{s(X^*)|I^-}(0) = \frac{\lim_{t \downarrow \bar{x}} f_{X^*|I^-}(s^{-1}(t))}{\lim_{t \downarrow \bar{x}} s'(s^{-1}(t))} = \frac{f_{X^*}(\bar{x}^+)/P(X_i^* \in I^-)}{\lim_{x \downarrow \bar{x}} s'(x)}= \frac{f_{X^*}(\bar{x}^+)/P(X_i^* \in I^-)}{s'(\bar{x}^+)},
\end{equation}
using in the last equality that $s'(\bar{x}^+) = \lim_{x \downarrow \bar{x}}s'(x)$ by the discussion following Eq. \eqref{eq:allisoneohm}.

Now, again given the existence of $f_{s(X^*)}(0)$ and hence of $f_{s(X^*)|I^{-}}(0)$, we can also use the derivative from the left to write:
\begin{align*}
    f_{s(X^*)|I^-}(0)&=\lim_{h \uparrow 0} \frac{P(s(X_i^*) \le h|X_i^* \in I^-)-P(s(X_i^*) \le 0|X_i^* \in I^-)}{h}\\
    &= \frac{P(X_i^* \le \bar{x})}{P(X_i^* \in I^-)} \cdot \lim_{h \uparrow 0} \frac{P(s(X_i^*) \le h|X_i^* \le \bar{x})-P(s(X_i^*) \le 0|X_i^* \le \bar{x})}{h}\\
    &= \frac{P(X_i^* \le \bar{x})}{P(X_i^* \in I^-)} \cdot f_{s(X^*)|X^* \le 0}(0).
\end{align*}
Combining with Eq. \eqref{eq:fsx0}, we have that $s'(\bar{x}^+)=\frac{f_{X^*}(\bar{x}^+)}{f_{s(X^*)|X_i^* \le \bar{x}}(0)},$ using that $P(X_i^* \in I^-) \ge P(X_i^* \le \bar{x}) \ne 0$. 

The case in which $s$ is decreasing on $I$ proceeds analogously, aside from the introduction of a minus sign. We then have that in either case, $s'(\bar{x}^+)=\theta \cdot \frac{f_{X^*}(\bar{x}^+)}{f_{s(X^*)|X_i^* \le \bar{x}}(0)}.$ The final result then follows by Proposition \ref{prop:AMEDelta}, and using that $X_i=\bar{x} \iff X_i^* \le \bar{x}$.

\subsection{Proof of Theorem \ref{thm:AMEid}}
Combining Eq. \eqref{eq:AMEbunching} with Lemma \ref{lem:u(D*)}, we have
\begin{equation*}
    \text{AME}_{\bar{x}}^+=m'(\bar{x}^+)-2\pi\theta \left(\int \frac{\E[e^{\textbf{i}\xi Y_i}|X_i=\bar{x}]}{\E[e^{\textbf{i}\xi Y_i}|X_i=\bar{x}^+]}d\xi\right)^{-1}\cdot \frac{f_X(\bar{x}^+)}{F_X(\bar{x})}.
\end{equation*}
To see that $m'(\bar{x}^+)=\lim_{x\downarrow \bar{x}}\frac{d}{dx}\E[Y_i|X_i=x]$, take the limit of Eq. \eqref{eq:allisoneohm} in the proof of Proposition \ref{prop:AMEDelta} as $x \downarrow \bar{x}$. The expression for $\theta$ follows from Lemma \ref{lem:theta}.

\subsection{Proof of Lemma \texorpdfstring{\ref{lem:u(D*)}}{2}}
Conditional on $X_i=\bar{x}$, Equation \eqref{eq:equationzero} holds, so:
\begin{align*}
    F_{Y-y_{\bar{x}}|X=\bar{x}}(y)&=\mathbb{P}(s(X_i^*)+\epsilon_i\leq y|X_i=\bar{x})\\
    &=\int F_{s(X^*)|X=\bar{x},\epsilon=e}(y-e)dF_{\epsilon|X=\bar{x}}(e)=\int F_{s(X^*)|X=\bar{x},\epsilon=e}(y-e)dF_{\epsilon|X^*=\bar{x}}(e)
    \\
    &=\int F_{s(X^*)|X=\bar{x}}(y-e)dF_{\epsilon|X=\bar{x}^+}(e),
\end{align*}
where the second equality follows from part (iii) of  Assumption \ref{as:deconvregularity}. The third equality uses part (ii) and the Helly-Bray Theorem, due to the fact that $F_{s(X^*)|X=\bar{x}}(y-e-y_{\bar{x}})$ is bounded and continuous.

Then, by part (i) of Assumption \ref{as:deconvregularity}, $F_{Y-y_{\bar{x}}|X=\bar{x}}(y)$ is differentiable, and by part (i) of Assumption \ref{as:bunchingcontinuousnew1}, $ F_{s(X^*)|X=\bar{x}}$ is differentiable. By the Dominated Convergence Theorem,  we can write the convolution inverse problem
\begin{equation}\label{eq:convolution}
f_{Y-y_{\bar{x}}|X=\bar{x}}(y)=\int f_{s(X^*)|X=\bar{x}}(y-e)dF_{\epsilon|X=\bar{x}^+}(e).
\end{equation}
Equation \eqref{eq:convolution} has a well-known closed form solution using the Fourier representation (see e.g. \citealt{schennachmeasurement}): 
\begin{align} 
    f_{s(X^*)|X=\bar{x}}(v)&=\frac{1}{2\pi}\int \frac{\E[e^{\textbf{i}\xi (Y_i-y_{\bar{x}})}|X_i=\bar{x}]}{\E[e^{\textbf{i}\xi \epsilon_i}|X_i=\bar{x}^+]}e^{-\textbf{i}\xi v} d\xi=\frac{1}{2\pi}\int \frac{\E[e^{\textbf{i}\xi Y_i}|X_i=\bar{x}]}{\E[e^{\textbf{i}\xi (\epsilon_i+y_{\bar{x}})}|X_i=\bar{x}^+]}e^{-\textbf{i}\xi v} d\xi \nonumber \\
    &=\frac{1}{2\pi}\int \frac{\E[e^{\textbf{i}\xi Y_i}|X_i=\bar{x}]}{\E[e^{\textbf{i}\xi Y_i}|X_i=\bar{x}^+]}e^{-\textbf{i}\xi v} d\xi, \label{eq:deconvgeneralpointvappendix}
\end{align}
where the second equality follows because, for $X_i>\bar{x},$ $\epsilon_i=Y_i-\E[Y_i|X_i],$ and thus $\E[e^{\textbf{i}\xi (\epsilon_i+y_{\bar{x}})}|X_i=x]=\E[e^{\textbf{i}\xi (Y_i-\E[Y_i|X_i]+y_{\bar{x}})}|X_i=x]=e^{-\textbf{i}\xi(\E[Y_i|X_i=x]-y_{\bar{x}})}\E[e^{\textbf{i}\xi Y_i}|X_i=x],$ and then $\E[Y_i|X_i=x]$ converges to $y_{\bar{x}}$ as $x\downarrow 0.$

The result then follows if we evaluate the expression in Equation \eqref{eq:deconvgeneralpointvappendix} above at $v=0$. The integral in Eq. \eqref{eq:deconvgeneralpointvappendix} must converge for $v=0$ by existence of $f_{s(X^*)|X=\bar{x}}(0)$, which is guaranteed by part (i) of Assumption \ref{as:bunchingcontinuousnew1}.

\subsection{Proof of Theorem \ref{thm:localexpansion}}
Assumption \ref{as:analytic} implies that $s$ is also analytic on $I'$. Let $\bar{I}=[\bar{x},\bar{x}+\epsilon_5]$, the closure of $I'$ in $\mathbbm{R}$. Taking a direct analytic continuation from $I'$ to $\bar{I}$, we can take $s$ to be analytic on all of $\bar{I}$ with $s(\bar{x})=0$. Then, there exists an $\varepsilon>0$ such that $s(x)= \sum_{k=1}^\infty s^{(k)}(\bar{x})\cdot \frac{(x-\bar{x})^k}{k!}$ for any $x \in [\bar{x},\bar{x}+\varepsilon]$. Analyticity of $s$ on $\bar{I}$ implies that $s^{(k)}(x)$ is continuous on $\bar{I}$ for each $k$ and thus $s^{(k)}(\bar{x})=s^{(k)}(\bar{x}^+)$. The expression for the remainder follows from the Taylor theorem.
  
\subsection{Proof of Corollary \ref{cor:extrapolation}}
Since $s$ is analytic on the set $\bar{I}$ defined in the Proof of Theorem \ref{thm:localexpansion}, it is continuously differentiable and we can use Equation \eqref{eq:changeofvariablesbunchingclosure} from the proof of Theorem \ref{cor:AMEbunching}
\begin{equation*}
s'(x)= \theta \cdot \frac{f_{X^*|I^*}(x)}{f_{s(X^*)|I^*}(s(x))},
\end{equation*}
for any $x \in I^*$, where $I^*$ is a set that includes a neighborhood of $\bar{x}$. Differentiating with respect to $x$, we obtain: 
$$s^{(k)}(x)=\theta \cdot \frac{d^{k-1}}{dx^{k-1}}\frac{f_{X^*|I^*}(x)}{f_{s(X^*)|I^*}(s(x))}.$$
Following the same steps as the proof of Theorem \ref{cor:AMEbunching}, we can replace the conditioning on $X_i^* \in I^*$ with conditioning on $X_i^* \le \bar{x}$ and evaluate at $\bar{x}$ to obtain
$$s^{(k)}(\bar{x})=\frac{\theta}{F_X(\bar{x})}\cdot \frac{d^{k-1}}{dx^{k-1}}\frac{f_{X^*}(\bar{x})}{f_{s(X^*)|X=\bar{x}}(s(\bar{x}))}.$$
Working out the derivatives and and evaluating at $\bar{x}$, we have
\begin{align*}
s^{(k)}(x)&=\frac{\theta}{F_X(\bar{x})}\cdot \sum_{\ell = 0}^{k-1} \begin{pmatrix} k-1 \\ \ell \end{pmatrix} f^{(k-1-\ell)}_{X^*}(\bar{x}) \cdot \left.\frac{d^\ell}{dx^{\ell}} \{f_{s(X^*)|X=\bar{x}}(s(x))\}^{-1}\right|_{x=\bar{x}}.
\end{align*}
The derivatives of $1/f_{s(X^*)|X=\bar{x}}(s(x))$ evaluated at $x=\bar{x}$ can be worked out recursively knowing $f_{s(X^*)|X=\bar{x}}(s(\bar{x}))$ and $f^{(\ell)}_{s(X^*)|X=\bar{x}}(s(\bar{x}))$ for each $\ell$. These derivatives are identified since by Eq. \eqref{eq:deconvgeneralpointvappendix}, $f_{s(X^*)|X=\bar{x}}(v)$ is identified for every $v \in s((-\infty, \bar{x}))$, e.g.:
\begin{equation*}
    \frac{d}{dv}f_{s(X^*)|X=\bar{x}}(v)=\frac{1}{2\pi} \cdot \frac{d}{dv}\int \frac{\E[e^{\textbf{i}\xi Y_i}|X_i=\bar{x}]}{\E[e^{\textbf{i}\xi Y_i}|X_i=\bar{x}^+]}e^{-\textbf{i}\xi v} d\xi,
\end{equation*}
which can be differentiated again and again as needed. The power series converges for each $x \in I_{\varepsilon}$ by analyticity of $s(x)$. Finally, $f_{X^*}^{(k)}(\bar{x})$ is identified for each $k$ as $f_{X^*}^{(k)}(\bar{x}) = f_X^{(k)}(\bar{x}^+)$.

\subsection{Proof of Corollary \ref{cor:extrapolationfar}}
Given Equation \eqref{eq:taylorexpression},
$$\text{ATT}(x)=m(x)-\sum_{k=1}^\infty s^{(k)}(\bar{x}^+)\cdot  \frac{(x-\bar{x})^k}{k!}.$$
The Cauchy–Hadamard formula gives that the radius of convergence of the power series appearing on the right-hand side is $R = \left( \limsup_{k \to \infty} \left| \frac{s^{(k)}(c)}{k!} \right|^{1/k} \right)^{-1} $.

\section{Constructing \texorpdfstring{$X_i^*$}{Xi*}} \label{sec:xstarappendix}

In this appendix, we detail the models and results mentioned in Section \ref{sec:xstar}. When not provided in-line, proofs are found in the Supplementary Appendix \ref{sec:xstarproofs}.

\subsection{Models with scalar heterogeneity}\label{sec:economic}
In this section, we consider models in which individuals' choice problems are parameterized by a scalar parameter $\rho_i$, and we obtain a choice rule $X_i = \max \{h(\rho_i),\bar{x}\}$ for a strictly increasing and differentiable function $h$. This is equivalent to Eq. \eqref{eq:xbarmax} with $X_i^*=h(\rho_i)$. Since such settings arise naturally due to physical positivity constraints (the quantity of a good consumed, the time spent on an activity, etc.), we take $\bar{x}=0$ for simplicity. This is without loss of generality because we can always redefine $X_i$ as $X_i - \bar{x}$.

In this class of examples, we suppose individuals have two scalar choice variables $x$ and $r$, with individual $i$'s utility denoted as $V(x,r;\theta_i)$ for a family of utility functions parametrized by $\theta$. We suppose that individuals' choices are made subject to the budget constraint that $p\cdot x+r=W$, where $p$ indicates the relative price of $x$ versus $r$, and $W$ a budget common to all individuals. This, for example, nests the class of Example \ref{ex:cigarettes} in the time use case (e.g., hours spent on TV watching), where $p=1$ and $W=24$: all individuals have 24 hours in a day and can swap time one-for-one between watching TV, $x,$ and spending time on other activities, $r$. Note that the budget constraint can be relaxed to $p\cdot x+r \le W$, provided that utility is strictly increasing in at least one of the two goods for all individuals.
\begin{proposition} \label{prop:transformationexistence}
    Suppose $V:\mathbb{R}^+\times \mathbb{R}^+\rightarrow \mathbb{R}$ is a twice differentiable function of $(x,r),$ with heterogeneity parameterized by a scalar $\theta_i$ such that, with probability one,
    $$X_i = X(\theta_i):= \text{argmax}_x\{V(x,r;\theta_i) \text{ subject to }  p\cdot x+r=W\}.$$
    Define $MRS(x,\theta) := V_{x}(x,W-px;\theta)/V_{r}(x,W-px;\theta),$ and suppose that $MRS(x,\theta)$ is strictly decreasing in $x$ and continuously differentiable on $[0,W] \times \Theta$. Finally, suppose that $P(X_i=W)=0$ (nobody spends their entire budget on $X_i$). Then, with probability one, there exists a strictly increasing and differentiable function $h$ such that $$X_i = \max\{h(\rho_i),0\},$$ 
    where $\rho_i = MRS(0,\theta_i)$.
\end{proposition}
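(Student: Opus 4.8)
The plan is to reduce the two-variable constrained problem to a one-dimensional one, characterize its unique maximizer using the strict monotonicity of $MRS(\cdot,\theta)$ in $x$, and then read off $h$ via the implicit function theorem.

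First I would substitute the budget constraint to write $X(\theta)$ as the maximizer of $\tilde V(x;\theta):=V(x,W-px;\theta)$ over the feasible interval $\{x\ge 0:\ W-px\ge 0\}$. Using $V\in C^2$ and $V_r>0$ on this set, $\tilde V$ is differentiable with $\tilde V_x(x;\theta)=V_r(x,W-px;\theta)\bigl(MRS(x,\theta)-p\bigr)$, so the sign of $\tilde V_x(\cdot;\theta)$ is the sign of $MRS(x,\theta)-p$, which is strictly decreasing in $x$. Hence $\tilde V_x$ is positive then negative, $\tilde V(\cdot;\theta)$ is strictly quasiconcave, and the maximizer is unique. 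It equals: the corner $0$ if $MRS(0,\theta)\le p$; the unique interior point $x^\star(\theta)$ solving $MRS(x^\star(\theta),\theta)=p$ (which exists by the intermediate value theorem together with strict monotonicity) if $MRS(0,\theta)>p$ and $MRS$ dips below $p$ somewhere feasible; or the upper corner otherwise, an event of probability zero by $P(X_i=W)=0$. Thus, with probability one, $X_i=0$ when $\rho_i:=MRS(0,\theta_i)\le p$, and $X_i=x^\star(\theta_i)$ when $\rho_i>p$.

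Next I would pass from the type $\theta_i$ to $\rho_i$. This is where scalar heterogeneity is used: because the heterogeneity is a single parameter, $\rho=MRS(0,\theta)$ determines the entire curve $x\mapsto MRS(x,\theta)$ — equivalently, we may simply take $\rho$ as the type index — so the interior selection $x^\star$ is a function $h$ of $\rho$ alone, namely $h(\rho)$ is the unique $x$ solving $MRS(x,\theta(\rho))=p$. Differentiability of $h$ on $\{\rho>p\}$ follows from the implicit function theorem applied to $MRS(h(\rho),\theta(\rho))=p$: the $x$-partial of $MRS$ is continuous and nonzero (strictly negative), so $h$ is $C^1$; strict monotonicity follows because increasing $\rho$ raises the $MRS$ curve, which — $MRS$ being decreasing in $x$ — shifts its crossing with the level $p$ strictly to the right. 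Finally, extend $h$ to $\{\rho\le p\}$ by any strictly increasing $C^1$ function through $(p,0)$ with matching one-sided derivative; there $h(\rho)\le 0$, so $X_i=\max\{h(\rho_i),0\}$ with probability one, which is the claim.

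The main obstacle is the passage to $\rho_i$: one must argue that the interior optimal choice depends on $\theta_i$ only through $\rho_i=MRS(0,\theta_i)$, i.e., that the MRS curve along the budget line is pinned down by its value at the boundary point $x=0$. This is exactly the force of the scalar-heterogeneity hypothesis (and what fails under multidimensional heterogeneity, treated separately in Appendix \ref{sec:unrestrictedheterigeneity}); in the write-up I would make it explicit — e.g. by taking $\rho$ as the primitive index or by assuming $MRS(0,\cdot)$ is injective on the support of $\theta_i$ — before invoking the implicit function theorem. The one-dimensional reduction, the strict-quasiconcavity and corner analysis, and the $C^1$ and monotonicity properties of $h$ are then routine.
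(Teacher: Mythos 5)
Your proposal is correct and follows essentially the same route as the paper: reduce to the one-dimensional problem along the budget line, characterize the interior optimum by the first-order condition $MRS=p$, obtain monotonicity and differentiability of the interior choice from the implicit function theorem, and compose with the inverse of $\rho(\theta)=MRS(0,\theta)$. The point you flag about needing the $MRS$ curves to be ordered in $\theta$ (so that $\rho$ pins down the whole curve and raising $\rho$ shifts the crossing with $p$ rightward) is exactly the assumption $MRS_\theta>0$ that the paper invokes from its surrounding discussion, and your explicit extension of $h$ to a strictly increasing function taking negative values on $\{\rho\le p\}$ is in fact a slightly more careful treatment of the corner region than the paper's own $h(\rho)=X(\theta(\rho))$, which is constant at zero there.
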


Note that $V(x,r;\theta)$ need be defined only for positive values of $x$ (and not e.g. on all of $\mathbbm{R}^2$). Thus, we do not need to conceive the utility that an individual would receive from a negative amount of cigarettes, or from watching TV for a negative amount of time. The variable $\theta_i$ regulates the relative preference for $x$ over $r,$ in the sense that $V_x(x,r;\theta)/V_r(x,r;\theta)$ is strictly increasing in $\theta$ for all $x,r$. 

The assumptions invoked in Proposition \ref{prop:transformationexistence} are natural if preferences are convex over $(x,r)$, can be parameterized by a scalar $\theta_i$, and utility is smooth in that scalar. While an assumption that the relative price $p$ of $x$ and $r$ is homogeneous across individuals is relatively weak (within e.g. a given market), the restriction that budgets $W$ are homogeneous across individuals $i$ is much stronger. This can be relaxed by assuming that the demand for $x$ does not depend on $W$, which is reasonable if preferences are approximately quasi-linear, $r$ is a good, or $W$ is suitably high. We can proceed without these assumptions by controlling for income in estimation, such that the assumptions of Proposition \ref{prop:transformationexistence} and Theorem \ref{thm:AMEid} need only hold conditional on income. This strategy can be helpful in motivating Proposition \ref{prop:transformationexistence} more generally by also controlling for other proxies of preference heterogeneity, making the scalar heterogeneity assumption less restrictive. Supplementary Appendix \ref{sec:controls} discusses the use of control variables in our approach.

\subsection{Monotonic transformations of \texorpdfstring{$X_i^*$}{Xi*}}
\label{sec:generalization AME}
In this section, we show that we can generalize Theorem \ref{thm:AMEid} to the case where $X^*_i$ is not the primitive source of selection, but instead there exists some (possibly unobserved) index of heterogeneity $\rho_i$ such that $X^*_i = h(\rho_i)$. In this case, we can make Assumptions \ref{as:smono}-\ref{as:deconvregularity} with $\rho_i$ replacing $X_i^*$, while maintaining the constructive estimand of Theorem \ref{thm:AMEid} for $\text{AME}_{\bar{x}}^+$, even if $\rho_i$ is unobserved and the function $h$ is unknown. The index $\rho_i$ will therefore not be unique. For example, in the case of Proposition \ref{prop:transformationexistence} we could make Assumptions \ref{as:smono}-\ref{as:deconvregularity} about $\rho_i:=\theta_i$ or about $\rho_i:=MRS(0,\theta_i)$, rather than on $X_i^*$.

\begin{proposition} \label{prop:transformation}
    Suppose that Assumption \ref{as:smoothcounterfactual} holds, and Assumptions \ref{as:smono}-\ref{as:deconvregularity} all hold for a variable $\rho_i$ such that $X^*_i = h(\rho_i),$ where $h$ is a strictly increasing and differentiable function. Let $\bar{\rho} = h^{-1}(\bar{x}),$ and suppose that $h'(\bar{\rho}) \ne 0.$ Then, Equation \eqref{eq:u'} holds.
\end{proposition}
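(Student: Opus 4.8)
The plan is to reduce the statement to Theorem~\ref{thm:AMEid} by reparametrizing the treatment. Set $\bar\rho := h^{-1}(\bar{x})$, and define the transformed treatment $\tilde{X}_i := h^{-1}(X_i)$ together with the transformed potential-outcome functions $\tilde{Y}_i(t) := Y_i(h(t))$. Because $h$ is strictly increasing with $h(\bar\rho)=\bar{x}$, Equation~\eqref{eq:xbarmax} applied to $X_i^* = h(\rho_i)$ gives $X_i = h(\max\{\rho_i,\bar\rho\})$, so $\tilde{X}_i = \max\{\rho_i,\bar\rho\}$; that is, $\rho_i$ is the selection variable for the treatment $\tilde{X}_i$ with bunching point $\bar\rho$. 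Moreover $\tilde{Y}_i(\tilde{X}_i) = Y_i$ and $\tilde{Y}_i(\bar\rho) = Y_i(\bar{x})$, so the ``untreated'' state is the same one used throughout the paper, and $\{\tilde{X}_i = \bar\rho\} = \{X_i = \bar{x}\}$. The first step is to check that the transformed triple $(\tilde{X}_i, \tilde{Y}_i(\cdot), \rho_i)$ satisfies Assumptions~\ref{as:continuoussupport}--\ref{as:deconvregularity} with bunching point $\bar\rho$: Assumptions~\ref{as:smono}--\ref{as:deconvregularity} hold by hypothesis, since they are exactly the conditions posited for $\rho_i$, while Assumption~\ref{as:continuoussupport} for $\tilde{X}_i$ follows from Assumption~\ref{as:smoothcounterfactual} for $X_i$ via the change of variables $t \mapsto h(t)$, using the pointwise inverse-function fact that $h$ differentiable and strictly monotone with $h'(\bar\rho)\neq 0$ implies $h^{-1}$ is differentiable at $\bar{x}$ with $(h^{-1})'(\bar{x}) = 1/h'(\bar\rho) \in (0,\infty)$.

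Granting this, Theorem~\ref{thm:AMEid} applies to the transformed model and yields
\begin{equation*}
\widetilde{\text{AME}}_{\bar\rho}^{+}=\lim_{t\downarrow \bar\rho}\frac{d}{dt}\E[Y_i|\tilde{X}_i=t]-2\pi\theta \left(\int \frac{\E[e^{\textbf{i}\xi Y_i}|\tilde{X}_i=\bar\rho]}{\E[e^{\textbf{i}\xi Y_i}|\tilde{X}_i=\bar\rho^+]}d\xi\right)^{-1}\cdot \frac{f_{\tilde{X}}(\bar\rho^+)}{F_{\tilde{X}}(\bar\rho)},
\end{equation*}
with $\theta=\text{sgn}(\E[Y_i|\tilde{X}_i=\bar\rho^+]-\E[Y_i|\tilde{X}_i=\bar\rho])$. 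The second step is a term-by-term translation back to $X_i$. From $\widetilde{\text{ATT}}(t)=\text{ATT}(h(t))$ and differentiability of $h$ at $\bar\rho$ one gets $\widetilde{\text{AME}}_{\bar\rho}^{+}=h'(\bar\rho)\cdot\text{AME}_{\bar{x}}^{+}$. On the right-hand side: $\{\tilde{X}_i=\bar\rho\}=\{X_i=\bar{x}\}$ and $\E[\,\cdot\mid\tilde{X}_i=\bar\rho^+]=\E[\,\cdot\mid X_i=\bar{x}^+]$, so the characteristic-function ratio integral and $\theta$ coincide with those in \eqref{eq:u'}, and $F_{\tilde{X}}(\bar\rho)=F_X(\bar{x})$; from $\E[Y_i|\tilde{X}_i=t]=\E[Y_i|X_i=h(t)]$ and the chain rule, $\lim_{t\downarrow\bar\rho}\frac{d}{dt}\E[Y_i|\tilde{X}_i=t]=h'(\bar\rho)\lim_{x\downarrow\bar{x}}\frac{d}{dx}\E[Y_i|X_i=x]$; and from the change-of-variables identity $f_\rho(r)=f_{X^*}(h(r))h'(r)$ together with $f_{\tilde X}(t)=f_\rho(t)$ for $t>\bar\rho$ and $f_{X^*}(\bar{x}^+)=f_X(\bar{x}^+)$, one gets $f_{\tilde{X}}(\bar\rho^+)=f_X(\bar{x}^+)h'(\bar\rho)$. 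Substituting, every term of the transformed formula carries a common factor $h'(\bar\rho)$, so it reads $h'(\bar\rho)\cdot\text{AME}_{\bar{x}}^{+}=h'(\bar\rho)\cdot\big[\text{right-hand side of }\eqref{eq:u'}\big]$; dividing by $h'(\bar\rho)\neq 0$ gives \eqref{eq:u'}. (One should also record that $\rho_i$ and $X_i^*$ generate the same information, so that $\epsilon_i=Y_i-\E[Y_i|X_i^*]=Y_i-\E[Y_i|\rho_i]$ and Equation~\eqref{eq:equationzero} are unchanged, which is why Lemma~\ref{lem:u(D*)} delivers the same deconvolution integral in the transformed model.)

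I expect the algebra — the cancellation of $h'(\bar\rho)$ — to be routine; the main obstacle is verifying that the transformed model genuinely inherits Assumption~\ref{as:continuoussupport}. Part~(i) requires $f_{\tilde{X}}$ to be continuous on a right-neighborhood of $\bar\rho$ with $f_{\tilde{X}}(\bar\rho^+)>0$: positivity is exactly where $h'(\bar\rho)\neq 0$ enters, but continuity of $f_{\tilde{X}}(t)=f_X(h(t))h'(t)$, and the transfer of the differentiability statements in parts (ii)--(iv) through $h$, needs $h$ to be regular enough near $\bar\rho$ — e.g.\ continuously differentiable rather than merely differentiable. A clean resolution is to observe that only the behavior of $h$ on a neighborhood of $\bar\rho$ is used, to apply the pointwise inverse-function theorem for the derivative at $\bar{x}$, and to state the mild smoothness hypothesis on $h$ under which the chain rule carries Assumption~\ref{as:continuoussupport}(ii)--(iv) from $X_i$ to $\tilde{X}_i$; none of the substantive content of the argument changes.
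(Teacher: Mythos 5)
Your proposal is correct and follows essentially the same route as the paper's proof: apply Theorem \ref{thm:AMEid} to the reparametrized model in which $\rho_i$ plays the role of the selection variable with bunching point $\bar{\rho}=h^{-1}(\bar{x})$, observe via the chain rule that both $\widetilde{\text{AME}}_{\bar{\rho}}^{+}$ and the terms $\lim_{\rho\downarrow\bar{\rho}}\frac{d}{d\rho}\E[Y_i|\rho_i=\rho]$ and $f_{\rho}(\bar{\rho}^+)$ pick up a common factor $h'(\bar{\rho})$ while the sign, the characteristic-function ratio, and $F$ at the bunching point are invariant, and cancel $h'(\bar{\rho})\neq 0$. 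Your explicit introduction of $\tilde{X}_i=h^{-1}(X_i)$ and your derivation of the scaling from $\widetilde{\text{ATT}}(t)=\text{ATT}(h(t))$ (rather than from individual derivatives $\tilde{Y}_i'(\bar{\rho})$ as the paper does) are only cosmetic differences, and your closing caveat about the regularity of $h$ needed to transfer Assumption \ref{as:continuoussupport} is a fair, minor point the paper glosses over.
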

Equation \eqref{eq:u'} is the expression for $\text{AME}_{\bar{x}}^+$ from Theorem \ref{thm:AMEid}. This expression still holds unchanged. Note that $h$ does not need to be known or identified, and hence $\rho_i$ may be unobserved for all $i$. 
\subsection{Models with unrestricted heterogeneity}\label{sec:unrestrictedheterigeneity}
In this section, we show that $X_i^*$ can be explicitly constructed in choice models with unrestricted heterogeneity -- i.e., if $X_i$ is chosen using the (unrestrictedly heterogeneous) utility $ u_i(x)$. To simplify the exposition, suppose that $\bar{x}=0$, which is without loss of generality, since one may always redefine $X_i$ to be $X_i - \bar{x}$.

In contrast to the model of Section \ref{sec:economic}, this model considers choice over a scalar $x$ only. This may reflect a profiled utility function incorporating any other choice variables that have been maximized over for a given value of $x$. This allows us to remain agnostic about agents' margins of choice and the constraints (e.g. budget) that they may face in addition to $X_i \ge 0$.\footnote{For instance, let $(X_i,R_i)=\text{argmax}_{\substack{(x,r): x \ge 0\\ (x,r) \in \mathcal{S}_i}} V_i(x,r)$ for some set $\mathcal{S}_i$, where $r$ may now be a vector. Then $X_i = \text{argmax}_{x \ge 0} u_i(x),$ where we define $u_i(x):= \max_{r: (x,r) \in \mathcal{S}_i} V_i(x,r)$.}

\begin{proposition} \label{prop:marginalutilitynegative}
    Suppose that, with probability one,  $X_i = \text{argmax}_{x \ge 0} u_i(x),$ and $u_i(\cdot)$ is differentiable at $x=0.$ Then, $\mathbb{P}(u_i'(0) \le 0|X_i=0)=1$.
\end{proposition}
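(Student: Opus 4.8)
The plan is to reduce the claim to an elementary one-sided first-order condition and then dispense with the measure-theoretic bookkeeping. First I would fix a realization $i$ lying in the event $A := \{X_i = \text{argmax}_{x \ge 0} u_i(x)\} \cap \{X_i = 0\}$. On $A$, the value $0$ maximizes $u_i$ over $[0,\infty)$, so $u_i(0) \ge u_i(h)$ for every $h > 0$, and hence the difference quotient satisfies $(u_i(h) - u_i(0))/h \le 0$ for all $h > 0$.

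Next, I would invoke differentiability of $u_i$ at $0$: the right-hand limit $\lim_{h \downarrow 0} (u_i(h) - u_i(0))/h$ exists and equals $u_i'(0)$. Since this is a limit of quantities that are all $\le 0$, the limit is also $\le 0$, so $u_i'(0) \le 0$ on $A$. Note that it is essential here that the constraint $x \ge 0$ only permits perturbations to the \emph{right} of $0$; this is precisely why one obtains the inequality $u_i'(0) \le 0$ rather than the equality $u_i'(0) = 0$ that would hold at an interior optimum. (If $u_i$ were differentiable only in the one-sided sense at $0$, the identical argument using the right derivative would go through verbatim.)

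Finally, for the measure-theoretic wrap-up, note that by hypothesis $X_i = \text{argmax}_{x \ge 0} u_i(x)$ holds with probability one, so $\mathbb{P}(A \mid X_i = 0) = 1$ (the conditioning event has positive probability by the bunching structure; if $\mathbb{P}(X_i = 0) = 0$ the statement is vacuous). Since $u_i'(0) \le 0$ on $A$, it follows that $\mathbb{P}(u_i'(0) \le 0 \mid X_i = 0) = 1$, which is the claim. I do not anticipate any genuine obstacle in this argument; the only point requiring a moment of care is to keep the one-sided nature of the feasible perturbations straight, and to record the almost-sure qualifier correctly when passing from the pointwise statement to the conditional-probability statement.
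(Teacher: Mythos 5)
Your proof is correct and is essentially the same argument as the paper's: the one-sided first-order condition at a boundary maximum. The paper phrases it as a contradiction (if $u_i'(0)>0$ one could do strictly better just above $0$), while you argue directly via the sign of the right difference quotients, but the underlying idea is identical.
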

\begin{proof}
The proof is by contradiction: if $u_i'(0)>0$, then $u_i(x)$ cannot be maximized at $x=0.$ This is because $i$ could strictly increase their utility by choosing $x$ to be slightly larger than $0$. It follows that, with probability one, if $X_i=0,$ then $u_i'(0)\leq 0$.
\end{proof}

Define
\begin{equation} \label{eq:defxstar}
    X_i^*:=\begin{cases} X_i & \text{ if } X_i > 0\\ u_i'(0) & \text{ if } X_i = 0. \end{cases}
\end{equation}
Given Proposition \ref{prop:marginalutilitynegative},  $X_i^*$ as defined above satisfies Eq. \eqref{eq:xbarmax} -- i.e., with probability one, $X_i = \max\{0,X_i^*\}$.

\begin{proposition} \label{prop:qc}
    Suppose that, with probability one, $X_i = \text{argmax}_{x \ge 0} u_i(x),$ and $u_i(\cdot)$ is differentiable and quasiconcave, then $\mathbb{P}(u_i'(0) \geq 0|X_i>0)=1.$
\end{proposition}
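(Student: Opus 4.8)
The plan is a short argument by contradiction that mirrors the proof of Proposition \ref{prop:marginalutilitynegative}, but now exploiting quasiconcavity to propagate the first-order information from an interior maximizer back to the boundary point $x=0$. First I would fix a realization in the event on which the maintained hypotheses hold (that $X_i = \text{argmax}_{x\ge 0} u_i(x)$, that $u_i$ is differentiable, and that $u_i$ is quasiconcave), intersected with $\{X_i > 0\}$; by assumption this event has probability one conditional on $X_i>0$, so it suffices to show $u_i'(0)\ge 0$ on it.

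Suppose, toward a contradiction, that $u_i'(0) < 0$. By the definition of the derivative at $0$, there exists $\delta > 0$ such that $u_i(x) < u_i(0)$ for all $x \in (0,\delta)$. Since $X_i > 0$ maximizes $u_i$ over $[0,\infty)$ and $x=0$ is feasible, we have $u_i(X_i) \ge u_i(0)$, hence $\min\{u_i(0), u_i(X_i)\} = u_i(0)$. Now choose $\lambda \in (0,1)$ close enough to $1$ that $(1-\lambda)X_i \in (0,\delta)$; this is possible because $(1-\lambda)X_i \to 0$ as $\lambda \to 1$ and $X_i > 0$. Writing $(1-\lambda)X_i = \lambda\cdot 0 + (1-\lambda)\cdot X_i$ as a convex combination of $0$ and $X_i$, quasiconcavity of $u_i$ gives $u_i((1-\lambda)X_i) \ge \min\{u_i(0), u_i(X_i)\} = u_i(0)$. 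But $(1-\lambda)X_i \in (0,\delta)$ implies $u_i((1-\lambda)X_i) < u_i(0)$, a contradiction. Therefore $u_i'(0) \ge 0$ on the chosen realization, which proves $\mathbb{P}(u_i'(0) \ge 0 \mid X_i > 0) = 1$.

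I do not anticipate a real obstacle here: the argument uses only differentiability at $x=0$ (not at $X_i$), and the single point requiring care is applying quasiconcavity in the correct direction, namely that the value at the interior convex combination $(1-\lambda)X_i$ cannot drop below $\min\{u_i(0),u_i(X_i)\}$. Everything else is the elementary fact that a negative derivative at $0$ forces a local strict decrease just to the right of $0$.
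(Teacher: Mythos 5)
Your proof is correct and takes essentially the same route as the paper: both argue by contradiction that $u_i'(0)<0$ would force a point in $(0,X_i)$ where $u_i$ dips strictly below $u_i(0)$, which quasiconcavity forbids. The only cosmetic difference is that you invoke quasiconcavity via the inequality $u_i(\lambda a+(1-\lambda)b)\ge\min\{u_i(a),u_i(b)\}$ while the paper phrases it as convexity of an upper contour set; these are equivalent.
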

\begin{proof}
    If $X_i>0$ then $u_i(X_i) > u_i(0)$. If $u_i'(0) < 0,$ there exists an $x \in (0,X_i)$ such that $u_i(x) < u_i'(0)$. Thus the set $\{x': u_i(x') > u_i(x)\}$ includes both $0$ and $X_i$, but does not include $x$. Therefore, it is not convex, which contradicts the quasi-concavity of $u_i$. It follows that,  with probability one, if $X_i>0,$ then $u_i'(0)\geq 0.$
\end{proof}

Given Propositions \ref{prop:marginalutilitynegative} and \ref{prop:qc}, all individuals who strictly prefer $x=0$ to a positive amount are bunched (i.e., $X_i=0$) with a selection variable that is different from the treatment (i.e., $X_i^*<0$). The bunching point may also contain individuals who are exactly indifferent between $x=0$ and a marginally positive amount (i.e. $X_i^*=u_i'(0)=0$). 

We also have that
$$s(x) = \E[Y_i(0)|X_i^*=x]= \begin{cases} \E[Y_i(0)|X_i=x] & \text{ if } x > 0\\ \E[Y_i(0)|u_i'(0)=x] & \text{ if } x = 0 \end{cases} \quad - \quad \E[Y_i(0)|X_i=x^+].$$

The key assumptions on $X_i^*$ and $s(X_i^*)$ in Theorem \ref{thm:AMEid} can now be interpreted in terms of the joint distribution of $Y_i(0)$ and $u_i'(0)$, and how it relates to the joint distribution of $Y_i(0)$ and $X_i$. For instance, part (iii) of Assumption \ref{as:bunchingcontinuousnew1} requires $s(x)$ to be right continuous at $x=0$. This requires that $\lim_{x \downarrow 0} \E[Y_i(0)|X_i=x] = \E[Y_i(0)|u'_i(0)=0].$ In terms of the example of our application, we require that the bunchers that are indifferent between not smoking and smoking a marginal amount (i.e., the ``marginal nonsmokers'') are similar to those that choose to smoke a marginal amount (the ``marginal smokers''), in terms of their mean untreated potential outcome.

Part (ii) of Assumption \ref{as:deconvregularity} says that $Y_i(0)|X_i=x \rightarrow_{d} Y_i(0)|u_i'(0)=0$ as $x \downarrow 0$. Again, this has the interpretation that the marginal nonsmokers are similar to the marginal smokers, now in terms of the entire distribution of their untreated outcomes.

The existence of $f_{s(x^*)}(0)$ in part (ii) of Assumption \ref{as:bunchingcontinuousnew1} requires that
\begin{align*}
    &\lim_{h \uparrow 0} \frac{P(\E[Y_i(0)|u_i'(0)] \le h + y_{0})-P(\E[Y_i(0)|u_i'(0)] \le y_{0})}{h}\\
    & \hspace{2.5in}= \lim_{h \downarrow 0} \frac{P(\E[Y_i(0)|X_i] \le h + y_{0})-P(\E[Y_i(0)|X_i] \le y_{0})}{h}.
\end{align*}
Here the interpretation would be that the ``close-to-marginal nonsmokers'' are similar to the ``close-to-marginal smokers'' in terms of mean untreated outcomes.

Finally, in this context Assumption \ref{as:smono} says that if selection is positive just above $0$, then $\E[Y_i(0)|u'_i(0)= x] < \E[Y_i(0)|u'_i(0)=0]$ for all $x < 0$. In other words, there are no negative values of marginal utility with the property that individuals having that level of $u'_i(0)$ have larger values of $Y_i(0)$ than do the marginal nonsmokers (who have $u'_i(0)=0$). However it does not require that the conditional mean of $Y_i(0)$ be increasing globally in $u'_i(0)$, or even for $Y_i(0)$ and $u'_i(0)$ to be positively correlated overall.

We note that for generality in our setup, we have here defined $X_i^*$ in terms of the marginal utility of $x$. This contrasts with our use of the marginal rate of substitution between $x$ and $r$ in Proposition \ref{prop:transformationexistence}, where the setting was one in which individuals choose $x$ and $r$ subject to a budget constraint. Since marginal utility is not invariant to different cardinalizations of a given individual's preferences, one might worry that applying different arbitrary monotonic transformations to $u_i(\cdot)$ for different individuals might affect the plausibility of Assumptions \ref{as:smono}-\ref{as:deconvregularity} for the construction \eqref{eq:defxstar}. However, note that in the discussion above, one only needs to make assumptions about marginal ($u_i'(0)=0$) and close-to-marginal ($u_i'(0)$ small) bunchers (i.e., nonsmokers). These categories are indeed invariant to smooth monotonic transformations of utility, and thus the quantitative magnitudes of $u_i'(0)$ are not important for defending Assumptions \ref{as:smono}-\ref{as:deconvregularity}.

\subsection{Defining \texorpdfstring{$X_i^*$}{Xi*} by extending the support of treatment} \label{sec:noxstar}

In this section, we show that in a setting with boundary bunching at $\bar{x}=0$, the random variable $X_i^*$ can be defined in the absence of a choice model as an extension from the positive part of the support of $X_i$ into negative values, under suitable conditions. This construction has the important property of satisfying Assumption \ref{as:deconvregularity} automatically.

For any $e \in [0,1]$, $x \ge \bar{x}$ and random variable $A$, define where $Q_A(a) := \inf\{a: F_A(a) \ge e\}$ and $F_A$ is the CDF function of $A$. Given a conditioning variable $X_i$, we can write $A=Q_{A|X}(E)$ with probability one, for a random variable $E \sim \text{Unif}[0,1]$ that satisfies $E \indep X$. See Lemmas 3 and 4 of \citet{goff2024testingidentifyingassumptionsparametric} for a proof of this property, which holds regardless of whether $A$ is discrete or continuously distributed.

Consider in particular the conditional quantile function $g(x,x',e) = Q_{Y(x)|X=x'}(e)$. Then, using that $Y_i=Y_i(X_i)$ we can write, with probability one:
\begin{equation} \label{eq:quantileoutcome}
	Y_i = g(X_i,X_i,E_i),
\end{equation}
where $E_i := F_{Y(X)|X}(Y_i) = F_{Y|X}(Y_i)$. Note that with $Y_i$ continuously distributed conditional on $X_i$, then $E_i|X_i\sim \text{Unif}[0,1],$ and thus $E_i \indep X_i$.\footnote{If $Y_i$ exhibits mass points (e.g. $Y_i$ is discrete), then we can define $E_i|Y_i,X_i \sim \text{Unif}[\lim_{y \uparrow Y_i}F_{Y|X}(y),F_{Y|X}(Y_i)],$ and \eqref{eq:quantileoutcome} still holds with $E_i|X_i \sim \text{Unif}[0,1]$ (Cf. Lemma 4 of \citealt{goff2024testingidentifyingassumptionsparametric}).}  

 For notational simplicity, we consider in this section a setting in which $\bar{x}=0$, and assume only that $X_i$ is defined on the positive side of the real line. Accordingly, consider any $x \ge 0$. Assume that $g$ is differentiable. By the fundamental theorem of calculus, we can write:
\begin{align*}
	g(x,x,e) &= \underbrace{g(0,0,e)+\int_0^{x} g_2(0,v,e)dv}_{:=s(x,e)}+\underbrace{\int_0^x g_1(x,v,e)dv}_{:=m(x,e)},
\end{align*}
where $g_1$ and $g_2$ represent derivative functions of $g$, and the path of integration is from $(x,x')=(0,0)$ to $(0,x)$ and then from $(0,x)$ to $(x,x)$. Note that differentiability of $g$ in $x,x',e$ also implies that $Y_i$ is continuously distributed, conditional on any value $X_i$ (implying part (i) of Assumption \ref{as:deconvregularity}).

We can write
\begin{equation*}
	s(e,x) = g(0,0,e)+\int_0^{x} g_2(0,v,e)dv = Q_{Y(0)|X=x}(e),
\end{equation*}
where the second term above $s(x,e)$ is a pure ``endogeneity term'', capturing how the distribution of the untreated potential outcome $Y_i(0)$ varies across groups with different treatment levels $X_i$. On the other hand $m(x,e):=\int_0^x g_1(x,v,e)dv$ is a pure causal term, summarizing how the distribution of $Y_i(x)$ changes with $x$ with a fixed conditioning group $X_i=x$. Note that $m(0,e)=0$ for all $e$.

By Equation \eqref{eq:quantileoutcome}, we have that, with probability one,
\begin{equation*} 
	Y = m(X,E_i) + s(X,E_i).
\end{equation*}
By totally differentiating $Q_{Y|X=x}(e)=Q_{Y(x)|X=x}(e)$ with respect to $x$, we note that for any $x>0$, it follows that we can identify
\begin{equation} \label{eq:quantilederivative}
	\frac{d}{dx} Q_{Y|X=x}(e) = m'(x,e)+s'(x,e),
\end{equation}
where we define $m'(x,e):=\frac{d}{dx} m(x,e)=g_1(x,x,e),$ and $s'(x,e):=\frac{d}{dx} s(x,e)=g_2(x,x,e)$. As above, $m'$ captures a causal effect, and $s'$ is the selection bias  term.

Let $m'(x):=\int_{0}^1 m'(x,e) de$ and $s'(x):=\int_{0}^1 s'(x,e) de$. Under regularity conditions, the expectation analog of Equation \eqref{eq:quantilederivative} satisfies:
\begin{equation*} 
	\frac{d}{dx} \mathbbm{E}[Y_i|X_i=x] = m'(x)+s'(x),
\end{equation*}
with $m'(x)=\mathbbm{E}[Y_i'(x)|X_i=x]$, where $Y_i'(x)=\frac{d}{dx} Y_i(x)$.

Recall that the parameter $\text{AME}_{\bar{x}}^+$ is well-defined given Assumption \ref{as:continuoussupport} from Section \ref{sec:approach}. However, we can give it an economic interpretation without assuming $X_i^*$ exists ex ante, by supposing that a subset of the bunchers having $X_i=0$ are ``marginal'', indicated by $M_i=1$ (e.g. they satisfy a FOC at $X_i=0$). Then, suppose that all of the marginal bunchers bunch (i.e. $P(X_i=0|M_i=1)=1$), and that the distribution of the slope $Y_i'(0^+)$ of the treatment response function $Y_i(x)$ as $x \downarrow 0$ is the same as for the non-bunchers that have very small values of $X_i$ (i.e. $Y_i'(0^+)|X_i=x \stackrel{d}{\rightarrow} Y_i'(0^+)|M_i=1$, 
where convergence in distribution is as $x \downarrow 0$). Our parameter of interest $\text{AME}_{\bar{x}}^+$ is then the average derivative effect of increasing $x$ from zero among marginal bunchers. This yields part (ii) of Assumption \ref{as:deconvregularity}.

To identify the function $f_{s(X)}(\cdot)$, we leverage the following additional assumptions.
\begin{assumption} \label{as:analyticxstar}
        There exist functions $s,\phi:\mathbb{R}\rightarrow \mathbb{R}$ such that
	\begin{enumerate}[(i)]
		\item For all $x>0$ and $e\in [0,1],$ $(x,e)\mapsto s(x,e) = s(x)+\phi(e)$. 
		\item For $x>0,$ $s(x)$ is an analytic function.
		\item $\phi(\cdot)$ is strictly increasing, and there exists a continuous solution $h$ satisfying the integral equation
		$$\int_{-\infty} ^\infty \phi^{-1}\left(y-s^*(x)\right)\cdot h(x) \cdot dx = F_{Y(0)}(y),$$
        where $s^*(x)$ is an analytic continuation of $s(\cdot)$ to the real line.
	\end{enumerate}
\end{assumption}
\noindent Recall that the function $s$ is defined as $s(x,e)=Q_{Y(0)|X=x}(e)$. To resolve an arbitrary additive normalization in the decomposition between $s$ and $\phi$, we let the $x$-dependent term $s(x)$ capture the conditional expectation of $Y_i(0)$ given $X_i=x$, as compared with its right limit at zero, i.e. $s(x) = \mathbbm{E}[Y_i(0)|X_i=x]-y_{\bar{x}}$ where $y_0:=\mathbbm{E}[Y_i(0)|X_i=x^+]$, recovering the function $s(x).$  Then, since
$$\mathbbm{E}[Y_i(0)|X_i=x]=\int_0^1 Q_{Y(0)|X=x}(e)\cdot de = \int_0^1 s(x,e) de = s(x) + \int_0^1 \phi(e) de,$$ this normalization implies that $\int_0^1 \phi(e) \cdot de = y_0$.

To motivate part (i) of Assumption \ref{as:analyticxstar}, observe the following 
\begin{proposition} \label{prop:additive}
    $Q_{Y(0)|X=x}(e)=s(x)+\phi(e)$ for some functions $s,\phi$ $\iff$ $\{Y_i(0) - \mathbbm{E}[Y_i(0)|X_i]\} \indep X_i$.
\end{proposition}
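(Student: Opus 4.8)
The plan is to reduce both implications to the quantile-coupling representation already used in this section: with probability one, $Y_i(0) = Q_{Y(0)|X_i}(E_i)$ for a random variable $E_i \sim \text{Unif}[0,1]$ with $E_i \indep X_i$ (Lemmas 3--4 of \citealt{goff2024testingidentifyingassumptionsparametric}, which hold whether or not $Y_i(0)$ is continuously distributed). Once this is in hand, each direction is a couple of lines.

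For the forward implication, suppose $Q_{Y(0)|X=x}(e) = s(x) + \phi(e)$ for all relevant $x,e$. Substituting into the coupling gives $Y_i(0) = s(X_i) + \phi(E_i)$ with probability one. Taking conditional expectations and using $E_i \indep X_i$ yields $\mathbbm{E}[Y_i(0)|X_i] = s(X_i) + \mathbbm{E}[\phi(E_i)|X_i] = s(X_i) + \mathbbm{E}[\phi(E_i)]$, where finiteness of $\mathbbm{E}[\phi(E_i)] = \mathbbm{E}[Y_i(0)] - \mathbbm{E}[s(X_i)]$ is guaranteed by the maintained integrability of $Y_i(0)$. Subtracting, $Y_i(0) - \mathbbm{E}[Y_i(0)|X_i] = \phi(E_i) - \mathbbm{E}[\phi(E_i)]$, which is a measurable function of $E_i$ alone and therefore independent of $X_i$.

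For the converse, set $W := Y_i(0) - \mathbbm{E}[Y_i(0)|X_i]$ and $s(x) := \mathbbm{E}[Y_i(0)|X_i=x]$, so that $Y_i(0) = s(X_i) + W$ with probability one and, by hypothesis, $W \indep X_i$. Then for any $x$ in the support of $X_i$,
$$F_{Y(0)|X=x}(y) = \mathbbm{P}\big(W \le y - s(x)\,\big|\,X_i = x\big) = \mathbbm{P}\big(W \le y - s(x)\big) = F_W\big(y - s(x)\big),$$
using the independence in the second equality. Inverting via the elementary shift identity for quantile functions, $Q_{Y(0)|X=x}(e) = \inf\{y : F_W(y - s(x)) \ge e\} = s(x) + \inf\{z : F_W(z) \ge e\} = s(x) + Q_W(e)$, so the additive form holds with $\phi := Q_W$.

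I do not expect a genuine obstacle here; the only points needing care are bookkeeping ones. First, the quantile coupling must be stated in the form that accommodates possible mass points of $Y_i(0)$ (the footnoted variant in this section), but this changes nothing in the argument. Second, the shift identity $Q_{c+W}(e) = c + Q_W(e)$ should be recorded explicitly, as it is what converts the additive decomposition of the conditional distribution into the additive decomposition of the conditional quantile; it follows immediately from $F_{c+W}(y) = F_W(y-c)$ and monotonicity of the infimum. Finally, it is worth noting that the pair $(s,\phi)$ is pinned down only up to an additive constant (one may shift any constant from $s$ to $\phi$), which is exactly why the statement is phrased as ``for some functions $s,\phi$''; a canonical normalization, such as $\int_0^1 \phi(e)\,de = y_0$, can be imposed afterward if desired.
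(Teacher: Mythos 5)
Your proof is correct and follows essentially the same route as the paper's: both directions reduce to the quantile coupling $Y_i(0)=Q_{Y(0)|X_i}(E_i)$ with $E_i\indep X_i$, with the converse obtained by setting $s(x)=\mathbb{E}[Y_i(0)|X_i=x]$ and shifting the quantile function of the residual. You are in fact slightly more careful than the paper in the forward direction, where you explicitly recenter by the constant $\mathbb{E}[\phi(E_i)]$ so that the conclusion applies to $Y_i(0)-\mathbb{E}[Y_i(0)|X_i]$ rather than to $Y_i(0)-s(X_i)$.
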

\noindent Thus part (i) of Assumption \ref{as:analyticxstar} is analagous to part (iii) of Assumption \ref{as:deconvregularity}, but for positive $X_i^*$.\\

The assumption of additive separability (i) cannot be directly verified from the data, because $s(x,e)=Q_{Y(0)|X=x}(e)$ is only identified in the limit as $x \downarrow 0$ and not for multiple values of $x$. For $x>0$, we can instead only identify $Q_{Y(x)|X=x}(e) = g(x,x,e)=m(x,e)+s(x,e)$. However, additive separability between $x$ and $e$ in $Q_{Y(x)|X=x}(e)$ for $x>0$ may be construed as indirect evidence in favor of part (i). For example, if $Y_i(x)-Y_i(0)$ is the same for all $i$ (homogeneous treatment effects), then additive separability of the observable quantile function $Q_{Y|X=x}(e)$ (i.e. $g(x,x,e)$) holds if and only if additive separability of $s(x,e)$ holds. We see this in Figure \ref{fig:condition_kdensities} in our application, where the distribution of $Y|X=x$ only differs substantially across different values of $x$ by a location shift: it is approximately normal with nearly identical variance across $x$. 

Item (ii) of Assumption \ref{as:analyticxstar} echoes Assumption \ref{as:analytic} from the main text, and amounts to imposing a high degree of smoothness on the function $x \mapsto \E[Y_i(0)|X_i=x]$. Item (iii) of Assumption \ref{as:analyticxstar} is high-level, but appears to be mild. The integral equation takes the form of a Fredholm Integral of the First Kind. As an analytic function, $s^*(x)$ is continuously differentiable, and if $s^*(x)$ is furthermore strictly monotonic with inverse $x(\cdot)$, then by a change of variables,
$$\int_{-\infty} ^\infty \phi^{-1}\left(y-t\right)\cdot \frac{h(x(t))}{s^{*'}(x(t))} \cdot dt = F_{Y(0)}(y),$$
which takes the form of a convolution equation, which typically has a solution.\\

Assumption \ref{as:analyticxstar} is useful because it allows us to define a new probability space in which $X_i$ is extended to take negative values, and such that $f_{s(X)}(\cdot)$ is identified through a deconvolution operation on that probability space. In particular, because the function $s: \mathbbm{R}^+ \rightarrow \mathbbm{R}$ is analytic on the positive part of the real line, there exists a unique function $s^*: \mathbbm{R}^+ \rightarrow \mathbbm{R}$ such that $s^*(x)=s(x)$ for all $x \ge 0$, referred to as the \textit{analytic continuation} of $s$. Now define $s^*(x,e) = s^*(x)+\phi(e)$.

The initial probability space has a measure $P$ defined over $(X_i,\{Y_i(x)\}_{x \ge 0})$. We now define a new probability measure $P^*$ over random variables $(X_i^*,Y_i^*(0))$ using $s^*(x,e)$. In particular, (1) we let the conditional distribution of $Y^*_i(0)$ given $X_i^*$ be described by the quantile function $Q^*_{Y^*(0)|X^*=x}(e) = s^*(x,e)$ for all $e \in [0,1]$ and $x \in \mathbbm{R}$; and (2)
 we let the marginal distribution of $X_i^*$ under $P^*$ be described by CDF $F^*_{X^*}(x)= \int_{\infty}^x h(x)$ for $x < 0$ and $H(x) = F_X(x)$ if $x \ge 0$, where $h$ is a solution to the equation in part (iii) of Assumption \ref{as:analyticxstar}.

Under this construction, the following holds with probability one under $P^*$: \begin{equation} \label{eq:starquantile}
	Y_i^*(0)=Q^*_{Y^*(0)|X^*}(E_i^*)=s^*(X_i^*)+\phi(E_i^*),
\end{equation}
where $E_i:=F^*_{Y(0)|X^*}(Y_i(0))$ satisfies $E_i^* \indep X_i^* $ with $E_i^* \sim \text{Unif}[0,1]$ under $P^*$. 

We note that the function $s^*(x,e)$ can be defined constructively from $s(x,e)$ provided that the radius of convergence of its Taylor series about $x=0$ is infinite. In this case, the Taylor series of $s^*(x)$ converges and $s_e(x)$ can be expressed as $s^*(x) = \sum_{k=0}^\infty \frac{x^k}{k!} \cdot s^{(k)}(0)$. However this stronger condition is not needed for what follows.

A key implication is that part (iii) of Assumption \ref{as:deconvregularity} holds with probability one under $P^*$: 
\begin{proposition} \label{prop:indep}
	Under Assumption \ref{as:analyticxstar}, $\{Y^*_i(0)-\mathbbm{E}^*[Y^*_i(0)|X_i^*]\} \indep X_i^*$ under $P^*$.
\end{proposition}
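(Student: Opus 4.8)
The plan is to read the claim directly off the additively separable form of the conditional quantile function that the construction of $P^*$ bakes in, via the representation \eqref{eq:starquantile}. Recall that under $P^*$, with probability one, $Y_i^*(0) = s^*(X_i^*) + \phi(E_i^*)$, where $E_i^* \sim \text{Unif}[0,1]$ and, crucially, $E_i^* \indep X_i^*$.

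First I would compute $\mathbbm{E}^*[Y_i^*(0)|X_i^*]$. By linearity of conditional expectation, $\mathbbm{E}^*[Y_i^*(0)|X_i^*] = s^*(X_i^*) + \mathbbm{E}^*[\phi(E_i^*)|X_i^*]$, and since $E_i^* \indep X_i^*$ under $P^*$, the last term collapses to the constant $c := \mathbbm{E}^*[\phi(E_i^*)] = \int_0^1 \phi(e)\,de$ (which equals $y_0$ and is finite by the normalization fixed in the construction of $s$ and $\phi$, so the conditional expectation is well defined). Hence, with probability one, $\mathbbm{E}^*[Y_i^*(0)|X_i^*] = s^*(X_i^*) + c$.

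Second, subtracting gives, with probability one,
$$Y_i^*(0) - \mathbbm{E}^*[Y_i^*(0)|X_i^*] = \bigl(s^*(X_i^*) + \phi(E_i^*)\bigr) - \bigl(s^*(X_i^*) + c\bigr) = \phi(E_i^*) - c,$$
so the residual is a measurable function of $E_i^*$ alone; the only $X_i^*$-dependent piece, $s^*(X_i^*)$, has cancelled. Since a measurable function of a random variable independent of $X_i^*$ is itself independent of $X_i^*$, we conclude that $\{Y_i^*(0) - \mathbbm{E}^*[Y_i^*(0)|X_i^*]\} \indep X_i^*$ under $P^*$, which is the claim. Equivalently, one may invoke the forward implication of Proposition \ref{prop:additive} applied to the pair $(X_i^*,Y_i^*(0))$ under $P^*$, since $Q^*_{Y^*(0)|X^*=x}(e) = s^*(x) + \phi(e)$ is additively separable in $(x,e)$ by construction.

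There is no substantive obstacle here: the argument is a one-line computation once \eqref{eq:starquantile} is in hand, and the only points to verify are bookkeeping — that \eqref{eq:starquantile} holds $P^*$-almost surely (which is exactly how $P^*$ was defined, using the solution $h$ from part (iii) of Assumption \ref{as:analyticxstar} for the marginal law of $X_i^*$ and $Q^* = s^* + \phi$ for the conditional law), that $s^*$ is a genuine real-valued function (guaranteed by Assumption \ref{as:analyticxstar}), and that $\phi(E_i^*)$ is integrable so that $\mathbbm{E}^*[Y_i^*(0)|X_i^*]$ exists (guaranteed since $\phi$ is monotone with $\int_0^1 \phi(e)\,de$ finite).
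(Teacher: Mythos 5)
Your proof is correct and follows essentially the same route as the paper's: both read the claim off the representation $Y_i^*(0)=s^*(X_i^*)+\phi(E_i^*)$ with $E_i^*\indep X_i^*$, cancel the $s^*(X_i^*)$ term, and conclude that the residual is a measurable function of $E_i^*$ alone. If anything you are slightly more careful than the paper, which writes the residual as $\phi(E_i^*)$ rather than $\phi(E_i^*)-\int_0^1\phi(e)\,de$; this omission is harmless for the independence conclusion but your bookkeeping of the constant is the cleaner statement.
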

Finally, to unlock the ability to leverage Lemma \ref{lem:u(D*)} using part (iii) of Assumption \ref{as:analyticxstar}, we can use the following result:
\begin{proposition} \label{prop:bunchers}
	Given Assumption \ref{as:analyticxstar}, the distribution of $Y^*_i(0)|X_i^* \le 0$ under $P^*$ is the same as the distribution of $Y_i|X_i=0$ under $P$.
\end{proposition}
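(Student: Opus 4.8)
The plan is to compute the joint law of $(Y_i^*(0),X_i^*)$ restricted to the event $\{X_i^*\le 0\}$ under $P^*$, identify it with the joint law of $(Y_i,X_i)$ restricted to $\{X_i=0\}$ under $P$, and then divide through by the masses of these two events. Write $\phi^{-1}(\cdot)$ for the (clamped) distribution function associated with the quantile function $\phi$, i.e. $\phi^{-1}(t)=\lambda\{e\in[0,1]:\phi(e)\le t\}$, which coincides with the ordinary inverse of $\phi$ on the range of $\phi$. Three ingredients drive the argument: (a) Eq. \eqref{eq:starquantile}, which gives $Y_i^*(0)=s^*(X_i^*)+\phi(E_i^*)$ with $E_i^*\sim\mathrm{Unif}[0,1]$ independent of $X_i^*$ under $P^*$, hence $P^*(Y_i^*(0)\le y\mid X_i^*=x)=\phi^{-1}(y-s^*(x))$ for every $x$; (b) the construction of $P^*$, under which $X_i^*$ has no atom and has $P^*$-density $h$ with $h=f_X$ on $(0,\infty)$, whose restriction to $(-\infty,0)$ solves the integral equation in part (iii) of Assumption \ref{as:analyticxstar}; and (c) part (i) of Assumption \ref{as:analyticxstar} together with $s^*(x)=s(x)$ for $x\ge 0$, which gives, for $x>0$, that $Q_{Y(0)|X=x}(e)=s(x)+\phi(e)$ and therefore $P(Y_i(0)\le y\mid X_i=x)=\phi^{-1}(y-s^*(x))$.

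First I would integrate the conditional distribution function in (a) against $h$ over the negative half-line, obtaining
$$P^*\bigl(Y_i^*(0)\le y,\ X_i^*\le 0\bigr)=\int_{-\infty}^{0}\phi^{-1}\bigl(y-s^*(x)\bigr)\,h(x)\,dx .$$
Next I would subtract the positive-part contribution from the integral equation of Assumption \ref{as:analyticxstar}(iii), $\int_{\mathbb{R}}\phi^{-1}(y-s^*(x))\,h(x)\,dx=F_{Y(0)}(y)$, to rewrite the right-hand side above as $F_{Y(0)}(y)-\int_0^\infty\phi^{-1}(y-s^*(x))f_X(x)\,dx$; by ingredient (c) the subtracted integral equals $\int_0^\infty P(Y_i(0)\le y\mid X_i=x)f_X(x)\,dx=P(Y_i(0)\le y,\ X_i>0)$. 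Since $X_i$ is supported on $[0,\infty)$ with its only atom at $0$, the law of total probability gives $F_{Y(0)}(y)=P(Y_i(0)\le y,X_i=0)+P(Y_i(0)\le y,X_i>0)$, and combining the displays yields
$$P^*\bigl(Y_i^*(0)\le y,\ X_i^*\le 0\bigr)=P\bigl(Y_i(0)\le y,\ X_i=0\bigr).$$
Letting $y\to\infty$ gives $P^*(X_i^*\le 0)=P(X_i=0)$, the (positive) bunching mass; dividing, and then using $Y_i=Y_i(X_i)=Y_i(0)$ on $\{X_i=0\}$, delivers $P^*(Y_i^*(0)\le y\mid X_i^*\le 0)=P(Y_i\le y\mid X_i=0)$, which is the asserted equality of distributions.

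I expect the only real obstacle to be the bookkeeping around the integral equation in Assumption \ref{as:analyticxstar}(iii): one must read it with $h$ the \emph{entire} $P^*$-density of $X_i^*$, which by the definition of $F^*_{X^*}$ equals $f_X$ on the positive reals; this, together with $\int_{\mathbb{R}}h=1$ (the $y\to\infty$ limit of the equation), forces $\int_{-\infty}^0 h=P(X_i=0)$ and legitimizes both the split of the integral at $0$ and the identification of the positive-part integral with an honest $P$-probability. The remaining points --- the clamping built into $\phi^{-1}$ when $y-s^*(x)$ leaves the range of $\phi$, and the fact that $\{X_i^*=0\}$ is $P^*$-null while $\{X_i=0\}$ carries the bunching mass under $P$ --- need to be tracked but cause no difficulty, and everything else is the quantile-to-distribution-function correspondence together with the law of total probability.
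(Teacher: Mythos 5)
Your proposal is correct and follows essentially the same route as the paper's proof: both rest on the quantile representation $Y_i^*(0)=s^*(X_i^*)+\phi(E_i^*)$ with $E_i^*\indep X_i^*$, the integral equation in part (iii) of Assumption \ref{as:analyticxstar}, and the decomposition of $F_{Y(0)}$ over $\{X_i=0\}$ and $\{X_i>0\}$. The only cosmetic difference is that you work with unnormalized joint probabilities and recover $P^*(X_i^*\le 0)=P(X_i=0)$ by letting $y\to\infty$, whereas the paper conditions from the outset using the fact that $F^*_{X^*}(0)=F_X(0)$ by construction.
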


\noindent By Proposition \ref{prop:bunchers} and Equation \eqref{eq:starquantile}, we have that
$Y_i|X_i=0 \sim Y_i^*(0) | X^*_i \le 0 \sim s^*(X^*)+\phi(E_i^*) | X^*_i \le 0$. Thus the distribution of $s(X^*)+\phi(E_i^*)$ conditional on $X_i^* \le 0$ is pinned down from the observable distribution of $Y_i|X_i=0$.

Note furthermore that $Y_i|X_i=0^+ \sim s(0^+)+\phi(E_i^*)$, so $\phi(E_i^*) \sim \{Y_i-\mathbbm{E}[Y_i|X_i=0^+]\}|X_i=0^+$. Since $\phi(E_i^*)|X_i^*\le 0 \sim \phi(E_i^*)$, we have that $\phi(E_i^*)|X_i^*\le 0\sim \{Y_i-\mathbbm{E}[Y_i|X_i=0^+]\}|X_i=0^+$. It follows from Proposition \ref{prop:indep} that $\{(Y^*_i(0)-\mathbbm{E}^*[Y^*_i(0)|X_i^*]) \indep X_i^*\}|X_i^* \le 0$, so we can then work out the distribution of $s(X_i^*)$ conditional on $X_i^* \le 0$ via deconvolution as in Section \ref{sec:idfrombunching}.

\newpage
\setcounter{page}{1}
\vspace*{.3cm}

\begin{center}
    {\LARGE \textbf{SUPPLEMENTARY APPENDICES TO}
    \\[2ex]
    \textbf{``Identification of Causal Effects with a Bunching Design''}}\\[1cm]
    {\large \hspace{-.8cm}Carolina Caetano, \hspace{.8cm}Gregorio Caetano, \hspace{1cm}Leonard Goff, \hspace{1.5cm}Eric Nielsen}\\
    {\large \textit{University of Georgia \hspace{.1cm} University of Georgia \hspace{.1cm}University of Calgary \hspace{.1cm}Federal Reserve Board}}\\[.8cm]
    {\large \today}\\
\end{center}

\vspace{.3cm}
\begin{center}
\begin{minipage}{0.8\textwidth}
\small
\begin{center}
    \textbf{Abstract}
\end{center}

Appendix \ref{sec:controls} explains how the method may be adapted to leverage control variables, including when controls are discrete (Appendix \ref{sec:discretecontrols}), continuous (Appendix \ref{sec:continuouscontrols}), and when the vector of controls is large and with mixed continuous and discrete variables (Appendix \ref{sec:discretecontinuouscontrols}). Appendix \ref{ap:emp_supp} contains additional figures referenced in Section \ref{sec:application}. Finally, Appendix \ref{sec:xstarproofs} contains proofs of several results presented in Appendix \ref{sec:xstarappendix} in the paper. 

\end{minipage}
\end{center}

\vspace{.5cm}

\renewcommand{\thesection}{S.\arabic{section}}
\renewcommand{\thesubsection}{S.\arabic{section}.\arabic{subsection}}
\renewcommand{\theequation}{S.\arabic{equation}}
\setcounter{equation}{0}
\renewcommand{\theHequation}{S\arabic{equation}}
\setcounter{section}{0} % ensure starts at S.1
\renewcommand{\theHsection}{S\arabic{section}} 

\phantomsection
\section{Incorporating control variables}
\addcontentsline{toc}{section}{\thesection: Incorporating control variables}
\label{sec:controls}

Suppose that in addition to $X_i$ and $Y_i$, we observe a vector of control variables $Z_i$ that are unaffected by the treatment. Define the conditional $\text{AME}^+_{\bar{x}}$ as 
$$\text{AME}_{\bar{x}}^+(z):=\lim_{x\downarrow \bar{x}}\E\left[\frac{Y_i(x)-Y_i(\bar{x})}{x-\bar{x}}\bigg|X_i=x,Z_i=z\right].$$
Then, Assumptions \ref{as:continuoussupport}-\ref{as:deconvregularity} may be made conditional on $Z_i,$ and the following identification equation holds. 
\begin{align*}
\text{AME}_{\bar{x}}^+(z)&=m'(\bar{x}^+,z)-2\pi \cdot \theta(z)\cdot \left( \int \frac{\E[e^{\textbf{i}\xi Y_i}|X_i=\bar{x},Z_i=z]}{\E[e^{\textbf{i}\xi Y_i}|X_i=\bar{x}^+,Z_i=z]}d\xi\right)^{-1}\cdot\frac{f_{X|Z=z}(\bar{x}^+)}{F_{X|Z=z}(\bar{x})}, \label{eq:u'conditional}
\end{align*}
where  $m'(\bar{x}^+,z)=\lim_{x\downarrow \bar{x}}\frac{d}{dx}\E[Y_i|X_i=x,Z_i=z],$ and $\theta(z)=\text{sgn}(\E[Y_i|X_i=\bar{x}^+,Z_i=z]-\E[Y_i|X_i=\bar{x},Z_i=z])$. 

Note then that we can identify the unconditional $\text{AME}_{\bar{x}}^+,$ given that
$$\text{AME}_{\bar{x}}^+ = \lim_{x\downarrow \bar{x}} \E\left[\E\left[\frac{Y_i(x)-Y_i(\bar{x})}{x-\bar{x}}\bigg|X_i=x,Z_i\right]\right] = \E\left[\text{AME}_{\bar{x}}^+(Z_i)|X_i=\bar{x}^+\right],$$
under suitable conditions to interchange the limit and the expectation.

\phantomsection
\subsection{Estimation with discrete controls}
\label{sec:discretecontrols}
Estimation with controls depends on the nature of $Z_i.$ If $Z_i$ has a finite support, i.e. $Z_i\in \{z_1,\dots,z_L\}$ with $\mathbb{P}(Z_i=z_l)>0,$ for all $l=1,\dots,L$, then the exact procedures described for the unconditional case may be performed separately for each $z_l$.  That is, for all $z_l,$ calculate
\begin{equation*}
    \hat{p}_{l,\bar{x}}=\hat{\mathbb{P}}(Z_i=z_l|X_i=\bar{x})=\hat{F}_X(\bar{x})^{-1}\cdot \frac{1}{n}\sum_{i=1}^n \bm{1}(X_i=\bar{x},Z_i=z_l), \text{ for }l=1,\dots,L.
\end{equation*}
Then, for all $z_l$ such that $\hat{p}_{l,\bar{x}}>0$, restrict the sample to observations such that $Z_i=z_l,$ and estimate $\text{AME}_{\bar{x}}^+(z_l)$ just as described in the unconditional case using the new, restricted, data. 

The average marginal treatment effect estimator in this case is
\begin{equation*}
    \text{AME}_{\bar{x}}^+=\sum_{l=1}^L \hat{p}_{l,\bar{x}} \cdot \widehat{\text{AME}}_{\bar{x}}^+(z_l),
\end{equation*}
where $\widehat{\text{AME}}_{\bar{x}}^+(z_l)$ is the estimator described in Section \ref{sec:estimation} applied to the subsample with $Z_i=z_l$. Note that it is not possible to estimate $\text{AME}_{\bar{x}}^+(z_l)$ when $\hat{p}_{l,\bar{x}}=0,$ but it is also not necessary to do so, since those treatment effects have weight equal to zero in the estimator formula.

\phantomsection
\subsection{Estimation with continuous controls}
\label{sec:continuouscontrols}
When $Z_i$ is continuously distributed, one may apply a smoothing technique to the estimators described above, so as to use information coming from values of the control around $Z_i$ to perform the estimation. A simple strategy  to estimate $\text{AME}_{\bar{x}}^+(Z_i)$ is as follows: let $Z_i=(Z_{1i},\dots,Z_{Mi})',$ and for bandwidths $\kappa_1,\dots,\kappa_M,$ and kernel functions  $K_1,\dots,K_M$, restrict the sample to observations such that $-\kappa_1<Z_{1j}<\kappa_1, \dots,  -\kappa_M<Z_{Mj}<\kappa_M$. Index the resulting dataset by $t$,  suppose it has $n_T$ observations, and define
\begin{equation*}
    K_\kappa(Z_t-Z_i):=\frac{1}{\kappa_1\cdot \cdot \cdot  \kappa_M}K_1\left(\frac{Z_{1t}-Z_{1i}}{\kappa_1}\right) \cdot \cdot \cdot  K_M\left(\frac{Z_{Mt}-Z_{Mi}}{\kappa_M}\right).
\end{equation*}
Then, for each value $Z_i$ such that the restricted sample has bunching, i.e.
\begin{equation*}
   \hat{p}_{i,\bar{x}}= \frac{1}{n_T}\sum_{t=1}^{n_T}\bm{1}(X_t=\bar{x})>0,
\end{equation*}
perform the methods described
for unconditional estimation, only weighting each observation by $k_{\kappa}(Z_t-Z_i)=K_\kappa(Z_t-Z_i)/\sum_{t=1}^TK_\kappa(Z_t-Z_i).$\footnote{Thus, $\hat{F}_{X|Z=Z_i}(\bar{x})=\frac{1}{n_T}\sum_{t=1}^{n_T}\bm{1}(X_i=\bar{x})k_{\kappa}(Z_t-Z_i)$, and $\hat{E}[Y_i|X_i=\bar{x},Z_i]=\hat{F}_{X|Z=Z_i}(\bar{x})^{-1}\cdot \frac{1}{n_T}\sum_{t=1}^{n_T}Y_i\bm{1}(X_i=\bar{x})k_{\kappa}(Z_t-Z_i).$ The densities $\hat{f}_{X|Z=Z_i}(\bar{x}^+)$ and $\hat{f}_{Y|X=\bar{x},Z_i}(Y_i)$ are implemented in the same way using the restricted sample, substituting $i$ by $t$ and $n$ by $n_T$ in the formulas, and multiplying terms inside sums indexed by $t$ by $k_{\kappa}(Z_t-Z_i)$. Finally, $\hat{\E}[Y_i|X_i=\bar{x}^+,Z_i]$ and $\hat{m}'(\bar{x}^+,Z_i)$ are respectively the intercept and slope coefficients of a local linear regression of $Y_t$ onto $X_t$ at zero, using only observations such that $X_t>\bar{x}$ and weights $k_{\kappa}(Z_t-Z_i)$; and $\hat{\E}[\hat{f}_{Y|X=\bar{x},Z=Z_i}(Y_i)|X_i=\bar{x},Z_i]$ is the intercept of the same procedure, only with $\hat{f}_{Y|X=\bar{x},Z=Z_i}(Y_i)$ instead of $Y_i$. }

Then,
\begin{equation*}
    \text{AME}_{\bar{x}}^+=\sum_{l=1}^L \hat{p}_{i,\bar{x}} \cdot \widehat{\text{AME}}_{\bar{x}}^+(Z_i).
\end{equation*}
As in the previous section, it is not necessary to compute $\widehat{\text{AME}}_{\bar{x}}^+(Z_i)$ when $\hat{p}_{i,\bar{x}}=0.$

\phantomsection
\subsection{Estimation with mixed or large dimensional controls}
\label{sec:discretecontinuouscontrols}
In practice, most control lists include a mixture of discrete and continuous variables, and may include a large number of terms. In such cases, smoothing is either impractical or impossible. We have had success with a discretization technique which implements clustering methods, which are popular in machine learning and have been recently adopted in economics.\footnote{See, e.g. \citet{Bonhomme_Manresa_ECMA,Bonhomme_Manresa_Lamadon,cheng2019clustering,cytrynbaum2020blocked, ccn_metrics, ccn_empirical, CCNS}.}

Let $\{\hat{\mathcal{C}}_1,\dots,\hat{\mathcal{C}}_C\}$ be a finite partition of the observations into groups, which we call clusters, and let  $\hat{C}_{i}=(\bm{1}(Z_i\in \hat{\mathcal{C}}_1),\dots,\bm{1}(Z_i\in \hat{\mathcal{C}}_{C}))'$ be the cluster indicators. We propose substituting $Z_i$ with $\hat{C}_{i},$ which has finite support. This then transforms the estimation procedure into a discrete controls case, which can be implemented exactly as described in the Supplementary Appendix Section \ref{sec:discretecontrols}. 

Explicitly, for each cluster $
\mathcal{C}_c,$ calculate
\begin{equation*}
    \hat{p}_{c,\bar{x}}=\hat{F}_X(\bar{x})^{-1}\cdot \frac{1}{n}\sum_{i=1}^n \bm{1}(X_i=\bar{x},Z_i\in \mathcal{C}_c), \text{ for }c=1,\dots,C.
\end{equation*}
Then, for those clusters with $\hat{p}_{c,\bar{x}}>0,$
estimate $\widehat{\text{AME}}_{\bar{x}}^+(\hat{\mathcal{C}}_c)$ separately using a new dataset composed only of observations within cluster $\mathcal{C}_c$ (i.e.  $i$ such that  $Z_i\in \mathcal{C}_c$). For this, follow the exact procedures described in the unconditional case. The average marginal treatment effect estimator is, then,
\begin{equation*}
    \widehat{\text{AME}}_{\bar{x}}^+=\sum_{c=1}^C \hat{p}_{c,\bar{x}} \cdot \widehat{\text{AME}}_{\bar{x}}^+(\hat{\mathcal{C}}_c).
\end{equation*}
As in the previous sections, it is not necessary to estimate $\text{AME}_{\bar{x}}^+(\hat{\mathcal{C}}_c)$  when $\hat{p}_{c,\bar{x}}=0.$

In general, if $\text{AME}_{\bar{x}}^+(z)$ is continuous in $z$, the ability of this estimator to approximate $\text{AME}_{\bar{x}}^+(z)$ depends on how much information about $Z_i$ is given by the cluster indicator vector $\hat{C}_{C}.$ Thus, it is desirable to choose a clustering method that minimizes the within-cluster variation in the values of $Z_i.$ All unsupervised clustering methods in the statistical learning literature could in principle be used (e.g. k-means, k-medoids, self-organizing maps, and spectral -- see \cite{hastie2009elements}). If feasible, we recommend using hierarchical clustering for its well-known stability.\footnote{Hierarchical clustering requires the choice of a linkage method and a dissimilarity measure. We recommend using Ward's linkage and the Gower measure for mixed continuous and discrete controls.}

The clustering strategy requires the choice of the number of clusters, which modulates the bias-variance trade-off in the estimation of $\text{AME}_{\bar{x}}^+(z)$. The more clusters are used, the more similar are the $Z_i$ within each cluster, and thus the smaller the bias and the larger the variance. Although there must exist an optimal number of clusters, there are as yet no established methods to aid with this decision. 

Nevertheless, note that we are not directly interested in $\text{AME}_{\bar{x}}^+(z)$ but rather in $\widehat{\text{AME}}_{\bar{x}}^+,$ which aggregates the information over all clusters. The trade-off is, in theory, much less important for $\widehat{\text{AME}}_{\bar{x}}^+,$ and thus one should err on the side of having a larger number of clusters, with an eye for instability which could be created by pathological clusters (e.g. clusters with bunching but with too few observations near the bunching point, or clusters where every observation is bunched).

\phantomsection
\section{Supplemental Empirical Plots}
\label{ap:emp_supp}

\begin{figure}[H]
\protect\caption{Birth Weight QQ Plots Conditional on Maternal Smoking\label{fig:qq_plots}}
\vspace{-.2in}
\begin{center}
\includegraphics[scale=0.58]{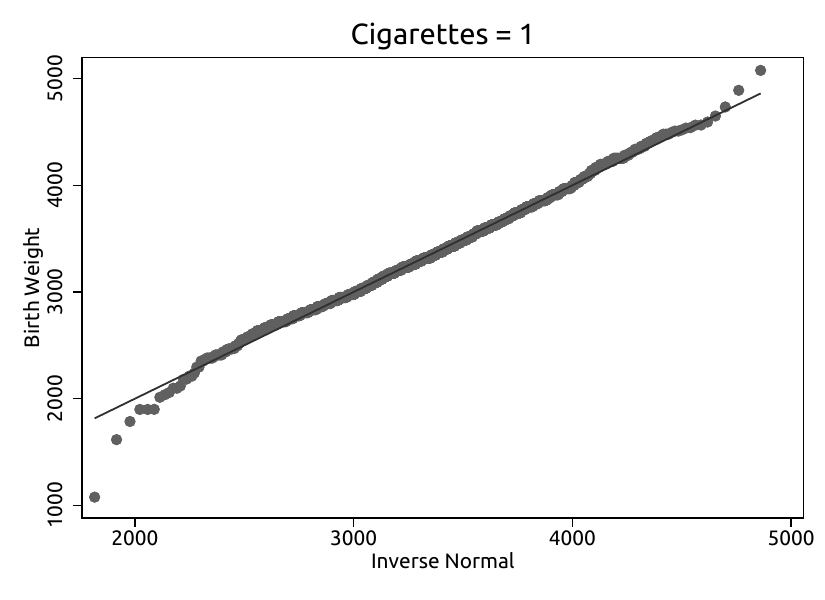}
\includegraphics[scale=0.58]{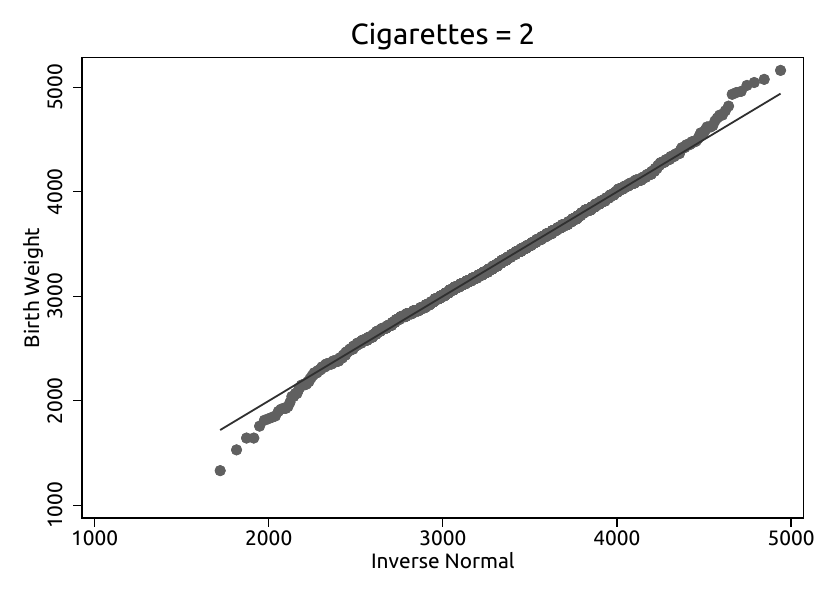}
\includegraphics[scale=0.58]{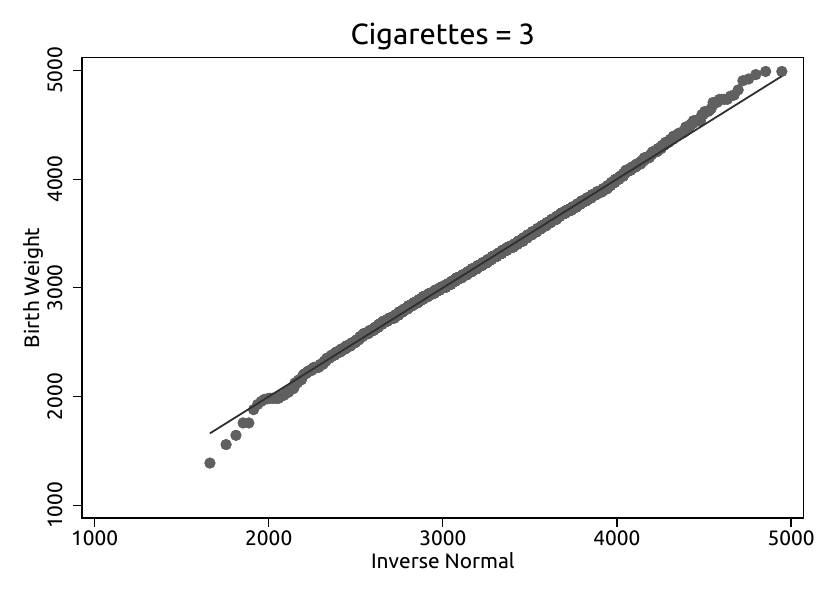}
\includegraphics[scale=0.58]{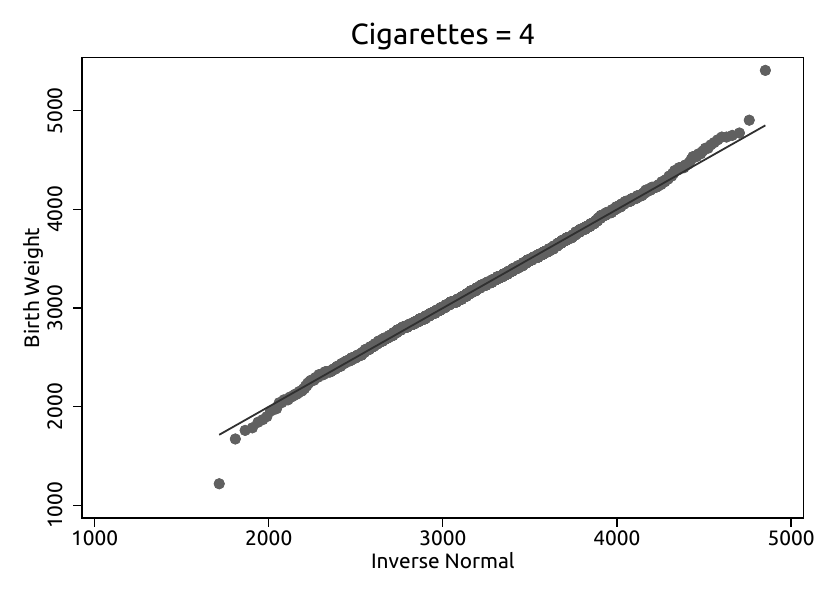}
\includegraphics[scale=0.58]{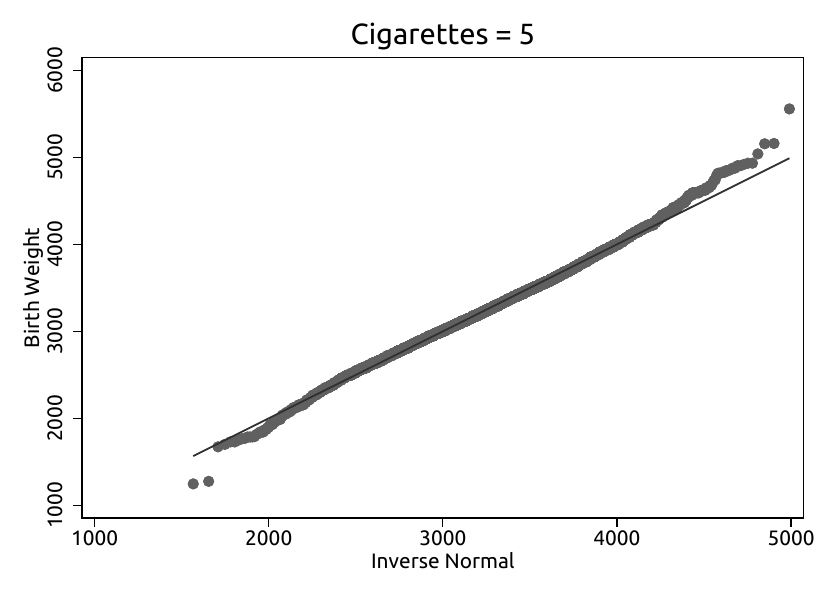}
\end{center}
\vspace{-0.2in}
\singlespace \footnotesize{Note: Each panel plots the estimated QQ plot for birth weight conditional on a different value of maternal cigarettes smoked per day. The closer the plot is to the 45-degree line, the closer to normal is the conditional birth weight distribution. Data taken from \cite{almond2005costs}.}
\end{figure}

\phantomsection
\section{Proofs for Appendix \texorpdfstring{\ref{sec:xstarappendix}}{B}} 
\label{sec:xstarproofs}

\subsection{Proof of Proposition \texorpdfstring{\ref{prop:transformationexistence}}{B.1}}
    We can write the optimization problem as
    $$X(\theta):= \text{argmax}_x\{V(x,W-px;\theta) \text{ subject to } x \ge 0, r\ge 0\}.$$
    Given the convexity of $V$, we have by the first order condition that $MRS(X(\theta),\theta) = 1$ if and only if there is an interior maximum $X(\theta) \in (0,W)$. In this case, differentiating with respect to $\theta$,
    $$ MRS_x(X(\theta),\theta)\cdot X'(\theta) = -MRS_{\theta}(X(\theta),\theta),$$
    where $X'(\theta)$ exists by the implicit function theorem. By assumption $MRS_{\theta}(x,\theta):=\frac{\partial}{\partial \theta} MRS(x,\theta) > 0$ for all $x \in (0,\bar{W}), \theta \in \Theta$. Since $MRS_x(x,\theta):=\frac{\partial}{\partial x} MRS(x,\theta) < 0$ for all $x \in (0,W), \theta \in \Theta$, we must then have that $X'(\theta)>0$ given that $X(\theta) \in (0,W)$: optimal choices are strictly increasing in $\theta$ for any $\theta$ such that $0< X(\theta) < W$. Let $\bar{\theta} = \lim_{x \downarrow 0} X^{-1}(x)$. If $\theta = \bar{\theta}$, then $X(\theta)=0$ by continuity and if $\theta < \bar{\theta}$, we must have $X(\theta(\rho_i)) \notin (0,W)$.
    
    Meanwhile, since $MRS_{\theta}(x,\theta)>0$, $MRS(x,\theta)$ is strictly increasing in $\theta$ for each $x \ge 0$, the function $\rho(\theta):=MRS(0,\theta)$ is also strictly increasing in $\theta$. Let $\theta(\rho)$ be the inverse function of $\rho(\theta)$, which is also strictly increasing and differentiable in $\rho$ (given that these properties hold for $\rho(\cdot)$). 
   Given the assumption that $P(X_i=W)=0$, we have, combining cases, that with probability one:
    $$X_i = \begin{cases}
        0 &\text{ if } \theta(\rho_i) \le \bar{\theta}\\
        X(\theta(\rho_i)) & \text{ if } \theta(\rho_i) > \bar{\theta}
    \end{cases} \;\;= \;\max\{X(\theta(\rho_i)), 0\}.$$
    The result of the Proposition then holds with the strictly increasing and differentiable function $h(\rho) = X(\theta(\rho))$.
       
\subsection{Proof of Proposition \ref{prop:transformation}}
    Rewriting Equation \eqref{eq:xbarmax} as $X_i = \max\{h(\rho_i),h(\bar{\rho})\}$ where $\bar{\rho} = h^{-1}(\bar{x})$, we have, given Theorem \ref{thm:AMEid}, that    \begin{equation}\label{eq:u'transformation}
    \widetilde{\text{AME}}_{\bar{\rho}}^+=\lim_{\rho\downarrow \bar{\rho}}\frac{d}{d\rho}\E[Y_i|\rho_i=\rho]-\frac{\text{sgn}\left(\E[Y_i|\rho_i=\bar{\rho}^+]-\E[Y_i|\rho_i=\bar{\rho}]\right) \cdot f_\rho(\bar{\rho}^+)}{F_\rho(\bar{\rho})\cdot \frac{1}{2\pi}\int \frac{\E[e^{\textbf{i}\xi Y_i}|\rho_i=\bar{\rho}]}{\E[e^{\textbf{i}\xi Y_i}|\rho_i=\bar{\rho}^+]}d\xi},
    \end{equation}
    where $\widetilde{\text{AME}}_{\bar{\rho}}^+:=\lim_{\rho\downarrow \bar{\rho}}\E[(\tilde{Y}_i(\rho)-\tilde{Y}_i(\bar{\rho}))/(\rho-\bar{\rho})|\rho_i=\rho]$, and $\tilde{Y}_i(\rho)$ indicate potential outcomes with respect to $\rho$: $\tilde{Y}_i(\rho) = Y_i(h(\rho))$ for any $\rho \ge \bar{\rho}$. Equivalently, $Y_i(x) = \tilde{Y}_i(h^{-1}(x))$ for any $x \ge \bar{x}$.

    First, notice that under the maintained assumptions
    \begin{align*}
    \widetilde{\text{AME}}_{\bar{\rho}}^+ &= \E[\tilde{Y}'(\bar{\rho})|\rho_i = \bar{\rho}] = \E\left[\left.\frac{d}{d\rho}\tilde{Y}(\bar{\rho})\right|\rho_i = \bar{\rho}\right] =  \E\left[\left.\left.\frac{d}{d\rho}Y_i(h(\rho))\right|_{\rho=\bar{\rho}}\right|X_i^* = \bar{x}\right]\\%\bar{\rho}] = \E\left[\left.\left.\frac{dx}{d\rho}\frac{d}{dx}\tilde{Y}(h^{-1}(x))\right|_{x=h(\bar{\rho})}\right|\rho_i = \bar{\rho}\right]\\
    &=h'(\bar{\rho}) \cdot \E[Y_i'(\bar{x})|X^*_i = \bar{x}]=h'(\bar{\rho}) \cdot \text{AME}_{\bar{x}}^+,
    \end{align*}
using the chain rule and defining $X_i^* = h(\rho_i)$. Meanwhile, all of the terms on the right-hand side of \eqref{eq:u'transformation} remain invariant under the substitution of $\rho_i$ by $X_i^*$ and $\bar{\rho}$ by $\bar{x}$, except that, similarly,
$$\lim_{\rho\downarrow \bar{\rho}}\frac{d}{d\rho}\E[Y_i|\rho_i=\rho]=\lim_{\rho\downarrow \bar{\rho}}\left.\frac{dx}{d\rho}\frac{d}{dx}\E[Y_i|X_i=x]\right|_{x=\bar{\rho}} = h'(\bar{\rho}) \cdot \lim_{\rho\downarrow \bar{\rho}}\frac{d}{dx}\E[Y_i|X_i=x], $$
while also $f_\rho(\bar{\rho}^+) = h'(\bar{\rho}) \cdot f_X(\bar{x}^+)$. Equation \eqref{eq:u'transformation} thus implies $\eqref{eq:u'}$, provided that $h'(\bar{\rho}) \ne 0$.

\subsection{Proof of Proposition \ref{prop:additive}}
In one direction, observe that if $Q_{Y(0)|X=x}(e) = s(x)+\phi(e)$, then $Y_i(0) - s(X_i) = Q_{Y(0)|X_i}(E_i)-s(X_i)=\phi(E_i)+s(X_i)-s(X_i) = \phi(E_i)$, and $E_i \indep X_i$. To see the other direction, define $s(x):=\mathbbm{E}[Y_i(0)|X_i=x]$ and $\mathcal{E}_i := Y_i(0)-s(X_i)$. Then, by definition, we can write $Q_{Y(0)|X_i}(E_i)=Y_i(0)=s(X_i)+\mathcal{E}_i = s(X_i)+Q_{\mathcal{E}_i|X_i}(\mathcal{E}_i)$. The condition $(Y_i(0)-\mathbbm{E}[Y_i(0)|X_i]) \indep X_i$ implies that $\mathcal{E}_i \indep X_i$, so we can replace $Q_{\mathcal{E}_i|X_i}(\mathcal{E}_i)$ with the unconditional $Q_{\mathcal{E}_i}(\mathcal{E}_i)$. Note that for any $e \in [0,1]$, $Q_{Y(0)|X=x}(e) = s(x) + Q_{\mathcal{E}_i}(e),$ since $Q_{\mathcal{E}_i}(\cdot)$ is strictly increasing. Now define $\phi(e) = Q_{\mathcal{E}_i}(e)$.

\subsection{Proof of Proposition \ref{prop:indep}}
Given \eqref{eq:starquantile}, we have that, with probability one under $P^*,$ $$Y^*_i(0)-\mathbbm{E}^*[Y^*_i(0)|X_i^*] = \{s^*(X_i^*)+\phi(E_i^*)\} - s^*(X_i^*)=\phi(E_i^*).$$
Then, since $E_i^* \sim \text{Unif}[0,1]$ under $P^*,$ and $Y^*_i(0)-\mathbbm{E}^*[Y^*_i(0)|X_i^*]$ is a measurable function of $E_i^*,$ it follows that $\{Y^*_i(0)-\mathbbm{E}^*[Y^*_i(0)|X_i^*]\} \indep X_i^*$ under $P^*$.

\subsection{Proof of Proposition \ref{prop:bunchers}}
	\begin{align*}
	P^*(Y^*_i(0)\le y|X_i^* \le 0) &= \frac{1}{P^*(X_i^* \le 0)}\cdot \int_{-\infty}^0 dx \cdot h(x)\cdot P^*(Y^*_i(0)\le y|X_i^* =x)\\
	&= \frac{1}{P^*(X_i^* \le 0)}\cdot \int_{-\infty}^0 dx \cdot h(x)\cdot P^*(s^*(x)+\phi(E_i^*)\le y|X_i^* =x)\\
	&= \frac{1}{P^*(X_i^* \le 0)}\cdot \int_{-\infty}^0 dx \cdot h(x)\cdot P^*(E_i^*\le \phi^{-1}(y-s^*(x)))\\
	&= \frac{1}{P^*(X_i^* \le 0)}\cdot \int_{-\infty}^0 dx \cdot h(x)\cdot P^*(E_i^*\le \phi^{-1}(y-s^*(x)))\\
	&=\frac{1}{P(X_i = 0)}\cdot \int_{-\infty}^0 dx \cdot h(x)\cdot  \phi^{-1}(y-s^*(x)),
	%P(Y_i \le y|X_i=0)
\end{align*}
using that $E_i^* \indep X_i^*,$ and that $E_i^* \sim \text{Unif}[0,1]$.

By the same steps, and using that $h(x)=f_X(0)$ and $Q_{Y(0)|X=x}(e)=Q^*_{Y^*(0)|X^*=x}(e)$ for all $x>0$,
\begin{align*}
	P(Y_i(0)\le y|X_i > 0) &= \frac{1}{P(X_i > 0)}\cdot \int_0^{\infty} dx \cdot f_X(x)\cdot P(Y_i(0)\le y|X_i =x)\\
	&= \frac{1}{1-P(X_i = 0)}\cdot\int_0^{\infty} dx \cdot h(x)\cdot P^*(s^*(x)+\phi(E_i^*)\le y|X_i^* =x)\\
	&=\frac{1}{1-P(X_i = 0)}\cdot \int_0^{\infty} dx \cdot h(x)\cdot  \phi^{-1}(y-s^*(x)).
	%P(Y_i \le y|X_i=0)
\end{align*}
Using both of these results, we have, by part (iii) of Assumption~\ref{as:analyticxstar}, that
\[
\begin{aligned}
F_{Y(0)}(y) 
&= \int_{-\infty}^0 \phi^{-1}\left(y - s^*(x)\right) \cdot h(x)\, dx 
+ \int_{0}^\infty \phi^{-1}\left(y - s^*(x)\right) \cdot h(x)\, dx \\
&= P(X_i = 0) \cdot P^*(Y^*_i(0) \le y \mid X_i^* \le 0) 
+ (1 - P(X_i = 0)) \cdot P(Y_i(0) \le y \mid X_i > 0).
\end{aligned}
\]
Meanwhile, by the law of iterated expectations, we also have that
$$F_{Y(0)}(y) = P(X_i = 0) \cdot P(Y_i(0)\le y|X_i = 0)+ (1-P(X_i = 0))\cdot P(Y_i(0)\le y|X_i > 0),$$
and therefore $P(Y_i\le y|X_i = 0) = P(Y_i(0)\le y|X_i = 0) = P^*(Y^*_i(0)\le y|X_i^* \le 0)$ for all $y$.

\end{document}